\documentclass[a4paper,onecolumn,11pt,unpublished]{quantumarticle}
\pdfoutput=1
\usepackage[utf8]{inputenc}
\usepackage[english]{babel}
\usepackage[T1]{fontenc}

\usepackage{tikz}
\usepackage{lipsum}

\usepackage{amsmath,amssymb,amsthm,mathtools}
\usepackage{braket}
\usepackage{bm}
\usepackage{microtype}
\usepackage{enumitem}
\usepackage{thm-restate}
\usepackage[numbers,sort&compress]{natbib}
\usepackage[colorlinks=true,linkcolor=blue,citecolor=blue,urlcolor=blue]{hyperref}

\newtheorem{theorem}{Theorem}[section]
\newtheorem{lemma}[theorem]{Lemma}

\newtheorem{definition}[theorem]{Definition}

\newcommand{\C}{\mathbb{C}}
\newcommand{\E}{\mathbf{E}}
\newcommand{\Prb}{\mathbf{Pr}}
\newcommand{\cH}{\mathcal{H}}

\newcommand{\cS}{\mathcal{S}}
\newcommand{\cX}{\mathcal{X}}
\newcommand{\cY}{\mathcal{Y}}
\newcommand{\cC}{\mathcal{C}}
\newcommand{\cD}{\mathcal{D}}
\newcommand{\cB}{\mathcal{B}}
\newcommand{\cM}{\mathcal{M}}
\newcommand{\cP}{\mathcal{P}}
\newcommand{\Tr}{\operatorname{Tr}}

\newcommand{\err}{\operatorname{err}}
\newcommand{\opt}{\operatorname{opt}}
\newcommand{\vc}{\operatorname{VCdim}}
\newcommand{\Dtr}{D_{\mathrm{tr}}}
\newcommand{\id}{\operatorname{id}}

\newcommand{\Gr}{\operatorname{Gr}}
\newcommand{\ketbra}[2]{\ket{#1}\!\bra{#2}}
\newcommand{\proj}[1]{\ketbra{#1}{#1}}
\newcommand{\HC}{\mathsf{HC}}

\newcommand{\norm}[1]{\left\lVert#1\right\rVert}

\begin{document}

\title{Advantage of Sample Complexity in Quantum PAC Learning Requires Inverse Access to State-Preparation Unitaries}

\author{Natsuto Isogai}
\email{natsutoisogai@g.ecc.u-tokyo.ac.jp}
\affiliation{Department of Physics, Graduate School of Science, The University of Tokyo, Tokyo, Japan}
\orcid{0009-0002-5496-5871}
\author{Satoshi Yoshida}
\email{satoshiyoshida.phys@gmail.com}
\orcid{0000-0002-0521-5209}
\affiliation{Department of Physics, Graduate School of Science, The University of Tokyo, Tokyo, Japan}
\author{Mio Murao}
\email{murao@phys.s.u-tokyo.ac.jp}
\affiliation{Department of Physics, Graduate School of Science, The University of Tokyo, Tokyo, Japan}
\orcid{0000-0001-7861-1774}

\maketitle

\begin{abstract}
Whether quantum computation can reduce the amount of data sampled from an unknown probability distribution required to learn a prediction rule is a fundamental question in quantum machine learning.
One framework for studying this question is quantum probably approximately correct (PAC) learning, which uses quantum data in the form of a quantum state whose squared amplitudes encode the unknown probability distribution from which classical learning data are sampled.
When the learner receives only copies of such quantum data, the optimal worst-case sample complexity asymptotically matches that of classical PAC learning.
In contrast, access to both a state-preparation unitary for this state and its inverse can improve the dependence of the query complexity on the accuracy parameter in realizable learning.
However, it has remained unclear whether forward-only access to the state-preparation unitary allows such an improvement.

In this work, taking the worst case over compatible state-preparation unitaries and their finite ambient dimensions, we show that the optimal query complexities of realizable and agnostic learning with forward-only access are, respectively,
$$
    \Theta\!\left(\frac{d+\log(1/\delta)}{\varepsilon}\right),
    \qquad
    \Theta\!\left(\frac{d+\log(1/\delta)}{\varepsilon^2}\right),
$$
where $d$ is the Vapnik-Chervonenkis (VC) dimension of the concept class, $\varepsilon$ is the accuracy parameter, and $\delta$ is the failure probability.
These bounds match the optimal sample complexities achievable with classical data or quantum data copies.
To prove the lower bounds, we establish a reduction that uses $q$ copies of the prepared state to approximate the Haar-averaged output state of any $q$-query forward-only algorithm.

These results show that forward-only access cannot provide an asymptotic query-complexity advantage over learning from classical data or quantum data copies in this worst-case setting, and establish the essential role of inverse access in the known improvement in the realizable setting.
Our reduction also provides a new framework for analyzing the limitations of forward state-preparation access using lower bounds for state copies.
\end{abstract}

\setcounter{tocdepth}{2}
\tableofcontents

\section{Introduction}\label{sec:introduction}
Machine learning is used in a wide range of technologies that support our everyday lives.
Quantum machine learning (QML) studies whether quantum computation can reduce the resources required for learning~\cite{wittek2014,schuld2021machine}.
Identifying such advantages requires comparing the resources classical and quantum methods need to accomplish the same task under matched conditions.
In particular, the outcome of this comparison can depend on the data and operations available to the learner.

A standard framework for such comparisons is \textit{probably approximately correct (PAC) learning}~\cite{valiant1984theory,kearns1990computational,kearns1994introduction}.
PAC learning asks whether a learner can construct a prediction rule from data drawn from an unknown distribution and achieve a prescribed accuracy with high probability.
The explicit conditions for successful learning in PAC learning enable comparisons between models that allow different operations.
Although reducing computation time through quantum computation is an important research direction~\cite{servedio2004,gyurik2023establishinglearningseparationsclassical,gyurik2024exponentialseparationsclassicalquantum,Yamasaki2026,Molteni2026}, determining the amount of data required for learning is also a central problem in learning theory.
Sample complexity quantifies how much data is needed to learn to a desired accuracy and confidence, and measures the statistical difficulty of a learning task.
Understanding the required amount of data is important both for determining how much data to provide and for designing and evaluating computationally efficient learning algorithms.

To investigate this question, researchers have studied models in which the distribution of classical learning data is encoded in quantum data given by a quantum state that is supplied to the learner~\cite{10.1145/225298.225312,servedio2004,Atici2005,Atici2007,arunachalam2018optimal,arunachalam2017survey}.
Measuring the quantum states yields corresponding classical data, while a quantum learner can also process the states jointly before measurement.
This raised the possibility that quantum data could enable learning from fewer samples than classical data.
However, in a model where the learner only receives identical copies of a quantum state, there is no quantum advantage in the worst-case sample complexity of general PAC learning tasks in the asymptotic regime~\cite{arunachalam2018optimal}.

Receiving quantum data differs from having multiple accesses to a state-preparation unitary.
Salmon et al.\ showed that, when both this unitary and its inverse are available, amplitude amplification enables PAC learning with a query complexity lower than the sample complexity required~\cite{salmon2024provable}.
In particular, the dependence of the required number of queries on the desired learning accuracy admits a quadratic improvement, up to logarithmic factors.

Access to a state-preparation unitary, however, does not necessarily provide access to its inverse.
Methods for implementing the inverse of an unknown unitary using forward access are known, but they require additional $\Theta(d^2)$ overheads for $d$-dimensional unitary queries~\cite{PhysRevLett.131.120602,chen2025quantumalgorithmreversingunknown,drp2-rzzw,10.1145/3798129.3800847}.
Therefore, this does not imply that forward-only access offers the same learning advantages as access that includes the inverse.
Tang and Wright established separations between forward-only access and access that also permits inverse queries for tasks involving amplitude amplification and amplitude estimation~\cite{tang2026inverse}.
However, PAC learners need not use amplitude amplification, so these results do not directly yield learning lower bounds.
On the other hand, Chen showed that forward-only access to a state-preparation unitary achieves a query complexity with a better dependence on the target accuracy than the sample complexity achievable using only copies of the state for quantum state estimation~\cite{chen2025inversefreequantumstateestimation}.
This demonstrates that forward access has power beyond state copies, but the required number of queries also depends strongly on the dimension of the state space.
These results alone therefore do not determine the power of forward access in quantum PAC learning.

In this work, we require the learner to succeed for every state-preparation unitary that generates the same quantum data.
Under this requirement, we study whether quantum PAC learning with forward-only access offers an advantage over classical data, and clarify its relationship to learning with inverse access.

\subsection{Main Results}
We establish optimal query lower bounds for both realizable and agnostic quantum PAC learning with forward-only access to state-preparation unitaries.
To state our results, we first describe the learning procedure and the access models being compared (see Section~\ref{sec:pac-learning} for more detailed definitions).

In PAC learning, a learner repeatedly receives data consisting of an input $x\in\cX$ and a label $y\in\{0,1\}$, and outputs a hypothesis $h:\cX\to\{0,1\}$ for predicting the labels of future inputs.
Such an input--label pair $(x,y)$ is called an \textit{example} in PAC learning.
The PAC learning model has two main settings: \textit{realizable learning} and \textit{agnostic learning}.
In realizable learning, the labels are assumed to follow an unknown concept $c$ from a known concept class $\cC$, so that $y=c(x)$.
On each query, the classical example oracle $\mathrm{PEX}(c,\mathcal{D})$ independently draws $x$ from the unknown input distribution $D$ and returns the pair $(x,c(x))$.
The learner aims to output a hypothesis whose error probability $\Prb_{x\sim D}[h(x)\neq c(x)]$ on a fresh input from the same distribution is at most $\varepsilon$.
We require a hypothesis satisfying this condition to be obtained with probability at least $1-\delta$, over the received data and the randomness of the learner.
Thus, $\varepsilon$ is the allowed prediction error, and $\delta$ bounds the probability that learning fails to meet this requirement.

In agnostic learning, the labels need not follow a concept.
Instead, the input--label pairs follow an arbitrary unknown joint distribution $\cD$.
The classical example oracle $\mathrm{AEX}(\cD)$ returns independent pairs $(x,y)$ drawn from this distribution.
The goal is to output, with probability at least $1-\delta$, a hypothesis whose error exceeds that of the best concept in $\cC$ by at most $\varepsilon$.
In both learning settings, we allow \textit{improper learners}, whose output hypotheses need not belong to $\cC$.

The corresponding quantum example oracles $\mathrm{QPEX}(c,\mathcal{D})$ and $\mathrm{QAEX}(\cD)$ return fresh copies of the respective states
\begin{equation}
\begin{aligned}
    \ket{\psi_{\mathcal{D},c}}
    =\sum_{x\in\cX}\sqrt{\mathcal{D}(x)}\ket{x,c(x)},
    \quad
    \ket{\psi_{\cD}}
    =\sum_{x\in\cX}\sum_{y\in\{0,1\}}
      \sqrt{\cD(x,y)}\ket{x,y}
\end{aligned}
\end{equation}
on each query.
In these states, the probability of each pair is encoded as the squared magnitude of its amplitude.
Measurement in the computational basis produces classical data with the same distribution as $\mathrm{PEX}$ or $\mathrm{AEX}$, while a quantum learner can also apply joint quantum operations to multiple copies before measurement.

Writing these quantum examples as $\ket{\psi}$, we also consider a model that allows queries to a state-preparation unitary $U$ satisfying $U\ket{0}=\ket{\psi}$ for a known initial state.
In this model, the learner can apply the same unknown $U$ to arbitrary query states during learning.
We call access to $U$ alone forward access, and access to both $U$ and its inverse $U^\dagger$ forward-and-inverse access.
Table~\ref{tab:pac-oracle-bounds} denotes these models by $U$ alone and both $U$ and $U^\dagger$ queries, respectively.
We count queries to classical example oracles, copies of quantum examples, or queries to state-preparation unitaries and their inverses as resources, and do not include computation time.

Our lower bounds for forward access require the learner to succeed for every state-preparation unitary satisfying $U\ket{0}=\ket{\psi}$.
Since the unitary preparing a given quantum example is not unique, this requirement imposes no particular structure on the action of $U$ on inputs orthogonal to $\ket{0}$.
We call the dimension $N$ of the Hilbert space on which $U$ acts the \textit{ambient dimension}.
The ambient dimension is not fixed, and we take the worst-case query complexity over permitted ambient dimensions and compatible state-preparation unitaries.

Let $d=\vc(\cC)$ be the Vapnik-Chervonenkis (VC) dimension of the concept class.
This is the largest size of a set of points on which $\cC$ realizes every binary labeling (see Definition~\ref{def:vc-dimension} for details).
In the setting above, for $0<\delta\leq1/4$, we show that the number of queries $q$ used by a learner with forward-only access must satisfy
\begin{equation}\label{eq:lower-bounds}
\begin{aligned}
    q&=\Omega\!\left(\frac{d+\log(1/\delta)}{\varepsilon}\right)
      &&\text{(realizable learning)},\\
    q&=\Omega\!\left(\frac{d+\log(1/\delta)}{\varepsilon^2}\right)
      &&\text{(agnostic learning)}.
\end{aligned}
\end{equation}

The first bound is established in Theorem~\ref{thm:realizable-main} for $d\geq2$ and $0<\varepsilon\leq1/32$, and the second in Theorem~\ref{thm:agnostic-main} for $d\geq1$ and $0<\varepsilon\leq1/16$.
Upper bounds with the same dependence can be achieved by preparing and measuring $U\ket{0}$ on each query and applying a classical learning algorithm with optimal sample complexity to the resulting classical examples~\cite{hanneke2016optimal,a0b266f6-398c-301c-9bea-dc3d41c6ccf7}.
These lower bounds are therefore optimal up to constant factors.

\begin{table*}[t]
\centering
\caption{PAC learning complexity in each access model.
Let $d=\vc(\cC)$.
For the sample-access and forward-only bounds, the realizable setting assumes $d \geq 2$, $0<\varepsilon\leq1/32$, and $0<\delta\leq1/4$, and the agnostic setting assumes $d\geq1$, $0<\varepsilon\leq1/16$, and $0<\delta\leq1/4$.
The forward-only lower bounds take the worst case over permitted ambient dimensions and compatible unitary completions.
Bold entries indicate the new lower bounds established in this paper.
}
\label{tab:pac-oracle-bounds}
\normalsize
\setlength{\tabcolsep}{4pt}
\renewcommand{\arraystretch}{1.5}
\begin{tabular*}{\textwidth}{@{\extracolsep{\fill}}llccc@{}}
\hline
\textbf{Setting} & \textbf{Oracle} & \textbf{Upper bound} & \textbf{Lower bound} & \textbf{Reference} \\[0.4ex]
\hline
\noalign{\vskip 0.8ex}
Realizable
  & $\mathrm{PEX}$ & $O\!\left(\dfrac{d+\log(1/\delta)}{\varepsilon}\right)$ & $\Omega\!\left(\dfrac{d+\log(1/\delta)}{\varepsilon}\right)$
  & \cite{10.1145/76359.76371,EHRENFEUCHT1989247,hanneke2016optimal} \\[1.4ex]
  & $\mathrm{QPEX}$ & $O\!\left(\dfrac{d+\log(1/\delta)}{\varepsilon}\right)$ & $\Omega\!\left(\dfrac{d+\log(1/\delta)}{\varepsilon}\right)$
  & \cite{arunachalam2018optimal} \\[1.4ex]
  & $U$ & $O\!\left(\dfrac{d+\log(1/\delta)}{\varepsilon}\right)$ & {\boldmath$\Omega\!\left(\dfrac{d+\log(1/\delta)}{\varepsilon}\right)$}
  & Theorem~\ref{thm:realizable-main} \\[1.4ex]
  & $U,\ U^\dagger$
  & $O\!\left(\dfrac{(d+\log(1/\delta))\log^9(1/\varepsilon)}{\sqrt{\varepsilon}}\right)$
  & $\Omega(d/\sqrt{\varepsilon})$
  & \cite{salmon2024provable} \\[1.4ex]
\noalign{\vskip 0.4ex}
\hline
\noalign{\vskip 0.8ex}
Agnostic
  & $\mathrm{AEX}$ & $O\!\left(\dfrac{d+\log(1/\delta)}{\varepsilon^2}\right)$ & $\Omega\!\left(\dfrac{d+\log(1/\delta)}{\varepsilon^2}\right)$
  & \cite{a0b266f6-398c-301c-9bea-dc3d41c6ccf7,vapnik74theory} \\[1.4ex]
  & $\mathrm{QAEX}$ & $O\!\left(\dfrac{d+\log(1/\delta)}{\varepsilon^2}\right)$ & $\Omega\!\left(\dfrac{d+\log(1/\delta)}{\varepsilon^2}\right)$
  & \cite{arunachalam2018optimal} \\[1.4ex]
  & $U$ & $O\!\left(\dfrac{d+\log(1/\delta)}{\varepsilon^2}\right)$ & {\boldmath$\Omega\!\left(\dfrac{d+\log(1/\delta)}{\varepsilon^2}\right)$}
  & Theorem~\ref{thm:agnostic-main} \\[1.4ex]
\noalign{\vskip 0.4ex}
\hline
\end{tabular*}
\end{table*}

Table~\ref{tab:pac-oracle-bounds} summarizes the known results and our results.
The two lower bounds in bold are the new results of this paper.
It is known that the optimal sample complexities with classical examples and with copies of quantum examples asymptotically coincide~\cite{arunachalam2018optimal}.
We show that, for both realizable and agnostic learning, forward access to a unitary does not improve this dependence in the worst-case setting.
In realizable learning, however, forward-and-inverse access is known to improve the accuracy dependence from $1/\varepsilon$ to $1/\sqrt{\varepsilon}$, up to logarithmic factors~\cite{salmon2024provable}.
Our results show that forward-only access to a state-preparation unitary is insufficient for this improvement, and that inverse access plays an important role in this case.

\subsection{Method}

We show the lower bounds in Eq.~\eqref{eq:lower-bounds} by transferring known lower bounds for copies of quantum examples to lower bounds for forward access to state-preparation unitaries.
Namely, our approach is to transform a learner using $q$ forward queries into a learner receiving $q$ copies of the same quantum example, while guaranteeing nearly the same success probability.
Such a transformation would make lower bounds on the number of copies required by the learner receiving quantum examples applicable to the number of queries used by the original learner.

The obstacle is that unitary access permits more operations than simply receiving quantum examples.
The relation $U\ket{0}=\ket{\psi}$ allows one copy to be obtained with a single query, but the learner can also apply $U$ to arbitrary inputs other than $\ket{0}$.
Using information about $\ket{\psi}$ contained in this action might enable learning that cannot be achieved with copies alone, for example,  an orthogonal state $\ket{\psi^\perp}$ of unknown $\ket{\psi}$.
Therefore, replacing each query with one copy is not sufficient to apply the known copy lower bounds.

In related work, Tang, Wright, and Zhandry showed how to simulate state-preparation access using copies of the prepared state~\cite{tang2026conjugatequerieshelp}.
With sufficient ancillary space, the output of an algorithm making a total of $q$ queries to a state-preparation unitary and its inverse, averaged over a particular distribution of unitaries, can be reproduced using $O(q^2/\eta)$ copies to within trace distance $\eta$.
However, even at constant approximation error, the copy cost of this reduction grows quadratically with the query count, so its guarantee does not directly transfer known copy lower bounds to query lower bounds of the same order.
We therefore restrict attention to forward-only access and construct a reduction that approximately reproduces $q$ queries using the same number of copies, namely $q$.
Specifically, we choose the unitary Haar-randomly subject to $U\ket{0}=\ket{\psi}$ and consider the output averaged over this distribution.

\paragraph{VC dimension and ambient dimension.}
The key idea is to distinguish the VC dimension from the dimension of the Hilbert space on which the unitary acts.
The optimal sample complexity of PAC learning with classical examples or copies of quantum examples is determined by the VC dimension $d$, the accuracy $\varepsilon$, and the failure probability $\delta$.
In contrast, the resources required for quantum state estimation depend on the dimension of the state space~\cite{BRU1999249,vanapeldoorn2022quantumtomographyusingstatepreparation,chen2025inversefreequantumstateestimation}.
Motivated by this distinction, we increase the ambient dimension $N$ without changing the learning problem, making the remaining action of the unitary harder to exploit.

Specifically, the quantum examples used in the known copy lower bounds lie in a common known subspace $\cS$ spanned by at most $2d$ basis states corresponding to $d$ inputs and binary labels.
Writing its dimension as $r=\dim\cS$, we can append basis states with zero amplitude in the quantum examples and make the ambient dimension $N$ on which the unitary acts arbitrarily large while keeping $r\leq2d$ fixed.
This enlargement changes neither the example distribution nor the concept class, and preserves the known copy lower bounds.
We construct a hard setting for forward access by maintaining $U\ket{0}=\ket{\psi}$ and choosing the remaining action Haar-randomly on the enlarged space.
To show that this choice suppresses any additional advantage, we must account for arbitrary learning strategies.
To this end, we prove the following reduction theorem, which approximates the output state of the learning algorithm itself using copies.

\paragraph{Reduction theorem and its application to PAC lower bounds.}
\begin{theorem}[Informal version of Theorem~\ref{thm:copy-reduction}]\label{thm:main}
Let the unknown unit vector $\ket{\psi}$ belong to a known $r$-dimensional subspace $\cS$ orthogonal to the initial state $\ket{0}$.
On an $N$-dimensional space, choose the remaining action of a unitary $U$ satisfying $U\ket{0}=\ket{\psi}$ Haar-randomly.
For any algorithm making $q$ forward queries to this same $U$, there exists a quantum channel that approximates its Haar-averaged output using $q$ copies of $\ket{\psi}$.
When $q(r-1)<N-1$, the approximation error in trace distance is at most
\begin{equation}
    \frac{q(r-1)}{N-1},
\end{equation}
where this channel is independent of the unknown $\ket{\psi}$, and the same channel works for every $\ket{\psi}$ in $\cS$.
\end{theorem}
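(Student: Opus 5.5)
The plan is to write the Haar-random completion explicitly, build a $\psi$-independent simulator that consumes one copy per query, and bound the gap by a hybrid argument that costs $(r-1)/(N-1)$ per query.

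\textbf{Parametrization.} Let $\cH_N$ denote the $N$-dimensional space and $\{\ket{0}\}^\perp$ its subspace orthogonal to $\ket{0}$. Every unitary with $U\ket{0}=\ket{\psi}$ can be written as
\begin{equation}
U=\ketbra{\psi}{0}+V,
\end{equation}
where $V$ is an isometry from $\{\ket{0}\}^\perp$ onto $\{\ket{\psi}\}^\perp$. The Haar-random completion is the case where $V$ is uniformly distributed among such isometries.

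Fix a unit vector $\ket{s_0}\in\cS$ and a decomposition $\{\ket{\psi}\}^\perp=(\cS\ominus\ket{\psi})\oplus\cS^\perp$. The subspace $\cS\ominus\ket{\psi}$ has dimension $r-1$ and depends on $\psi$. The subspace $\cS^\perp$ has dimension $N-r$, is known, and does not depend on $\psi$.

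For a Haar-random $V$, the preimage $V^{-1}(\cS\ominus\ket{\psi})$ is a uniformly random $(r-1)$-dimensional subspace of the $(N-1)$-dimensional space $\{\ket{0}\}^\perp$. Consequently, for any fixed vector in $\{\ket{0}\}^\perp$, the expected squared weight that $V$ sends into the $\psi$-dependent part is exactly $(r-1)/(N-1)$. This is the only place where the unknown $\psi$ enters beyond the image of $\ket{0}$, and it is the source of the error term.

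\textbf{Simulator.} I would compare $U$ with a surrogate $\tilde U$ that routes $\ket{0}$ to $\ket{\psi}$ using a copy and treats all other inputs in a $\psi$-independent way. The simulator draws a Haar-random isometry $\tilde V$ from $\{\ket{0}\}^\perp$ into a fixed $(N-1)$-dimensional subspace built from $\cS^\perp$, padded as needed. When the algorithm makes its $j$-th query, the simulator applies a fixed, known unitary $T$ on the query register $A$ and the $j$-th copy register $B_j$, which holds $\ket{\psi}$. This $T$ acts on the $\ket{0}_A$ branch as a swap, $\ket{0}_A\ket{x}_{B_j}\mapsto\ket{x}_A\ket{0}_{B_j}$ for $x\in\cS$, and on the $\{\ket{0}\}^\perp_A$ branch it applies $\tilde V$ and leaves $B_j$ untouched.

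For this to be a single unitary, the two branches must land in orthogonal subspaces of $A\otimes B_j$. In the $\ket{0}$ branch, $B_j$ is returned to $\ket{0}$. In the other branch, $B_j$ still holds $\ket{\psi}\in\cS$, which is orthogonal to $\ket{0}$ because $\cS\perp\ket{0}$. The two branches are therefore orthogonal, and both the simulator and $T$ are independent of $\psi$, as the theorem requires. The leftover copy registers are traced out at the end.

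I then need a coupling between $V$ and $\tilde V$: restricted to the subspace $V^{-1}(\cS^\perp)$, the two maps agree, up to the bookkeeping of the $B_j$ register. The resulting per-query operator discrepancy, measured on any state, should have squared norm at most the weight sent into $\cS\ominus\ket{\psi}$, which averages to $(r-1)/(N-1)$.

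\textbf{Hybrid argument and expected obstacle.} Define hybrids $H_k$ in which the first $k$ queries use the simulator and the remaining $q-k$ use the true $U$. Adjacent hybrids differ in a single query, and I aim to show that their Haar-averaged output states are within $(r-1)/(N-1)$ in trace distance. Summing over $k$ then gives $q(r-1)/(N-1)$. The condition $q(r-1)<N-1$ only ensures the bound is nontrivial; it may also be needed to make the random subspaces available to the coupling.

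I expect the main obstacle to be achieving a linear rather than square-root per-query bound while keeping the same $V$ consistent across all $q$ queries. A naive norm bound gives $\sqrt{(r-1)/(N-1)}$ per query. To obtain the linear rate, I would work directly with Haar-averaged channels instead of individual unitaries. Conditioned on the earlier queries, the average over $V$ should act on the relevant register as a mixture of a ``good'' event, in which the image avoids $\cS\ominus\ket{\psi}$ and the simulator is exact, and a ``bad'' event of probability $(r-1)/(N-1)$. This is a gentle-measurement or convexity argument in which the trace distance is bounded by the bad-event probability. If consistency across queries breaks this decomposition, my fallback is to use the Schur--Weyl/Weingarten form of $\E[U^{\otimes q}\otimes\bar U^{\otimes q}]$ and bound the difference by the probability that any of the $q$ images hits the $(r-1)$-dimensional $\psi$-dependent subspace. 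A union bound on that event gives the same $q(r-1)/(N-1)$.
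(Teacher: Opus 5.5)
\textbf{Gap 1: the simulator is wrong even when $r=1$.} Your simulator, as specified, does not reproduce the Haar average in the case $r=1$, where the theorem requires exactness.

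On the $j$-th query you always swap with the $j$-th copy register $B_j$. So at the end each $B_j$ holds $\ket{0}$ or $\ket{\psi}$, recording whether query $j$ took the $\ket{0}$ branch. These records are mutually orthogonal because $\braket{0|\psi}=0$. Tracing them out therefore decoheres histories that use $\ketbra{\psi}{0}$ the same number of times at different positions, and the true output keeps that coherence.

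A concrete example:
\begin{itemize}
\item Start from $\ket{+}_c\ket{0}_A\ket{w}_{A'}$ with $\ket{w}\in\ket{0}^{\perp}$.
\item Apply a $c$-controlled swap of $A,A'$, query $A$, swap $A,A'$, query $A$, and undo the controlled swap.
\item For every completion $U$ the output is $\ket{+}_c\otimes U\ket{w}_A\otimes\ket{\psi}_{A'}$.
\item In your simulator the two $c$-branches are tagged by $\ket{0}\ket{\psi}$ versus $\ket{\psi}\ket{0}$ on $B_1B_2$. So $c$ ends maximally mixed, and the trace distance is at least $1/2$ for all $N$ and $r$.
\end{itemize}
Your very first hybrid step $H_0\to H_1$ already fails in this way.

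The paper avoids this with a counter register that always swaps in the \emph{next unused} copy. The copy registers then hold $\ket{0}^{\otimes a}\ket{\psi}^{\otimes(q-a)}$, which depends only on the number $a$ of main-block uses. This matches the true average only because the Haar completion is invariant under $V\mapsto e^{i\theta}V$, which makes the true output incoherent across different $a$. Your proposal never invokes this symmetry.

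\textbf{Gap 2: the adaptive linear-rate bound.} This is the obstacle you flag yourself, and neither of your suggested remedies works as stated. Because the same $V$ is reused, the distribution of $V^{-1}(\cS\ominus\ket{\psi})$ conditioned on earlier adaptive, superposed queries is not uniform. So ``a bad event of probability $(r-1)/(N-1)$ at each step'' is unjustified, and a union bound over ``the $q$ images'' has no meaning for adaptively chosen superposition inputs.

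The paper needs no hybrid. Its route is as follows.
\begin{itemize}
\item It writes $U=\ketbra{\psi}{0}+e^{i\theta}G_\Pi+e^{i\phi}B_{\Pi,\psi}$ with independent phases (Lemma~\ref{lem:haar-block-decomposition}). Averaging then gives an incoherent sum $\rho_0+\rho_B$, so the error is the weight $\Tr\bar\rho_B$ itself: linear, with no square root.
\item That weight equals the probability that a monitored experiment ever records a hit, where at each query the experiment applies the two-outcome instrument with effects $I-\Pi_0$ and $\Pi_0$.
\item With $n=N-1$ and $k=r-1$, the first-hit probability at step $m+1$ is bounded by $\frac{k}{n-mk}\,s_m$. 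This uses three ingredients:
  \begin{itemize}
  \item $\E_G[J(G_\Pi)^{\otimes g}]$ is a fixed, $\Pi$-independent completely positive image of $((I-\Pi)^\top)^{\otimes g}$, obtained via Schur--Weyl duality.
  \item A positive-pairing lemma lets the test $\Pi_0$ be averaged against this image.
  \item The operator union bound $\E[\Pi_0^\top\otimes((I-\Pi)^\top)^{\otimes g}]\preceq\frac{k}{n-gk}\,I_0\otimes\E[((I-\Pi)^\top)^{\otimes g}]$.
  \end{itemize}
\end{itemize}
The per-step bound obtained this way is not $(r-1)/(N-1)$. It is only the telescoping product $\prod_{m=0}^{q-1}\frac{(N-1)-(m+1)(r-1)}{(N-1)-m(r-1)}$ that yields exactly $q(r-1)/(N-1)$.
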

The theorem applies to arbitrary forward-query algorithms without assuming any particular task or success conditions.
The trace-distance guarantee also bounds the difference in the probability of any success event between the original average output and the simulator output by the same error.
Thus, by fixing $q$ and $r$ and increasing the ambient dimension $N$, we can reproduce the Haar-averaged output obtained with forward access using the same number of copies to arbitrarily small error.

For the application to PAC learning, for any candidate query count $q$, we choose $N$ so that $q(r-1)/(N-1)\leq\delta/2$.
If a learner in this dimension has failure probability at most $\delta$ for every state-preparation unitary, its success guarantee is preserved when averaged over Haar-random unitaries.
The resulting $q$-copy learner obtained from the reduction theorem has failure probability at most $\delta+\delta/2$.
Since the simulator is independent of the unknown $\ket{\psi}$, a single copy learner satisfies this guarantee for all quantum examples used in the lower bound.
For $\delta\leq1/4$, this constant-factor change in the failure probability does not change the asymptotic dependence of the known copy lower bounds.
We can therefore transfer the copy lower bounds for realizable and agnostic learning to the respective forward-query lower bounds.

\paragraph{Proof overview of the reduction theorem.}
The key to the proof is to decompose the action of the state-preparation unitary into a part that is reproduced using copies and
a part whose contribution is bounded as an error.
Specifically, we decompose the action of $U$ into the action mapping the initial state $\ket{0}$ to $\ket{\psi}$,
the action whose output lies outside the known subspace $\cS$,
and the action with output in the subspace of $\cS$ orthogonal to $\ket{\psi}$.
Under Haar-random sampling, the second action can be chosen independently of the unknown $\ket{\psi}$.
Using this property and copies of $\ket{\psi}$,
we exactly reproduce the part of the Haar-averaged output to which only the first and second actions contribute.
The decomposition is detailed in Section~\ref{sec:explainprooftechnique}.

The input subspace corresponding to the remaining third action
is an $r-1$-dimensional subspace randomly embedded in an $N-1$-dimensional space.
Thus, fixing $r$ and increasing $N$
decreases the fraction of the dimension occupied by this subspace.
However, the algorithm can choose each input based on information obtained from earlier queries,
so this dimension ratio alone does not bound the overall error.
We show that, even allowing such adaptive operations,
the weight of the output to which the third action contributes is at most $q(r-1)/(N-1)$,
establishing the error guarantee of the reduction theorem.

\subsection{Related Works}

\paragraph{Forward-only learning at small ambient dimension.}
Chen~\cite{chen2025inversefreequantumstateestimation} presented a method for estimating the state $U\ket{0}$ using only forward queries to an unknown state-preparation unitary $U$ acting on an $N$-dimensional space.
For a fixed constant failure probability, the number of queries required to estimate the state to trace-distance error $\eta$ is
\begin{equation}
    O\!\left(
        \min\left\{
            \frac{N^{3/2}}{\eta},
            \frac{N}{\eta^2}
        \right\}
    \right).
\end{equation}
This result shows that even forward-only access can achieve better accuracy dependence than access to state copies alone, although its query complexity depends explicitly on the ambient dimension $N$.

Applying this state-estimation result to the PAC learning setting considered in this work yields forward-query upper bounds of
\begin{equation}
    O\!\left(\frac{N^{3/2}}{\sqrt{\varepsilon}}\right),
    \qquad
    O\!\left(\frac{N^{3/2}}{\varepsilon}\right)
\end{equation}
for realizable and agnostic learning, respectively, with constant failure probability.
In realizable learning, selecting the label with the larger amplitude magnitude in the estimated state yields a classification error of $O(\eta^2)$ from a trace-distance error of $\eta$.
In agnostic learning, choosing the hypothesis that minimizes the classification error with respect to the estimated state bounds the excess error relative to the true optimal error by $O(\eta)$.

In particular, when the ambient dimension on which the unitary actually acts satisfies $N=O(d)$, these bounds become
\begin{equation}
    O\!\left(\frac{d^{3/2}}{\sqrt{\varepsilon}}\right),
    \qquad
    O\!\left(\frac{d^{3/2}}{\varepsilon}\right).
\end{equation}
Thus, although the dependence on $d$ worsens, the dependence on $\varepsilon$ can improve over that achievable with classical samples or state copies.
This clarifies the importance of taking the worst case over ambient dimensions in our lower bounds.
In other words, our results do not assert that forward-only access offers no advantage for every fixed low-dimensional state-preparation unitary.

\paragraph{Agnostic learning with forward-and-inverse access.}
For realizable learning, it is shown that access to a state-preparation unitary $U$ and its inverse $U^\dagger$ enables PAC learning with
\begin{equation}
    O\!\left(
        \frac{d+\log(1/\delta)}{\sqrt{\varepsilon}}
        \log^9\frac{1}{\varepsilon}
    \right)
\end{equation}
queries~\cite{salmon2024provable}.
This improves the $1/\varepsilon$ accuracy dependence associated with classical or quantum examples to $1/\sqrt{\varepsilon}$, up to logarithmic factors, while maintaining a polynomial dependence on the VC dimension $d$.

For agnostic PAC learning, however, it remains unclear whether access
to both $U$ and $U^\dagger$ can generally improve the $1/\varepsilon^2$ accuracy dependence of classical samples and state copies while retaining a favorable dependence on the VC dimension $d$.
Our work shows that the $1/\varepsilon^2$ dependence is optimal in the worst case for forward-only access, but does not determine the optimal query complexity of agnostic learning when inverse access is also available.

State tomography also yields improved accuracy dependence when inverse access is available~\cite{vanapeldoorn2022quantumtomographyusingstatepreparation,tang2026controlledunitarieshelpful}~\footnote{The controlled queries used in the tomography algorithm can be removed without increasing the query count, since the task of estimation is invariant under the global phase of the unitary~\cite{tang2026controlledunitarieshelpful}.}.
The tomography results imply that an $N$-dimensional pure state can be estimated to trace-distance error $\eta$ using $\widetilde O(N/\eta)$ queries to its state-preparation unitary and its inverse, with constant failure probability.

Using the same conversion from state-estimation error to classification error as above gives query upper bounds
\begin{equation}
    \widetilde O\!\left(\frac{N}{\sqrt{\varepsilon}}\right),
    \qquad
    \widetilde O\!\left(\frac{N}{\varepsilon}\right)
\end{equation}
for realizable and agnostic learning, respectively.
In particular, when $N=O(d)$, these become
\begin{equation}
    \widetilde O\!\left(\frac{d}{\sqrt{\varepsilon}}\right),
    \qquad
    \widetilde O\!\left(\frac{d}{\varepsilon}\right).
\end{equation}
Here, $\widetilde O$ suppresses polylogarithmic factors.
These bounds improve the accuracy dependence while retaining a nearly linear dependence on $d$ in this low-dimensional setting.
However, the condition $N=O(d)$ is an additional restriction, i.e., it does not follow from the VC dimension alone.
Thus, these tomography bounds do not by themselves determine the optimal query complexity of general agnostic learning in terms of $d$, $\varepsilon$, and $\delta$ alone.

Our work shows that the $1/\varepsilon^2$ accuracy dependence is optimal for agnostic learning with forward-only access.
However, it does not establish a separation between forward-only access and forward-and-inverse access in agnostic learning.

\paragraph{Simulating state-preparation queries using copies.}
Tang, Wright, and Zhandry~\cite{tang2026conjugatequerieshelp} showed how to simulate queries to a state-preparation unitary using copies of the prepared state.
Specifically, under assumptions on the dimension of the ancillary system, the output of an algorithm making a total of $q$ queries to a state-preparation unitary $U$ and its inverse $U^\dagger$, averaged over a certain distribution of unitaries, can be reproduced using $O(q^2/\eta)$ copies to within trace distance $\eta$.

In contrast, our work considers forward-only access to a pure-state preparation unitary and the output averaged over Haar-random unitary completions satisfying $U\ket{0}=\ket{\psi}$.
When $\ket{\psi}$ belongs to a known $r$-dimensional subspace, the average output of any algorithm making $q$ forward queries can be approximated using the same number of copies, namely $q$, with trace-distance error at most
\begin{equation}
    \frac{q(r-1)}{N-1}.
\end{equation}
Thus, the two results allow different oracle access, and our result is not simply an improvement over the prior result in the same setting.
On the other hand, the $O(q^2/\eta)$ copy cost of the prior simulation does not directly transfer known copy lower bounds to query lower bounds of the same order.
By restricting attention to forward-only access, we construct a reduction that uses as many copies as queries and controls the approximation error through the ambient dimension $N$.
This allows known copy lower bounds to be transferred to forward-query lower bounds without asymptotic loss.

\subsection{Conclusion}
We have characterized the query complexity of quantum PAC learning with forward-only access to state-preparation unitaries.
Requiring success for every state-preparation unitary that generates the same quantum example, we established query lower bounds for realizable and agnostic learning in a worst-case setting with unrestricted ambient dimension.
These lower bounds match, up to constant factors, the upper bounds achieved by measuring quantum examples and applying an optimal classical learner to the resulting classical examples.
Our lower bounds are therefore asymptotically optimal and determine the query complexity in both learning settings.
The proof constructs a reduction that approximates the Haar-averaged output of a forward-query algorithm using state copies, and applies known lower bounds for copies of quantum examples.

These results clarify the learning power of forward-only access in relation to the model that supplies copies of quantum examples and the model that permits both a state-preparation unitary and its inverse.
In particular, together with the known improvement in accuracy dependence from forward-and-inverse access in realizable learning, they show that the ability to apply a state-preparation unitary to arbitrary inputs alone does not achieve the same improvement in our worst-case setting, and that inverse access is crucial to achieve the quadratic improvement.

Furthermore, the reduction theorem itself does not assume PAC learning and is applicable to any tasks with access to an unknown state-preparation unitary beyond the PAC learning.
The benefit of this reduction is that it allows the limitations of algorithms using forward access to be assessed through the limitations on information obtainable from state copies.
Unitary access permits arbitrary superpositions of inputs and adaptive operations based on information from previous queries, making its power difficult to analyze directly.
In settings where the approximation error can be made sufficiently small, our reduction allows known lower bounds for state copies to be used in place of a task-specific analysis of these operations.
Our results therefore provide a tool to investigate whether forward access to state-preparation unitaries can reduce the number of queries needed for tasks with known copy lower bounds.

\paragraph{Organization.}
This paper is organized as follows.
Section~\ref{sec:preliminaries} introduces quantum information notation and the access models for PAC learning.
Section~\ref{sec:reduction-overview} proves Theorem~\ref{thm:copy-reduction}.
Section~\ref{sec:pac-consequences} applies this reduction to realizable and agnostic learning to establish the lower bounds in Table~\ref{tab:pac-oracle-bounds}.
Some technical proofs are deferred to the appendices.

\section{Preliminaries}\label{sec:preliminaries}
In this section, we specify the quantum information notation and learning models used in this paper and review relevant known results.
First, Section~\ref{sec:basics} introduces basic notation for quantum states and quantum operations and defines the trace distance, which measures the closeness of states.
In Section~\ref{sec:pac-learning}, starting from the classical PAC learning model, we introduce a model that receives copies of quantum examples and a model with access to a state-preparation unitary.
For each model, we define realizable and agnostic learning and specify the operations available to the learner and the success conditions it must satisfy.
We then define the Vapnik-Chervonenkis (VC) dimension, an important parameter for measuring sample complexity, and review the known sample and query complexities in each model.

\subsection{Basic notation and distances}\label{sec:basics}
For an integer $m\geq1$, let $[m]=\{1,\ldots,m\}$.
We write $|A|$ for the cardinality of a set $A$.
$\E_{x\sim\mathcal D}[f(x)]$ denotes the expectation of the function $f(x)$ of a random variable $x$ distributed according to the probability distribution $\mathcal D$.
$\Prb_{x\sim\mathcal D}[E(x)]$ denotes the probability that a random variable $x$ distributed according to the probability distribution $\mathcal D$ satisfies the event $E(x)$.

All Hilbert spaces are finite-dimensional complex Hilbert spaces.
For a subspace $\cS\subseteq\cH$, we write $P_{\cS}$ for the orthogonal projector onto $\cS$.
For a unit vector $\ket{v}$, we write $P_v=\proj{v}$ and $\ket{v}^{\perp}\coloneq \{\ket{w}\in\cH:\braket{v|w}=0\}$.
We fix a computational basis for each Hilbert space on which transposition or entrywise complex conjugation is used.
For an ordinary linear operator $A$, the symbols $A^\top$, $A^\ast$, and $A^\dagger=(A^\ast)^\top$ denote its transpose, entrywise complex conjugate, and adjoint, respectively, in the fixed basis.

Let $\mathcal{I}$ and $\mathcal{O}$ be finite-dimensional Hilbert spaces representing the input and output spaces of a quantum operation.
We denote the space of linear operators from $\mathcal{I}$ to $\mathcal{O}$ by $\mathsf{L}(\mathcal{I},\mathcal{O})$ and abbreviate $\mathsf{L}(\mathcal{I},\mathcal{I})$ as $\mathsf{L}(\mathcal{I})$.
A quantum state on $\mathcal{I}$ is an operator $\rho \in \mathsf{L}(\mathcal{I})$ satisfying
\begin{equation}
\rho\succeq0, \qquad \Tr(\rho)=1.
\end{equation}
An operator $\rho\succeq0$ satisfying $\Tr(\rho)\leq1$ is called a subnormalized state.

A linear map $\Phi:\mathsf{L}(\mathcal{I})\to\mathsf{L}(\mathcal{O})$ is completely positive if, for every finite-dimensional Hilbert space $\mathcal{R}$, the map $\id_{\mathcal{R}}\otimes\Phi$ maps positive semidefinite operators to positive semidefinite operators.
A map is trace-nonincreasing if, for every positive semidefinite operator $X\in\mathsf{L}(\mathcal{I})$, it satisfies $\Tr[\Phi(X)]\leq\Tr[X]$.
A map for which this inequality always holds with equality is called trace-preserving, and a completely positive and trace-preserving map is called a quantum channel.
The trace distance between density operators $\rho$ and $\sigma$ is defined by
\begin{equation}
    \Dtr(\rho,\sigma)=\frac{1}{2}\norm{\rho-\sigma}_1,
\end{equation}
where $\norm{X}_1=\Tr\sqrt{X^\dagger X}$ is the 1-norm.

\subsection{Probably Approximately Correct (PAC) learning models}\label{sec:pac-learning}
In this section, following~\cite{arunachalam2017survey,arunachalam2018optimal,salmon2024provable}, we introduce the definitions of classical Probably Approximately Correct (PAC) learning and quantum PAC learning with quantum examples.
We introduce an access model for a state-preparation unitary that generates quantum examples.
We further extend these models to agnostic learning.
We review known results for these access models.

\subsubsection{Classical PAC model}
\paragraph{Classical realizable learning.}
Let $\cX$ be a finite input space and $\cY=\{0,1\}$ the label space.
$\{0,1\}^\cX = \{f:\cX\to\{0,1\}\}$ is the set of functions from $\cX$ to $\{0,1\}$.
A subset $\cC\subseteq\{0,1\}^{\cX}$ is called a concept class, and a function $c \in \cC$ is called a concept.
For an unknown concept $c\in\cC$ and an unknown distribution $\mathcal D$ over the input space, we define the random example oracle $\mathrm{PEX}(c,\mathcal{D})$.
When queried, this oracle returns an input--output pair $(x,c(x))$ with $x\sim\mathcal{D}$.
The pair $(x,c(x))$ is also called an example.

In the PAC model, a learner $\mathcal A$ is given access to $\mathrm{PEX}(c,\mathcal{D})$ for an unknown concept $c\in\cC$ and an unknown distribution $\mathcal D$, and aims to output a hypothesis $h:\cX\to\cY$ that approximates the concept $c$.
Although the original definition of PAC learning requires efficient computation~\cite{valiant1984theory,kearns1990computational, kearns1994introduction}, this paper considers the required number of examples or oracle queries rather than computation time.
In this paper, we follow~\cite{arunachalam2017survey,arunachalam2018optimal} in defining the success conditions for learning algorithms.
\begin{definition}[Realizable $(\varepsilon,\delta)$-PAC learner]\label{def:realizable-pac-learner}
Let $0<\varepsilon,\delta<1$. Define the error of a hypothesis $h:\cX\to\cY$ with respect to a concept $c\in\cC$ and an input distribution $\mathcal D$ by
\begin{equation}
    \err_{\mathcal D,c}(h)=\Prb_{x\sim\mathcal D}[h(x)\neq c(x)]
\end{equation}
We say that an algorithm $\mathcal A$ is an $(\varepsilon,\delta)$-PAC learner for the concept class $\cC$ if, for every $c\in\cC$ and every distribution $\mathcal D$ over $\cX$, given access to $\mathrm{PEX}(c,\mathcal D)$, the algorithm $\mathcal A$ outputs a hypothesis $h$ such that
\begin{equation}
    \Prb\!\left[\err_{\mathcal D,c}(h)\leq\varepsilon\right]\geq1-\delta,
\end{equation}
where the outer probability is taken over the examples returned by the oracle and the internal randomness of the learning algorithm.
\end{definition}
Note that this definition does not require the learner to output a hypothesis $h\in\cC$ that exactly represents a concept in $\cC$.
The case in which $h\in\cC$ is required is called proper, and the case in which it is not required is called improper.

\paragraph{Classical agnostic learning.}
In standard (realizable) PAC learning, the labels of the examples received are assumed to follow some concept $c\in\cC$, i.e., $y=c(x)$ holds.
In agnostic learning, by contrast, the labels are not assumed to follow a concept.
For an unknown probability distribution $\cD$ over $\mathcal{X} \times \mathcal{Y}$, the learner receives access to $\mathrm{AEX}(\cD)$ rather than $\mathrm{PEX}(c,\mathcal{D})$.
When queried, $\mathrm{AEX}(\cD)$ returns a labeled example with $(x,y)\sim\cD$.

In this setting, there need not exist a concept or hypothesis that perfectly represents the unknown distribution $\cD$.
The error of a hypothesis $h$ is defined by
\begin{equation}
    \err_{\cD}(h)=\Prb_{(x,y)\sim\cD}[h(x)\neq y],
\end{equation}
and the minimum error achievable within $\cC$ is defined by
\begin{equation}
    \opt_{\cC}(\cD)=\inf_{c\in\cC}\err_{\cD}(c).
\end{equation}
The learner aims to learn a hypothesis $h$ from the given examples whose error is close to the minimum error achievable within the given concept class $\cC$.
Specifically, the learning condition is defined as follows~\cite{10.1145/130385.130424}.

\begin{definition}[$(\varepsilon,\delta)$-agnostic PAC learner]\label{def:agnostic-pac-learner}
Let $0<\varepsilon,\delta<1$.
We say that an algorithm $\mathcal A$ is an $(\varepsilon,\delta)$-agnostic PAC learner for the concept class $\cC$ if, for every distribution $\cD$ over $\cX\times\cY$, given access to $\mathrm{AEX}(\cD)$, the algorithm $\mathcal A$ outputs a hypothesis $h:\cX\to\cY$ such that
\begin{equation}
    \Prb\!\left[\err_{\cD}(h)\leq\opt_{\cC}(\cD)+\varepsilon\right]\geq1-\delta
\end{equation}
where the outer probability is taken over the examples returned by the oracle and the internal randomness of the learning algorithm.
\end{definition}

In this paper, we also allow improper outputs and consider the required number of examples or oracle queries rather than computation time.
A realizable pair $(c,\mathcal D)$ induces the joint distribution $\mathcal{D}_c$
\begin{equation}
    \mathcal{D}_c(x,y) = \begin{cases}
        \mathcal{D}(x), &\quad y = c(x),\\
        0, &\quad y \neq c(x).
    \end{cases}
\end{equation}
In this case, $\opt_{\cC}(\cD_c)=0$, and the realizable and agnostic success conditions coincide for this induced distribution.

\subsubsection{Quantum PAC model with coherent quantum examples}
\paragraph{Quantum example oracles.}
We consider a model in which the learner receives quantum states encoding example probabilities as squared amplitudes in place of classical examples.
We represent the example labels by a known orthonormal family $\{\ket{x,y}:x\in\cX,\ y\in\cY\}$.
In the realizable setting, the quantum example oracle $\mathrm{QPEX}(c,\mathcal{D})$ for an input distribution $\mathcal D$ and a concept $c\in\cC$ generates one copy of the following state from the known initial state $\ket{0}$~\cite{arunachalam2018optimal}.
\begin{equation}\label{eq:realizable-qexample}
    \mathrm{QPEX}(c,\mathcal{D}):\ket{0}\longmapsto\ket{\psi_{\mathcal{D},c}}
    =\sum_{x\in\cX}\sqrt{\mathcal{D}(x)}\ket{x,c(x)}.
\end{equation}
Similarly, for a joint distribution $\cD$ over $\cX\times\cY$, the quantum agnostic example oracle $\mathrm{QAEX}(\cD)$ generates the following state~\cite{arunachalam2018optimal}.
\begin{equation}\label{eq:agnostic-qexample}
    \mathrm{QAEX}(\cD):\ket{0}\longmapsto\ket{\psi_{\cD}}
    =\sum_{x\in\cX}\sum_{y\in\cY}\sqrt{\cD(x,y)}\ket{x,y}.
\end{equation}
In either case, each query provides one fresh copy of a quantum example. Measuring in the computational basis yields one classical example returned by $\mathrm{PEX}(c,\mathcal{D})$ or $\mathrm{AEX}(\cD)$, respectively. The oracles $\mathrm{QPEX}$ and $\mathrm{QAEX}$ specify only the state generated from the initial state and do not provide the learner with the unitary's action on other inputs.

\paragraph{Learners using quantum examples.}
The learner may retain the copies received and perform joint quantum operations on the copies, intermediate measurements, and classical control before outputting a classical hypothesis; improper outputs are also allowed.
The success conditions are the same as in the classical model; only the input resources differ.

\begin{definition}[Realizable $(\varepsilon,\delta)$-PAC learner using quantum examples]\label{def:qpex-pac-learner}
Let $0<\varepsilon,\delta<1$.
We say that a quantum algorithm $\mathcal A$ is a realizable $(\varepsilon,\delta)$-PAC learner for the concept class $\cC$ using quantum examples if, for every $c\in\cC$ and every distribution $\mathcal D$ over $\cX$, given access to $\mathrm{QPEX}(c,\mathcal{D})$, the algorithm $\mathcal A$ outputs a hypothesis $h:\cX\to\cY$ such that
\begin{equation}
    \Prb\!\left[\err_{\mathcal{D},c}(h)\leq\varepsilon\right]\geq1-\delta
\end{equation}
where the outer probability is taken over the measurement outcomes and the internal randomness of the learning algorithm.
\end{definition}
\begin{definition}[$(\varepsilon,\delta)$-agnostic PAC learner using quantum examples]\label{def:qaex-pac-learner}
Let $0<\varepsilon,\delta<1$.
We say that a quantum algorithm $\mathcal A$ is an $(\varepsilon,\delta)$-agnostic PAC learner for the concept class $\cC$ using quantum examples if, for every distribution $\cD$ over $\cX\times\cY$, given access to $\mathrm{QAEX}(\cD)$, the algorithm $\mathcal A$ outputs a hypothesis $h:\cX\to\cY$ such that
\begin{equation}
    \Prb\!\left[\err_{\cD}(h)\leq\opt_{\cC}(\cD)+\varepsilon\right]\geq1-\delta
\end{equation}
where the outer probability is taken over the measurement outcomes and the internal randomness of the learning algorithm.
\end{definition}

\subsubsection{Quantum PAC model with state-preparation-unitary access}
The model in which the learner only receives quantum examples differs from the model in which it can query a unitary operation that generates those examples.
In the latter model, a unitary $U$ that maps a known initial state $\ket{0}$ to a quantum example $\ket{\psi}$ can be applied to arbitrary query states during learning.
A unitary $U$ generating the same $\ket{\psi}$ is not unique, and its action on inputs other than $\ket{0}$ is not uniquely determined by the quantum example alone.
\begin{definition}[Access to a state-preparation unitary]\label{def:state-preparation-access}
For a quantum example $\ket{\psi}$, suppose that the learner can repeatedly query the same unknown unitary $U$ satisfying $U\ket{0}=\ket{\psi}$ for a known initial state.
Access to only $U$ is called forward-only access, whereas access to both $U$ and $U^\dagger$ is called forward-and-inverse access.
Each application is counted as one query.
\end{definition}
In the forward-only model, each oracle query applies $U$
unconditionally.
Access to $U^\dagger$, $U^\top$, $U^\ast$, and controlled-$U$ is not supplied as an additional primitive. 
Arbitrary known channels, ancillary systems, intermediate measurements, and classical control are allowed between oracle queries. Operations synthesized using forward queries are allowed, with all such queries included in the query count.

For every permitted ambient dimension $N$, a forward-only learner must satisfy the success condition in Definition~\ref{def:realizable-pac-learner} for every realizable instance $(c,D)$ and every unitary satisfying $U\ket{0}=\ket{\psi_{D,c}}$.
In the agnostic setting, it must satisfy the success condition in Definition~\ref{def:agnostic-pac-learner} for every joint distribution $\cD$ and every unitary satisfying $U\ket{0}=\ket{\psi_{\cD}}$.

With forward-and-inverse access, $U^\dagger$ can be used to implement a reflection about the quantum example.
\begin{equation}
    I-2 \proj{\psi}=U\bigl(I-2\ketbra{0}{0}\bigr)U^\dagger.
\end{equation}
This reflection can be used in amplitude amplification.
Salmon, Strelchuk, and Gur treat this forward-and-inverse access as a state-preparation oracle in realizable PAC learning and show an advantage over classical access~\cite{salmon2024provable} (see also Section~\ref{sec:samplecomplexity summerize}).

\subsubsection{Sample complexity and the number of state-preparation queries}\label{sec:samplecomplexity summerize}
In the classical example model, sample complexity is measured by the worst-case number of queries to $\mathrm{PEX}$ or $\mathrm{AEX}$; in the quantum example model, it is measured by the number of copies received from $\mathrm{QPEX}$ or $\mathrm{QAEX}$.
In the state-preparation-unitary model, we count queries to $U$ and, when available, $U^\dagger$.
Below, computation time is not included in the complexity.

For forward-only and forward-and-inverse PAC learning, the learner may depend on the known ambient dimension $N$ and the known encoding.
It may therefore be specified by a family of algorithms $(\mathcal A_N)_{N \in \mathbb{Z}_{\geq 2}}$.
Its worst-case query count $q$ is taken over all permitted ambient dimensions, all learning instances, all compatible unitary completions, and all branches of the algorithm.
Thus, a finite bound $q$ must hold uniformly in $N$, although the algorithm itself may vary with $N$.
The optimal query complexities for these models are obtained over learner families.

The PAC lower bounds in Section~\ref{sec:pac-consequences}
use this convention. 
By contrast, the reduction theorem in Section~\ref{sec:reduction-overview} retains an explicit dependence on the fixed ambient dimension $N$.

In the PAC model, sample complexity depends on the Vapnik--Chervonenkis (VC) dimension, a parameter determined by the concept class $\cC$.
Let $S=\{x_1,\ldots,x_m\}\subseteq\cX$ be a subset of the input space.
The restriction of a function $f: \mathcal{X} \to \{0,1\}$ to $S$ is denoted by $f|_S : S \to \{0,1\}$.
We also define the restriction of the concept class $\cC$ to $S$ by $\cC|_S=\{c|_{S}:c\in\cC\}$.
We say that $\cC$ shatters $S$ if $\cC|_S = \{0,1\}^S$.
Using this definition, we define the VC dimension as follows.

\begin{definition}[VC dimension]\label{def:vc-dimension}
Let $S=\{x_1,\ldots,x_m\}\subseteq\cX$ be a subset of the input space.
The VC dimension $\vc(\cC)$ of $\cC$ is the maximum cardinality of a set shattered by $\cC$, i.e.,
\begin{equation}
    \vc(\cC)=\max\{|S|:S\subseteq\cX,\ \cC|_S=\{0,1\}^S\}.
\end{equation}
\end{definition}

Unless otherwise specified, let $d=\vc(\cC)$ denote the VC dimension of $\cC$ below.
For realizable learning allowing improper hypotheses, the optimal sample complexities $q$ of a classical $(\varepsilon,\delta)$-PAC learner using $\mathrm{PEX}(c,\mathcal{D})$ and a quantum $(\varepsilon,\delta)$-PAC learner receiving copies of quantum examples from $\mathrm{QPEX}(c,\mathcal{D})$ are the same, i.e., for $d \geq 2$, $\varepsilon \in (0,1/32)$, and $\delta \in (0,1/4)$,
\begin{equation}
    q = \Theta\!\left(\frac{d+\log(1/\delta)}{\varepsilon}\right)
\end{equation}
as shown in~\cite{10.1145/76359.76371,hanneke2016optimal,EHRENFEUCHT1989247,arunachalam2018optimal}.
In agnostic learning as well, the optimal sample complexities of a classical $(\varepsilon,\delta)$-agnostic PAC learner using $\mathrm{AEX}(\cD)$ and a quantum $(\varepsilon,\delta)$-agnostic PAC learner receiving copies of quantum examples from $\mathrm{QAEX}(\cD)$ are also the same, i.e., for $d \geq 1$, $\varepsilon \in (0,1/32)$, and $\delta \in (0,1/4)$,
\begin{equation}
    q = \Theta\!\left(\frac{d+\log(1/\delta)}{\varepsilon^2}\right)
\end{equation}
as shown in~\cite{vapnik74theory,a0b266f6-398c-301c-9bea-dc3d41c6ccf7,10.1145/168304.168385,arunachalam2018optimal}.

For realizable learning with forward-and-inverse access, Salmon, Strelchuk, and Gur showed that learning is possible with
\begin{equation}
    O\!\left(\frac{d+\log(1/\delta)}{\sqrt{\varepsilon}}\log^9\frac1\varepsilon\right)
\end{equation}
queries to $U$ or $U^\dagger$~\cite{salmon2024provable}.
Compared with copies, this gives a quadratic improvement in the dependence on $\varepsilon$.
For the forward-only access model, however, it remained open whether such an improvement in sample complexity over classical examples $\mathrm{PEX},\mathrm{AEX}$ or quantum examples $\mathrm{QPEX},\mathrm{QAEX}$ could be achieved.

In Section~\ref{sec:pac-consequences}, we show that the forward-only access model offers no such advantage over the sample complexities of classical examples and quantum example copies.
Theorem~\ref{thm:realizable-main} and Theorem~\ref{thm:agnostic-main} of this paper transfer sample complexity lower bounds for quantum example copies to the forward-only access model through a reduction, showing that both realizable and agnostic learning have the same asymptotic complexity as learning from quantum example copies, even for improper learners.
Since measuring a quantum example yields a classical example, the classical upper bounds are also achievable in the quantum example model.
For the lower bounds, we use the following known result.

\begin{lemma}[Lemmas~11 and~12 and Theorems~23 and~25 of~\cite{arunachalam2018optimal}]\label{lem:qsample-copy-bound}
Let $d=\vc(\cC)$ be the VC dimension.
The notation $\Omega$ below denotes an asymptotic bound for each fixed $0<\delta<1/2$.
For any fixed $\delta_0<1/2$, the implicit constants can be chosen independently of $\delta$ throughout the range $0<\delta\leq\delta_0$.
\begin{enumerate}[label=(\roman*)]
    \item For $d\geq2$, $0<\varepsilon<1/20$, and $0<\delta<1/2$, the sample complexity $q$ of any $(\varepsilon,\delta)$-PAC learner with access to $\mathrm{QPEX}(c,\mathcal{D})$ as in Definition~\ref{def:qpex-pac-learner} (allowing improper outputs) satisfies
    \begin{equation}
        q=\Omega\!\left(\frac{d+\log(1/\delta)}{\varepsilon}\right).
    \end{equation}
    \item For $d\geq1$, $0<\varepsilon<1/10$, and $0<\delta<1/2$, the sample complexity $q$ of any $(\varepsilon,\delta)$-agnostic PAC learner with access to $\mathrm{QAEX}(\cD)$ as in Definition~\ref{def:qaex-pac-learner} (allowing improper outputs) satisfies
    \begin{equation}
        q=\Omega\!\left(\frac{d+\log(1/\delta)}{\varepsilon^2}\right).
    \end{equation}
\end{enumerate}
\end{lemma}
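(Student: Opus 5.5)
This is a result of Arunachalam and de Wolf, and I would reprove it in their framework. The plan is to reduce each learning lower bound to a quantum state-identification lower bound on an ensemble of pure product states $\ket{\psi}^{\otimes q}$. Each bound has two terms, proved separately: a $d$-term coming from many hypotheses, and a $\log(1/\delta)$-term coming from two hypotheses. Since $\delta\le\delta_0<1/2$, a learner with failure probability at most $\delta$ also succeeds with constant probability at least $1-\delta_0$. Hence the $d$-term can be proved at constant success probability, which makes the implicit constants uniform in $\delta$.

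\textbf{Realizable, $d$-term.} Fix a shattered set $\{x_0,\ldots,x_{k}\}$ with $k=d-1$. Put $\mathcal D(x_0)=1-4\varepsilon$ and $\mathcal D(x_i)=4\varepsilon/k$, and label $x_0$ by $0$. By Gilbert--Varshamov, take a code $E\subseteq\{0,1\}^k$ with $|E|=2^{\Omega(k)}$ and minimum distance at least $k/8$; each codeword $e$ determines a concept $c_e$ on $x_1,\ldots,x_k$. A hypothesis of error at most $\varepsilon$ disagrees with $c_e$ on fewer than $k/4$ of the points $x_i$, so it is closer to $e$ than to any other codeword. Rounding to the nearest codeword therefore identifies $e$ from $\ket{\psi_e}^{\otimes q}$ with probability at least $1-\delta_0$.

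To bound identification, I would use the bound relating the optimal success probability to that of the pretty-good measurement, $P_{\mathrm{opt}}\le\sqrt{P_{\mathrm{PGM}}}$. The PGM success probability is $P_{\mathrm{PGM}}=\frac{1}{|E|}\sum_e(\sqrt G)_{ee}^2$, where $G$ is the normalized Gram matrix with entries
\[
G_{ee'}=\tfrac{1}{|E|}\bigl(1-4\varepsilon\, d_H(e,e')/k\bigr)^q .
\]
To diagonalize $G$, I would work over the full cube $\{0,1\}^k$ rather than the code, so that $G$ depends only on Hamming distance and is diagonalized by characters. The estimate should show $\sqrt{G}$ has small diagonal unless $q=\Omega(k/\varepsilon)$. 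This is the main obstacle: the eigenvalues of the cube Gram matrix are Fourier coefficients of $f(w)=(1-4\varepsilon|w|/k)^q$. These must be bounded by expanding $f$ and using concentration of $|w|$, and then transferred to the code. If the code transfer is messy, I would instead use an information-theoretic argument (Holevo plus subadditivity per copy, bounding $\chi$ of one copy by $O(\varepsilon)$) combined with Fano's inequality.

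\textbf{Realizable, $\log(1/\delta)$-term.} Take $x_0,x_1$ shattered, with $\mathcal D(x_0)=1-\varepsilon'$ and $\mathcal D(x_1)=\varepsilon'$ for some $\varepsilon'>\varepsilon$ (e.g.\ $2\varepsilon$), and two concepts differing only at $x_1$. Any hypothesis of error at most $\varepsilon$ distinguishes the two concepts, so the learner distinguishes the two states. The overlap is $\langle\psi_0|\psi_1\rangle^q=(1-\varepsilon')^q$. By Helstrom, the error probability is at least $\tfrac12(1-\sqrt{1-(1-\varepsilon')^{2q}})\ge\tfrac14(1-\varepsilon')^{2q}$, and demanding that this be at most $\delta$ gives $q=\Omega(\log(1/\delta)/\varepsilon)$.

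\textbf{Agnostic.} The agnostic argument repeats both steps with noisy labels. On each shattered point the label is $c(x)$ with probability $(1+\Theta(\varepsilon))/2$; for the $d$-term I use $\mathcal D(x_i)=1/d$, and for the $\delta$-term a single point. Guessing the label wrong on a point costs $\Theta(\varepsilon)$ excess error per unit weight, so excess error at most $\varepsilon$ again forces recovery of a codeword, or of one bit. The per-copy infidelity now becomes $1-\langle\psi_c|\psi_{c'}\rangle=\Theta(\varepsilon^2\, d_H/d)$, because $\sqrt{(1+\varepsilon)/2}\sqrt{(1-\varepsilon)/2}+\ldots$ yields $\sqrt{1-\varepsilon^2}\approx 1-\varepsilon^2/2$. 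The same PGM and Helstrom estimates then give $q=\Omega((d+\log(1/\delta))/\varepsilon^2)$. The stated parameter ranges ($\varepsilon<1/20$, $\varepsilon<1/10$) are what keep the constructed distributions valid.
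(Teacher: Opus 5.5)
Your overall architecture (a code plus the pretty-good measurement for the $d$-term, and Helstrom for the $\log(1/\delta)$-term) is that of Arunachalam and de Wolf, which the paper only cites. As written, however, your $d$-term has two real gaps.

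First, the decoding step fails. With mass $4\varepsilon/k$ per point, error at most $\varepsilon$ allows up to $k/4$ disagreements. Minimum distance $k/8$ guarantees nearest-codeword decoding only below $k/16$, and Plotkin rules out distance above $k/2$ with exponentially many codewords. You need mass $c\varepsilon/k$ with $c>16$, e.g.\ $\mathcal D(x_i)=20\varepsilon/k$, which is where $\varepsilon<1/20$ comes from; your construction as stated would only need $\varepsilon<1/4$.

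Second, the central estimate is left as ``the main obstacle'', and neither route you offer closes it. The cube-to-code transfer does not work. The code's Gram matrix is a renormalized compression $V^\dagger AV$ of the cube's, and operator concavity of the square root only gives $\sqrt{V^\dagger AV}\succeq V^\dagger\sqrt{A}\,V$, which is the wrong direction for upper-bounding $P_{\mathrm{PGM}}$. The Holevo fallback is provably too weak, because the one-copy Holevo quantity is not $O(\varepsilon)$. For the full cube (or any code with pairwise independent uniform coordinates), the averaged one-copy state has eigenvalues $1-2\varepsilon$ and $2\varepsilon/k$ ($k$ times). So $\chi=\Theta(\varepsilon\log(k/\varepsilon))$, and Holevo, subadditivity and Fano give only $\Omega(k/(\varepsilon\log(k/\varepsilon)))$. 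This is exactly the logarithmic loss Arunachalam and de Wolf report for the information-theoretic approach; the agnostic analogue is $\Theta(\varepsilon^2\log(d/\varepsilon))$ per copy.

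The missing idea is to take the code \emph{linear}: $e=Mz$ with $z\in\{0,1\}^{k'}$ and $k'\geq k/4$ (random linear codes meet the GV bound). Then $\langle\psi_z|\psi_{z'}\rangle=1-\frac{c\varepsilon}{k}|M(z\oplus z')|$ depends only on $z\oplus z'$. Hence $G_{zz'}=2^{-k'}g(z\oplus z')$ with $g(w)=(1-\frac{c\varepsilon}{k}|Mw|)^q$ is a convolution on $\mathbb{Z}_2^{k'}$, diagonalized by its characters.
\begin{itemize}
\item Write $1-\frac{c\varepsilon}{k}|Mw|=(1-p)+\frac{p}{k}\sum_i(-1)^{M_i\cdot w}$ with $p=c\varepsilon/2$, where $M_i$ are the rows of $M$, and expand the $q$th power.
\item The eigenvalues $\mu(T)$ of $G$ are then the law of $T=M_{i_1}\oplus\cdots\oplus M_{i_\ell}$ with $\ell\sim\mathrm{Bin}(q,p)$, and $P_{\mathrm{PGM}}=2^{-k'}\bigl(\sum_T\sqrt{\mu(T)}\bigr)^2$.
\item Apply Cauchy--Schwarz over the at most $2^{kH(L/k)}$ sums of at most $L$ rows ($H$ the binary entropy), plus a Chernoff bound on $\Pr[\ell>L]$.
\item This gives $P_{\mathrm{PGM}}=2^{-\Omega(k)}$ whenever $q\leq\alpha k/\varepsilon$ for a small constant $\alpha$, contradicting $1-\delta_0\leq P_{\mathrm{opt}}\leq\sqrt{P_{\mathrm{PGM}}}$.
\end{itemize}
The agnostic $d$-term is the same with label bias $\beta=\Theta(\varepsilon)$ and $p=(1-\sqrt{1-\beta^2})/2=\Theta(\varepsilon^2)$.

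Finally, this argument only bites for large $k$. For bounded $d$ you need the $\Omega(1/\varepsilon)$ that comes from the exact Helstrom condition $(1-\varepsilon')^{2q}\leq4\delta(1-\delta)\leq4\delta_0(1-\delta_0)<1$. Your weakened form $\frac14(1-\varepsilon')^{2q}\leq\delta$ is vacuous for $\delta\geq1/4$.
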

These lower bounds already hold when the learner is required to succeed only on distributions whose inputs are supported on a fixed set of $d$ points shattered by $\mathcal{C}$~\cite{arunachalam2017survey,arunachalam2018optimal}.
Although Theorem 23 in the original paper in~\cite{arunachalam2018optimal} is stated with VC dimension $d+1$, here, we have adapted it to this paper's asymptotically equivalent notation $d=\vc(\cC)$.
The proofs of both theorems establish the lower bounds using only families of distributions supported on a fixed finite input set, so the same distribution families can be used in the reduction below.
Since values of $d$ that are not sufficiently large are bounded, the confidence lower bound obtained from two-state discrimination also absorbs the $d$ term.

The agnostic learner in the original paper outputs $h\in\cC$. However, on the support of the distributions used in its lower bound, there is a concept $c_h\in\cC$ that assigns the same labels as any given hypothesis $h$. Replacing $h$ with $c_h$ leaves the error on these distributions unchanged and does not increase the number of copies. Thus, the agnostic lower bound also holds when improper outputs are allowed.

\section{Reduction from forward state-preparation unitaries to quantum states}\label{sec:reduction-overview}
Section~\ref{sec:pac-learning} defined PAC learners given only forward access to a unitary that generates quantum examples.
A state-preparation unitary generates an unknown quantum example $\ket{\psi}$ from a known initial state $\ket{0}$, leaving unitary freedom in the remaining columns.
In this paper, we require success for every state-preparation unitary satisfying $U\ket{0}=\ket{\psi}$.
The success guarantee therefore remains valid when the remaining columns are chosen Haar-randomly.
In this section, we show that the output of a quantum algorithm making forward-only queries to a state-preparation unitary, averaged over the Haar distribution of the unitary completion, can be generated to a certain accuracy by a quantum algorithm receiving only copies of the quantum state.
Section~\ref{sec:forward-model} introduces the forward-only state-preparation-unitary model, formulates the problem, and states the theorem.
Section~\ref{sec:explainprooftechnique} explains the key ideas of the proof.
Section~\ref{sec:proof2} proves the theorem.
Using this theorem, Section~\ref{sec:pac-consequences} establishes query complexity lower bounds for PAC learners in the forward-only access model.

\subsection{The forward state-preparation access setting and the reduction theorem}\label{sec:forward-model}
We first specify the notation used in this section.
Let $\cH\simeq\C^N$ be a finite-dimensional Hilbert space with a known basis $\{\ket{0},\ldots,\ket{N-1}\}$.
Fix a known initial state $\ket{0}\in\cH$.
For simplicity of notation, we write
\begin{equation}
    P_0=\ketbra{0}{0},
    \qquad
    Q_0=I_\cH - P_0,
    \qquad
    \ket{0}^{\perp} \coloneqq \operatorname{span}\{\ket{1},\ldots,\ket{N-1}\},
    \qquad
    \dim \ket{0}^{\perp} =N-1.
\end{equation}
Let $r=\dim\cS$ be the dimension of a known subspace $\cS\subseteq\ket{0}^{\perp}$.

We assume that the unknown quantum state $\ket{\psi}$ belongs to the known subspace $\cS$.
A unitary $U$ generating $\ket{\psi}\in\cS$ specifies a map from the known initial state $\ket{0}$ to $\ket{\psi}$, but its action on $\ket{0}^{\perp}$ is free.
\begin{definition}[State-preparation unitary]\label{def:completion}
A state-preparation unitary for a unit vector $\ket{\psi}\in\cS$ is a unitary $U$ satisfying
\begin{equation}
    U\ket{0}=\ket{\psi}.
\end{equation}
Equivalently, it can be written as
\begin{equation}\label{eq:completion-decomposition}
    U=\ketbra{\psi}{0}+WQ_0,
\end{equation}
where $W: \ket{0}^{\perp} \to\ket{\psi}^{\perp}$ is a unitary isomorphism between two spaces of dimension $N - 1$.
\end{definition}

A $q$-query forward-only algorithm may insert arbitrary known channels, ancillary systems, measurements, and classical control between the $q$ applications of the same unknown $U$.
However, it is not given $U^\dagger$, $U^\top$, $U^\ast$, or controlled $U$.
By adding queries on initial-state registers that are subsequently discarded to branches that have halted, any algorithm using at most $q$ queries can be regarded as making exactly $q$ unconditional queries.

For particular choices, $U=U^\dagger$, so the inverse operation can be performed using forward queries alone.
The remaining columns may also contain additional information about $\ket{\psi}$, and the algorithm's behavior may depend on the choice of completion.
However, the reduction theorem below does not concern the output for a particular choice of $U$.
With $\ket{\psi}$ fixed, we choose the free part $W$ in Eq.~\eqref{eq:completion-decomposition} uniformly at random and consider the algorithm's output averaged over this distribution.
We define this choice.

\begin{definition}[Haar distribution of state-preparation unitaries]\label{def:haar-completion}
For $\ket{\psi}\in\cS$, let $\HC_N(\psi)$ be the distribution of $U$ obtained by choosing the unitary isomorphism $W:\ket{0}^{\perp}\to\ket{\psi}^{\perp}$ in Eq.~\eqref{eq:completion-decomposition} from the uniform Haar-invariant distribution.
\end{definition}

\begin{theorem}[Copy simulation of the Haar-averaged output]\label{thm:copy-reduction}
Let $\cS\subseteq\ket{0}^{\perp}$ be a known subspace of dimension $r$, and let $\mathcal{A}_q$ be any quantum algorithm making $q$ forward queries to a state-preparation unitary $U$ as in Definition~\ref{def:completion}.
For each $U$, let $\rho_{\mathcal A_q}(U)$ be the final output state of $\mathcal A_q$ when the same $U$ is used in every query.
Suppose that
\begin{equation}
    q(r-1)<N-1.
\end{equation}
Then there exists a quantum channel $\mathfrak S_q$ independent of the choice of $\ket{\psi} \in \cS$.
For this channel, the following holds for every unit vector $\ket{\psi}\in\cS$.
\begin{equation}\label{eq:copy-reduction-state}
    \Dtr\!\left(
      \E_{U\sim\HC_N(\psi)}[\rho_{\mathcal{A}_q}(U)],
      \mathfrak S_q(P_\psi^{\otimes q})
    \right)
    \leq \frac{q(r-1)}{N-1}.
\end{equation}
\end{theorem}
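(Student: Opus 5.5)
We plan to build the simulator $\mathfrak S_q$ from a decomposition of the Haar completion. Write $\ket{\psi}^\perp=(\cS\ominus\psi)\oplus\cS^\perp$, where $\cS^\perp$ is the orthogonal complement of $\cS$ in $\cH$. This complement contains $\ket{0}$, has dimension $N-r$, and does not depend on $\psi$. Draw $W\sim\HC_N(\psi)$ and set $K\coloneqq W^{-1}(\cS^\perp)\subseteq\ket{0}^\perp$ and $R\coloneqq\ket{0}^\perp\ominus K$, so that $\dim R=r-1$. By left-invariance of the Haar measure, the pair $(K,\,W|_K:K\to\cS^\perp)$ has the same law for every $\ket{\psi}\in\cS$: $K$ is a Haar-random $(N-r)$-dimensional subspace of $\ket{0}^\perp$ and $W|_K$ is a Haar isomorphism onto the fixed space $\cS^\perp$. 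Only the remaining block $W|_R:R\to\cS\ominus\psi$ carries $\psi$-dependence.

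The simulator samples $(K,W|_K)$ itself and replaces each query by a known isometry $V$ acting on the query register, one of the $q$ copy registers (selected by a classical counter $j$), and a flag. The map $V$ acts as follows. On $\ket{0}$ it swaps in the $j$-th copy, which is $\ket{\psi}$, moves $\ket 0$ into the copy slot, and increments $j$. On $K$ it applies $W|_K$. On $R$ it raises a flag and sends the input to an orthogonal failure subspace. Since $\cS\perp\cS^\perp$ and the flag branch is orthogonal to both, $V$ is an isometry independent of $\psi$; padding to a channel is routine.

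The first technical point is coherence. In the true $U$ the $\ket{0}$-branch and the $K$-branch are coherent, whereas in $V$ only the former consumes a copy, leaving different copy registers and counter values in the two branches. We will fix this using the $U(1)$ subgroup of the Haar measure: $W\mapsto e^{i\theta}W$ preserves $\HC_N(\psi)$. The simulator can apply the same random phase to $W|_K$. After the phase average, only terms whose ket and bra contain the same number of $W$-applications survive, and hence the same number of $\ket{\psi}$-insertions. Because copies are consumed in a fixed order, these surviving terms have identical counters and identical unconsumed registers $\ket{\psi}^{\otimes(q-j)}$. We then verify, term by term in the Dyson-like expansion of the unflagged branch, that the phase-averaged simulator output equals the Haar-averaged true output restricted to histories in which no query input ever has a component in $R$. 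The two outputs differ only through terms involving at least one $R$-component. A standard gentle-measurement/hybrid estimate then bounds the trace distance by the total flag weight $\sum_{t=1}^{q}\E\,\Tr[P_R\,\sigma_t]$, where $\sigma_t$ is the (unflagged) query input before step $t$.

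The main obstacle is bounding each summand by $(r-1)/(N-1)$ despite adaptivity: $\sigma_t$ depends on $K$ and $W|_K$ through earlier queries, so $R$ is not independent of $\sigma_t$. Our first attempt is an invariance argument. Condition on everything the unflagged history can reveal, namely the action of $W$ on the subspaces actually probed. The remaining freedom is a Haar-random rotation mixing $R$ with the still-unprobed part of $K$. Summing the probability leaked into $R$ across steps then telescopes to at most $t(r-1)/(N-1)$ of the mass. An alternative is to bound the flag weight by $\E\,\Tr[W^\dagger P_{\cS}W\,Q_0\sigma_t Q_0]$ in an equivalent model in which $\cS\ominus\psi$ is replaced by a fictitious $(r-1)$-dimensional flag space $F$. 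There right-invariance $W\mapsto WX$, with $X$ unitary on $\ket 0^\perp$ commuting with the already-revealed data, should give $\E\,W^\dagger P_FW=\frac{r-1}{N-1}Q_0$ per step. Summing over the $q$ queries yields the claimed $q(r-1)/(N-1)$, and the hypothesis $q(r-1)<N-1$ keeps the bound nontrivial.
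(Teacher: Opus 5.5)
Your architecture matches the paper's. Your pair $(K,W|_K)$ is the paper's $(\Pi,G_\Pi)$, with $\Pi$ the projector onto your $R$. Your swap/counter/flag simulator is its exact simulation of the $b=0$ sector, and your total flag weight is its $p_{\mathrm{hit}}$. The gap is the step you yourself call the main obstacle: neither sketch bounds the flag weight for an adaptive, coherent algorithm.

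In the first sketch, there is no well-defined ``subspace actually probed.'' A single query on half of a maximally entangled state of $\ket{0}^\perp\otimes\C^{N-1}$ probes all of $\ket{0}^\perp$, so no unprobed part of $K$ is left for a conditional Haar rotation to mix with $R$. Yet the no-hit branch after that query still depends on $\Pi$: identifying the ancilla with $\ket{0}^\perp$, its unnormalized ancilla marginal is $(I-\Pi)^\top/(N-1)$. So later queries can be correlated with $R$. In the second sketch, $\sigma_t$ is a polynomial in the entries of $W|_K$ and depends on $K$. Any right translation $W\mapsto WX$ that mixes $R$ with $K$ therefore also moves $\sigma_t$. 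The factorization $\E\Tr[\Pi\sigma_t]=\Tr[\E(\Pi)\,\E(\sigma_t)]$, which would give $(r-1)/(N-1)$ per step, is exactly what adaptivity destroys, and ``$X$ commuting with the revealed data'' has no concrete content here.

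The missing idea (the paper's proof of Lemma~\ref{lem:first-B-prob}) is to make the $\Pi$-dependence of the no-hit state explicit and positive.
\begin{itemize}
\item Write the no-hit state before query $m+1$ as $\sum_g e^{ig\theta}L_{m,g}$ applied to the $g$-fold tensor power of the Choi vector of $G_\Pi$, with $L_{m,g}$ independent of $(\Pi,G_\Pi)$, and average over the phase.
\item A Schur--Weyl construction (Lemma~\ref{lem:haar-moment-cp}) gives a completely positive $\mathfrak T_g$, independent of $\Pi$, with $\E_{G_\Pi}[J(G_\Pi)^{\otimes g}]=\mathfrak T_g\bigl(((I-\Pi)^\top)^{\otimes g}\bigr)$.
\item Then $p^{\mathrm{first}}_{m+1}$ is a sum of pairings of positive operators against $\int\Pi_0^\top\otimes((I-\Pi)^\top)^{\otimes g}\,d\Pi$.
\item The operator inequality $\int\Pi_0^\top\otimes((I-\Pi)^\top)^{\otimes g}\,d\Pi\preceq\frac{r-1}{(N-1)-g(r-1)}\,I\otimes\int((I-\Pi)^\top)^{\otimes g}\,d\Pi$ follows from $\E\Pi=\frac{r-1}{N-1}I$ and an operator union bound. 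It yields $p^{\mathrm{first}}_{m+1}\le\frac{r-1}{(N-1)-m(r-1)}\,s_m$, which telescopes to $p_{\mathrm{hit}}\le q(r-1)/(N-1)$.
\end{itemize}
This per-step bound is weaker than the $(r-1)/(N-1)$ per summand you aim for. Also, $q(r-1)<N-1$ is needed to keep its denominators positive, not merely to make the final bound nontrivial.

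A smaller point concerns linearity. A gentle-measurement or hybrid estimate gives only a bound of order $\sum_t\sqrt{w_t}$, with $w_t=\E\Tr[P_R\sigma_t]$, not the linear $\sum_t w_t$ you claim. The linear bound comes from decoherence. Conditioned on $(K,W|_K)$, the block $W|_R$ is invariant under an \emph{independent} phase $W|_R\mapsto e^{i\phi}W|_R$, and averaging over it kills every cross term between histories with and without an $R$-use. The Haar-averaged true output is then $\rho_0+\rho_B$ with $\rho_B\succeq0$ and $\Tr\rho_B=p_{\mathrm{hit}}$, so $\Dtr\le\Tr\rho_B$ directly. Your common phase $W\mapsto e^{i\theta}W$ only sorts terms by the total number of $W$-uses and leaves those cross terms alive. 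Conversely, the random phase inside the simulator is unnecessary, since the counter already decoheres different numbers of copy insertions.
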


In the following, we prove Theorem~\ref{thm:copy-reduction}.

\subsection{Decomposition of the output quantum state and simulation using state copies}~\label{sec:explainprooftechnique}
In this section, we introduce the main ideas in the proof of Theorem~\ref{thm:copy-reduction}.
First, we decompose the action of a state-preparation unitary into three blocks to explicitly separate the parts that depend on $\ket{\psi}$.
Using this decomposition, we split the final output state of the forward-only algorithm $\mathcal{A}_q$ into two parts.
One part can be generated from copies of the quantum state.
The other part depends on $\ket{\psi}$ and is therefore bounded as an error term.

Fix an unknown unit vector $\ket{\psi}\in\cS$.
Within the known space $\cS$, define the following subspace orthogonal to $\ket{\psi}$:
\begin{equation}
    \cB_\psi=\cS\cap\ket{\psi}^{\perp}.
\end{equation}
Then the following holds.
\begin{equation}
    \ket{\psi}^{\perp}=\cS^{\perp}\oplus\cB_\psi,
    \qquad
    \dim\cB_\psi=r-1.
\end{equation}
Consider the subspace of $\ket{0}^{\perp}$ that $U$ maps to $\cB_\psi$.
In terms of $W$ from Eq.~\eqref{eq:completion-decomposition}, this subspace is $W^{-1}(\cB_\psi)$.
Let $\Pi$ be the orthogonal projector onto this subspace.
Since $\dim\cB_\psi=r-1$, the rank of $\Pi$ is also $r-1$.
\begin{restatable}[Decomposition of a state-preparation unitary]{lemma}{haardecomposition}
\label{lem:haar-block-decomposition}
A Haar-random state-preparation unitary for $\ket{\psi}$ can be generated by first choosing a Haar-random projector $\Pi$ of rank $(r-1)$ on $\ket{0}^{\perp}$ and then, conditioned on $\Pi$, choosing the following independent Haar-random isometries,
\begin{equation}
    G_\Pi:(I-\Pi)\ket{0}^{\perp}\longrightarrow\cS^{\perp},
    \qquad
    B_{\Pi,\psi}:\Pi\ket{0}^{\perp}\longrightarrow\cB_\psi.
\end{equation}
When these maps are extended by zero outside their domains and independent phases $\theta,\phi\in[0,2\pi)$ are chosen, the following unitary
\begin{equation}\label{eq:note-decomposition}
    U_{\theta,\phi,\Pi,G,B}
    =\ketbra{\psi}{0}+e^{i\theta}G_\Pi+e^{i\phi}B_{\Pi,\psi}
\end{equation}
follows the Haar distribution.
In particular, $\Pi$ and $G_\Pi$ do not depend on $\ket{\psi}$.
\end{restatable}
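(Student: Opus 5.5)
Fix $\ket{\psi}\in\cS$. The plan is to exhibit a measure-preserving bijection between the parametrization of Eq.~\eqref{eq:note-decomposition} and the parametrization $W\mapsto U=\ketbra{\psi}{0}+WQ_0$ of Definition~\ref{def:haar-completion}. The Haar distribution on unitary isomorphisms $W:\ket{0}^{\perp}\to\ket{\psi}^{\perp}$ is the unique distribution invariant under left multiplication by unitaries of $\ket{\psi}^{\perp}$ and right multiplication by unitaries of $\ket{0}^{\perp}$. The strategy is therefore to define $W$ from $(\theta,\phi,\Pi,G_\Pi,B_{\Pi,\psi})$, show that $W$ is a unitary isomorphism, and show that its law has this bi-invariance. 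Uniqueness of Haar measure then gives the claim.

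First I will check unitarity. Set $W=e^{i\theta}G_\Pi+e^{i\phi}B_{\Pi,\psi}$. By construction $G_\Pi$ is an isometry from $(I-\Pi)\ket{0}^{\perp}$, of dimension $N-r$, into $\cS^{\perp}$, which also has dimension $N-1-(r-1)=N-r$ because $\cS\subseteq\ket{0}^{\perp}$ has dimension $r$ and we regard $\cS^{\perp}$ within $\ket{\psi}^{\perp}$. Hence $G_\Pi$ is a unitary isomorphism. Likewise $B_{\Pi,\psi}$ maps the $(r-1)$-dimensional space $\Pi\ket{0}^{\perp}$ isometrically onto the $(r-1)$-dimensional space $\cB_\psi$. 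The domains are orthogonal and the ranges $\cS^{\perp}$ and $\cB_\psi$ are orthogonal with direct sum $\ket{\psi}^{\perp}$. So $W$ is a unitary isomorphism $\ket{0}^{\perp}\to\ket{\psi}^{\perp}$, and $U=\ketbra{\psi}{0}+WQ_0$ is a state-preparation unitary.

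Second, I will compute the law of $W$ by showing that it is invariant under $W\mapsto VWV'$, where $V'$ is a unitary on $\ket{0}^{\perp}$ and $V$ is a unitary on $\ket{\psi}^{\perp}$.
\begin{itemize}
\item \emph{Right invariance.} Replacing $W$ by $WV'$ sends $\Pi$ to $V'^{\dagger}\Pi V'$, which is again a Haar-random rank-$(r-1)$ projector. Conditioned on the new projector, the restrictions $G_\Pi V'$ and $B_{\Pi,\psi}V'$ are still independent Haar isometries between the correspondingly rotated domains and the same codomains. So the joint law is unchanged.
\item \emph{Left invariance.} This is the delicate step, because a general $V$ on $\ket{\psi}^{\perp}$ does not preserve the splitting $\cS^{\perp}\oplus\cB_\psi$. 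I would avoid proving it directly. Instead I would use the reverse direction of the bijection: start from a Haar $W$ and define $\Pi\coloneqq W^{-1}P_{\cB_\psi}W$, $G\coloneqq WP_{\Pi^{\perp}}$, $B\coloneqq W\Pi$.
\end{itemize}

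Third, I will identify the law of $(\Pi,G,B)$ obtained from a Haar $W$. Since $W$ is right-invariant, $\Pi$ is Haar-distributed among rank-$(r-1)$ projectors. For fixed $\Pi$, the pair $(G,B)$ ranges over the unitary isomorphisms $(I-\Pi)\ket{0}^{\perp}\to\cS^{\perp}$ and $\Pi\ket{0}^{\perp}\to\cB_\psi$. Left-multiplying $W$ by $V_1\oplus V_2$, with $V_1$ a unitary on $\cS^{\perp}$ and $V_2$ a unitary on $\cB_\psi$, preserves $\Pi$ and acts independently on $G$ and on $B$. Hence, conditionally on $\Pi$, the maps $G$ and $B$ are independent Haar isometries. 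The phases $e^{i\theta},e^{i\phi}$ are then absorbed: multiplying a Haar isometry by an independent phase leaves its law unchanged. This matches the distribution in the lemma, and since the map is a bijection, the forward construction yields Haar $W$.

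Finally, the statement that $\Pi$ and $G_\Pi$ do not depend on $\ket{\psi}$ follows because the law of $\Pi$ is the Haar law on $\ket{0}^{\perp}$, and the codomain $\cS^{\perp}$ of $G_\Pi$ is fixed by the known subspace alone. The main obstacle I expect is stating rigorously the conditional-Haar claim for $(G,B)$ given $\Pi$, in particular the measurability of the disintegration. I would handle it by writing $W=W_0X$ with $X$ Haar on $\ket{0}^{\perp}$ and $W_0$ a fixed isomorphism, and then using the standard coset decomposition $U(N-1)\to U(N-1)/(U(N-r)\times U(r-1))$, the Grassmannian.
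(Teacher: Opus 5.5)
Your proposal is correct and, once you switch to the reverse direction, it is essentially the paper's proof: start from a Haar $W$, let $\Pi$ project onto $W^{-1}\cB_\psi$, use right invariance to get the Haar law of $\Pi$ and the stabilizer $\mathsf{U}(\cS^{\perp})\times\mathsf{U}(\cB_\psi)$ to get conditionally independent Haar restrictions, and then absorb the phases. Your forward bi-invariance step is not needed, and your remark on the disintegration adds rigor that the paper leaves implicit.
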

The proof is given in Appendix~\ref{app:haar-decomposition}.
By Lemma~\ref{lem:haar-block-decomposition}, $\Pi$ and $G_\Pi$ can be chosen independently of $\ket{\psi}$.
Although $\ketbra{\psi}{0}$ depends on $\ket{\psi}$, its contribution can be reproduced using copies of $\ket{\psi}$, as we show later.
In contrast, $B_{\Pi,\psi}$ depends on $\ket{\psi}$.
Below, we call the three terms in Eq.~\eqref{eq:note-decomposition} the main, good, and bad blocks, respectively.
We reproduce components containing only main and good blocks from copies and bound the weight of components containing bad blocks as an error term.

We substitute this three-block decomposition into the $q$ queries to $U$ made by the algorithm $\mathcal{A}_q$ and analyze the branches.
Using Stinespring dilation~\cite{watrous2018theory} and coherently retaining measurement outcomes and classical control, any adaptive $q$-query algorithm can be purified, with the known operations between successive oracle queries represented as isometries.
By preparing all ancillary registers at first, these isometries can be extended to unitaries.
Thus, without loss of generality, the algorithm's final pure state can be written as follows.
\begin{equation}\label{eq:algorithm-note-form}
    A^{(q+1)}U A^{(q)}\cdots A^{(2)}U A^{(1)}\ket{\mathrm{init}},
\end{equation}
Here, $\ket{\mathrm{init}}$ is the purified initial state, and $A^{(j)}$ is a known unitary specified by the known algorithm.

We record in a list $l$ which of the three terms is chosen in each query.
Since the phases depend only on the number of occurrences of each term, we group words according to the number $a$ of main blocks and the number $b$ of bad blocks.
For a sequence $l=(l_1,\ldots,l_q)\in\{\psi,G,B\}^q$, define
\begin{equation}
    Q_{l_j}=
    \begin{cases}
        \ketbra{\psi}{0}, & l_j=\psi,\\
        G_\Pi, & l_j=G,\\
        B_{\Pi,\psi}, & l_j=B,
    \end{cases}
\end{equation}
We also define
\begin{equation}
    \ket{\Psi_l}
    =A^{(q+1)}Q_{l_q}A^{(q)}\cdots A^{(2)}Q_{l_1}A^{(1)}\ket{\mathrm{init}}.
\end{equation}
Writing
\begin{equation}
    a(l)= |\{j:l_j=\psi\}|, \qquad b(l)= |\{j:l_j=B\}|,
\end{equation}
we further define
\begin{equation}\label{eq:Psiab-note}
    \ket{\Psi_{a,b}} =\sum_{\substack{l:\,a(l)=a\\b(l)=b}}\ket{\Psi_l}.
\end{equation}
Each sequence $l$ contributing to $\ket{\Psi_{a,b}}$ contains exactly $q - a - b$ occurrences of $G_\Pi$ and $b$ occurrences of $B_{\Pi, \psi}$ in Eq.~\eqref{eq:note-decomposition}. Hence, Eq.~\eqref{eq:algorithm-note-form} becomes
\begin{equation}
    \ket{\Psi(\theta,\phi)} =\sum_{a=0}^q\sum_{b=0}^{q-a} e^{i(q-a-b)\theta}e^{ib\phi}\ket{\Psi_{a,b}}.
\end{equation}
Averaging over $\theta$ and $\phi$ eliminates all cross terms between distinct pairs $a,b$.
\begin{align}\label{eq:sector-decomposition-main}
    \E_{\theta,\phi \in [0,2\pi)}\left[\proj{\Psi(\theta,\phi)}\right] &=\sum_{a=0}^q\sum_{b=0}^{q-a}\proj{\Psi_{a,b}} \\
    &=\rho_0+\rho_B,
\end{align}
Here, we have defined
\begin{equation}\label{eq:rho0-rhoB}
    \rho_0 \coloneqq \sum_{a = 0}^q \proj{\Psi_{a,0}}, \qquad \rho_B \coloneqq \sum_{a = 0}^q \sum_{b = 1}^{q - a} \proj{\Psi_{a,b}}.
\end{equation}
Below, we show that $\rho_0$, averaged over $\Pi$ and $G_\Pi$, can be reproduced from $q$ copies of $\ket{\psi}$.
We also upper-bound the average trace norm $\E_{\Pi,G,B}[\norm{\rho_B}_1]$ of $\rho_B$.

\begin{restatable}[Exact simulation of the $b=0$ sector]{lemma}{exactsimulationb}\label{lem:no-bad-simulation}
For any $q$-query algorithm, there exists a completely positive and trace-nonincreasing map $\mathfrak{S}_q^{(0)}$ independent of $\ket{\psi}$ satisfying
\begin{equation}\label{eq:no-bad-main-simulation}
    \mathfrak S_q^{(0)}(P_\psi^{\otimes q})
    =\E_{\Pi,G_\Pi}\!\left[\sum_{a=0}^q\proj{\Psi_{a,0}}\right].
\end{equation}
\end{restatable}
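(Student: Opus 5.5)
The plan is to build $\mathfrak S_q^{(0)}$ explicitly as a classical mixture, over $(\Pi,G_\Pi)$, of a single-Kraus-operator CP map that runs the purified algorithm of Eq.~\eqref{eq:algorithm-note-form}. In this map each oracle call $U$ is replaced by a known partial isometry that acts on the query register and on a pool of $q$ copy registers $C_1,\ldots,C_q$, each initially holding $\ket{\psi}$. By Lemma~\ref{lem:haar-block-decomposition}, $\Pi$ and $G_\Pi$ are distributed independently of $\ket{\psi}$. The simulator can therefore sample them itself, and it suffices to construct, for each fixed $(\Pi,G_\Pi)$, a $\psi$-independent CP trace-nonincreasing map sending $P_\psi^{\otimes q}$ to $\sum_a\proj{\Psi_{a,0}}$. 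Averaging over $(\Pi,G_\Pi)$ then gives Eq.~\eqref{eq:no-bad-main-simulation}.

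\textbf{Simulated query.} I would add a counter register $K$ with basis $\ket{0},\ldots,\ket{q}$, recording how many copies have been consumed, initialised to $\ket{0}_K$. The simulated query is
\begin{equation}
    M=\sum_{k=0}^{q-1}\Bigl[\mathrm{SWAP}_{Q,C_{k+1}}\bigl(P_0\otimes P_\cS\bigr)_{Q,C_{k+1}}\otimes\ketbra{k+1}{k}_K\Bigr]+G_\Pi\otimes I_{C}\otimes I_K .
\end{equation}
Here $P_\cS$ acts only on $C_{k+1}$, and the $G_\Pi$ term carries the projector $P_\cS$ on every copy register, with the counter unchanged.

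\textbf{Partial-isometry check.} I would verify that $M$ is a partial isometry, and hence a valid Kraus operator. The two terms have orthogonal domains, since the query register lies in $\ket{0}$ in the first and in $(I-\Pi)\ket{0}^\perp$ in the second. They also have orthogonal ranges: the first leaves $Q$ in $\cS$ and the second leaves $Q$ in $\cS^\perp$. Within the first term, distinct $k$ are orthogonal through the counter.

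\textbf{Action on the copy registers.} When the copy registers hold $\ket{\psi}$'s, the main term sends $\ket{0}_Q\ket{\psi}_{C_{k+1}}$ to $\ket{\psi}_Q\ket{0}_{C_{k+1}}$, which reproduces $\ketbra{\psi}{0}$ on $Q$. The $G$ term reproduces $G_\Pi$. The bad block $B_{\Pi,\psi}$ is simply absent, so its weight is discarded; this is why the map is only trace-nonincreasing. Interleaving $M$ with the known unitaries $A^{(j)}$, the final vector expands as a sum over words $l\in\{\psi,G\}^q$. Because the main term always consumes the next unused copy, a word with $a$ main blocks leaves the ancillas in $\ket{0}^{\otimes a}\ket{\psi}^{\otimes(q-a)}\ket{a}_K$. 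This state depends on $l$ only through $a$.

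\textbf{Expected obstacle: coherence bookkeeping.} The main obstacle is getting the coherence between words exactly right. Naively swapping with the $j$-th copy at the $j$-th query leaves a which-position record in the copies. Tracing out would then destroy the coherence among words with the same $a$, producing $\sum_l\proj{\Psi_l}$ instead of $\sum_a\proj{\Psi_{a,0}}$. The counter-indexed stack removes this record. With it, the final joint state is $\sum_a\ket{\Psi_{a,0}}\otimes\ket{0}^{\otimes a}\ket{\psi}^{\otimes(q-a)}\ket{a}_K$. The ancilla states for distinct $a$ are orthogonal through $K$, so tracing out $C$ and $K$ yields exactly $\sum_a\proj{\Psi_{a,0}}$.

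\textbf{Independence of $\ket{\psi}$ and conclusion.} Finally, I would note that $M$, the $A^{(j)}$, and the sampling of $(\Pi,G_\Pi)$ are all independent of $\ket{\psi}$. The only $\psi$-dependence is through the input $P_\psi^{\otimes q}$, which completes the argument.
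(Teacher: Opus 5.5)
Your proposal is correct and is essentially the paper's own construction. Both use a counter-indexed stack of copy registers, so the $a$-th main block always swaps with $C_a$; the known $G_\Pi$ is applied directly, the bad block is dropped, and the copy and counter ancillas, whose states for different $a$ are orthogonal through the counter, are traced out. The result is then averaged over $(\Pi,G_\Pi)$ independently of $\ket{\psi}$. The only differences are that the paper uses the counter-matched projector $\pi_c$ in both terms, adds a failure flag, and checks the averaged map via Choi operators.

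One wording slip needs fixing. Your remark that the $G_\Pi$ term ``carries the projector $P_\cS$ on every copy register'' must be dropped. Taken literally, it would annihilate every word in which a good block follows a main block, because consumed copy registers hold $\ket{0}\perp\cS$. What is needed is your displayed $G_\Pi\otimes I_C\otimes I_K$, or the paper's $\pi_c$.
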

This map $\mathfrak S_q^{(0)}$ can be completed to a quantum channel by sending the missing trace to an arbitrary fixed failure state.
The proof idea of this Lemma~\ref{lem:no-bad-simulation} as follows.
First, choose a pair $(\Pi,G_\Pi)$ independent of $\ket{\psi}$ once and use the same pair in every query.
In each query, on the $\ket{0}$ component of the query, SWAP the next unused copy of $\ket{\psi}$ into the query register, and apply $G_\Pi$ to the $(I-\Pi)\ket{0}^{\perp}$ component.
Without measuring these two components along the way, send only the $\Pi\ket{0}^{\perp}$ component
to a failure branch.

On a successful branch in which the main block is used a total of $a$ times, the first $a$ copy registers become $\ket{0}$, while the remaining registers stay in $\ket{\psi}$.
Since this state does not depend on the positions at which the main block appears, the amplitudes of words with the same $a$ add coherently to give $\ket{\Psi_{a,0}}$.
On the other hand, since $\braket{0|\psi}=0$, the copy-register states corresponding to different values of $a$ are mutually orthogonal.
Thus, discarding the copy registers from the successful branches yields $\sum_{a=0}^q\proj{\Psi_{a,0}}$.
Finally, averaging over $(\Pi,G_\Pi)$ gives Eq.~\eqref{eq:no-bad-main-simulation}.
The details of the construction are given in Appendix~\ref{app:copy-simulator}.

\subsection{Bounding the weight of the remaining output component}~\label{sec:proof2}

The aim of this subsection is to upper-bound the average weight of the bad part $\rho_B$ in Eq.~\eqref{eq:sector-decomposition-main}.
Since $\rho_B\succeq0$, its trace norm equals its trace, and we need to bound the following quantity:
\begin{equation}\label{eq:rhoB-trace-weight}
    \E_{\Pi,G,B}\!\left[\norm{\rho_B}_1\right]
    =\E_{\Pi,G,B}\!\left[\Tr(\rho_B)\right]
    =\E_{\Pi,G,B}\!\left[
      \sum_{a=0}^q\sum_{b=1}^{q-a}\norm{\ket{\Psi_{a,b}}}^2
    \right]
\end{equation}

To interpret this quantity as a probability, we define the following auxiliary monitored experiment. Let $\mathcal A_q$ be an arbitrary quantum algorithm making $q$ queries with forward-only access.
Choose $\Pi,G_\Pi,B_{\Pi,\psi},\theta,\phi$ once according to Lemma~\ref{lem:haar-block-decomposition}, and use the same unitary $U_{\theta,\phi,\Pi,G,B}$ in Eq.~\eqref{eq:note-decomposition} for all $q$ oracle queries.

Replace each oracle query of $\mathcal A_q$ with a measurement operation having the following two Kraus operators, and record the outcome of each query:
\begin{equation}\label{eq:nonbad-bad-instrument}
    U^{(0)}_{\theta,\Pi,G}
    =Q_\psi+e^{i\theta}G_\Pi ({I - \Pi}) Q_0,
    \qquad
    U^{(1)}_{\phi,\Pi,B}
    =e^{i\phi}B_{\Pi,\psi}\Pi Q_0,
\end{equation}
where $Q_\psi=\ketbra{\psi}{0}$.
Let $\Pi_0$ denote the extension of $\Pi$ by zero on span $\{\ket{0}\}$.
All operators in the following identities act on $\cH$.
These operators satisfy
\begin{equation}
    (U^{(0)}_{\theta,\Pi,G})^\dagger
    U^{(0)}_{\theta,\Pi,G}
    = I_{\mathcal H} - \Pi_0,
    \qquad
    (U^{(1)}_{\phi,\Pi,B})^\dagger
    U^{(1)}_{\phi,\Pi,B}
    =\Pi_0,
    \qquad
    (U^{(0)}_{\theta,\Pi,G})^\dagger
    U^{(1)}_{\phi,\Pi,B}
    =0.
\end{equation}
We refer to outcome $0$ as no-$B$ and outcome $1$ as $B$.
We call the measurement process defined above the monitored experiment.
In this experiment, among the $q$ queries, define the probability of obtaining a $B$ outcome at least once as follows:
\begin{equation}\label{eq:p-hit-definition}
    p_{\mathrm{hit}}
    \coloneqq
    \Pr\!\left[\text{at least one of the $q$ queries yields $B$}\right]
\end{equation}

Here, the probability is averaged over the Haar randomness and the outcomes of the monitoring measurements.

Fix $\Pi,G,B,\theta$.
By the chain rule of probability and the Born rule for each measurement, the probability that all $q$ outcomes are no-$B$ is
\begin{align}
    &\Pr[0_1,\ldots,0_q\mid\theta,\Pi,G,B]\notag\\
    &\quad=
    \prod_{j=1}^q
    \Pr[0_j\mid0_1,\ldots,0_{j-1},\theta,\Pi,G,B]\notag\\
    &\quad=
    \norm{
      A^{(q+1)}U^{(0)}_{\theta,\Pi,G}A^{(q)}\cdots
      A^{(2)}U^{(0)}_{\theta,\Pi,G}A^{(1)}\ket{\mathrm{init}}
    }^2.
\end{align}
Expanding $U^{(0)}_{\theta,\Pi,G}$ as the sum of the main and good blocks, the expression above becomes
\begin{equation}
    \Pr[0_1,\ldots,0_q\mid\theta,\Pi,G,B]
    =\norm{
      \sum_{a=0}^q e^{i(q-a)\theta}\ket{\Psi_{a,0}}
    }^2
\end{equation}
Thus, averaging over the common phase $\theta$ gives
\begin{align}
    &\E_\theta\!\left[
      \Pr[0_1,\ldots,0_q\mid\theta,\Pi,G,B]
    \right]\notag\\
    &\quad=
    \sum_{a,a'=0}^q
    \E_\theta\!\left[e^{i(a'-a)\theta}\right]
    \braket{\Psi_{a',0}|\Psi_{a,0}}\notag\\
    &\quad=
    \sum_{a=0}^q\norm{\ket{\Psi_{a,0}}}^2.
\end{align}

On the other hand, the phase-averaged state in Eq.~\eqref{eq:sector-decomposition-main} has trace one, so
\begin{equation}
    1= \sum_{a=0}^q \norm{\ket{\Psi_{a,0}}}^2 + \sum_{a=0}^q\sum_{b=1}^{q-a} \norm{\ket{\Psi_{a,b}}}^2.
\end{equation}
Therefore, taking the complement of the event that all outcomes are no-$B$ and then averaging over $\Pi,G,B$, we obtain
\begin{equation}\label{eq:bad-weight-hit}
    p_{\mathrm{hit}} = 1 - \E_{\Pi,G,B}\!\left[
      \Pr[0_1,\ldots,0_q\mid\Pi,G,B]
    \right]
    =\E_{\Pi,G,B}\!\left[
      \sum_{a=0}^q\sum_{b=1}^{q-a}
      \norm{\ket{\Psi_{a,b}}}^2
    \right].
\end{equation}
That is, Eq.~\eqref{eq:rhoB-trace-weight} equals $p_{\mathrm{hit}}$.

For $m\in\{0,\ldots,q\}$, the probability that the first $m$ monitored oracle queries all yield no-$B$ is denoted by $s_m$.
For $m=0$, no queries have yet been made, and this condition is always satisfied, so $s_0=1$.
Furthermore, $s_q$ is the probability that all $q$ queries yield no-$B$.
That is, $s_q=1-p_{\mathrm{hit}}$.

Fix $m\in\{0,\ldots,q-1\}$.
Retain the branch in which the first $m$ outcomes are all no-$B$ without normalizing it, and run the algorithm until just before the ($m+1$)st oracle querie.
This includes the known inter-query operation performed after the $m$th query. For $m=0$, however, only the initial state preparation before the first query is performed.
At this point, on the query register for the ($m+1$)st query, $\cH$, and all the remaining workspace registers, $\cM$, the output is a subnormalized state with average trace $s_m$.

The event that the first $m$ outcomes are no-$B$ splits into the event that the ($m+1$)st outcome is also no-$B$ and the event that the ($m+1$)st query yields $B$ for the first time.
Thus, writing $p_{m+1}^{\mathrm{first}}$ for the probability of the latter event, we have
\begin{equation}\label{eq: probability sequence}
    p_{m+1}^{\mathrm{first}}=s_m-s_{m+1}.
\end{equation}
Furthermore, $p_{m+1}^{\mathrm{first}}$ and $s_m$ satisfy the following relation.
\begin{restatable}{lemma}{pvssm}\label{lem:first-B-prob}
    For every $m \in \{0, 1, \ldots, q - 1\}$ satisfying $m(r - 1) < N - 1$, $p_{m+1}^{\mathrm{first}}$ and $s_m$ in Eq.~\eqref{eq: probability sequence} satisfy
    \begin{equation}
        p_{m+1}^{\mathrm{first}} \leq \frac{r-1}{(N - 1)-m(r - 1)}s_m.
    \end{equation}
\end{restatable}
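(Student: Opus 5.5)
The plan is to condition on everything that the algorithm can have "learned" before the $(m+1)$st query, and then use the fact that, given the no-$B$ history, the projector $\Pi$ remains Haar-random on the part of $\ket{0}^{\perp}$ not yet probed. This is a deferred-decision argument.

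\textbf{Step 1: the unnormalized branch.} On the branch where the first $m$ outcomes are all no-$B$, the state just before query $m+1$ is
\[
\ket{\Phi_m}=A^{(m+1)}U^{(0)}A^{(m)}\cdots U^{(0)}A^{(1)}\ket{\mathrm{init}},
\]
with $U^{(0)}=U^{(0)}_{\theta,\Pi,G}$. It depends on the randomness only through $\theta$, $\Pi$, and $G_\Pi$, and not through $B$. The first-$B$ probability is $\E\,\|(\Pi\otimes I_\cM)\ket{\Phi_m}\|^2$, since $B_{\Pi,\psi}$ is an isometry on $\Pi\ket{0}^{\perp}$. I will write $\Pi$ for $\Pi_0\otimes I_\cM$ from now on.

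\textbf{Step 2: sequential revelation of $\Pi$.} I will realize the Haar-random rank-$(r-1)$ subspace $V=\operatorname{ran}\Pi$ together with $G_\Pi$ jointly, via a Haar-random unitary $W'$ on $\ket{0}^{\perp}$. Conditioned on the no-$B$ history, each past query only involves $U^{(0)}$, which acts as $G_\Pi(I-\Pi)$ on $\ket{0}^{\perp}$. Both $\Pi$ and $G_\Pi$ are determined by the random isometry restricted to inputs in $V^{\perp}$. Each query adds at most one direction of $\ket{0}^{\perp}$ that $G_\Pi$ has been "evaluated on" (in the purified picture it is a whole subspace, which is the subtle point below).

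For a clean count, I would instead use an inductive bound with a \emph{fresh-dimension} argument. Let $\mathcal{K}_m$ be the span of directions already constrained. Conditioned on $V\perp\mathcal{K}_m$, the subspace $V$ is Haar-random among rank-$(r-1)$ subspaces of $\mathcal{K}_m^{\perp}$. Then for any vector $v$,
\[
\E\|\Pi v\|^2\le \frac{r-1}{\dim\mathcal{K}_m^{\perp}}\|v\|^2 .
\]
With $\dim\mathcal{K}_m^{\perp}\ge (N-1)-m(r-1)$ this gives exactly the claimed factor. Summing over the workspace, the weight is $\E\|\Phi_m\|^2=s_m$, and the bound becomes $p^{\mathrm{first}}_{m+1}\le \frac{r-1}{(N-1)-m(r-1)}\,s_m$.

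\textbf{Step 3: why the constraint dimension is $m(r-1)$ (the main obstacle).} Because the queries are in superposition, one cannot literally say a single direction has been revealed. I would formalize this instead through symmetry. I will use invariance of the joint law of $(\Pi,G_\Pi)$ under $\Pi\mapsto V\Pi V^\dagger$ and $G\mapsto G V^\dagger$ for unitaries $V$ on $\ket{0}^{\perp}$. I will compute
\[
\E[\Pi\otimes(\text{history operator})]
\]
by writing the no-$B$ branch density as a polynomial of degree $m$ in $(I-\Pi)$ and $G$. Then I apply a Weingarten-type, or conditional-expectation, argument: given the $m$ factors of $(I-\Pi)$ sandwiched in the history, the conditional expectation of $\Pi$ is dominated by $\frac{r-1}{(N-1)-m(r-1)}(I-\Pi_{\text{prev}})$.

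A concrete route is induction on $m$ using the identity $\E[\Pi\mid\cdot]$ for a Haar subspace avoiding a fixed subspace. At each step, the worst case is that the previous $m$ no-$B$ outcomes exclude $m$ copies of an $(r-1)$-dimensional candidate region, which yields the denominator $(N-1)-m(r-1)$. I expect making this conditional-Haar step rigorous for adaptive, purified queries to be the hardest part. I would first try to prove it via an operator inequality on $\E_\Pi[(I-\Pi)^{\otimes}\cdots\Pi]$, reducing it to a one-query statement iterated $m$ times.
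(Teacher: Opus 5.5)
There is a genuine gap: the inequality is never actually established. Step~1 is fine, but Step~2's bound $\E\|\Pi v\|^2\le\frac{r-1}{\dim\mathcal K_m^{\perp}}\|v\|^2$ needs two things that do not exist here: a vector $v$ chosen independently of $\Pi$, and a definite subspace $\mathcal K_m$ on which to condition. The no-$B$ branch is produced by applying the Kraus operator $U^{(0)}_{\theta,\Pi,G}$ coherently $m$ times, so no subspace has been ``revealed.'' Moreover, the tested vector $\ket{\Phi_m}$ depends on $\Pi$ in two ways: through the factors $I-\Pi$, and through $G_\Pi$, whose domain $(I-\Pi)\ket{0}^{\perp}$ is fixed by $\Pi$. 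This adaptive correlation is exactly what the lemma has to overcome. Your denominator $(N-1)-m(r-1)$ is asserted (``the worst case excludes $m$ copies of an $(r-1)$-dimensional region'') rather than derived. Step~3 then lists possible routes (symmetry, Weingarten calculus, induction) without carrying any of them out.

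The missing ingredient is a way to integrate out $G_\Pi$ so that the remaining $\Pi$-dependence becomes a positive pairing with moments of $\Pi$ alone. Write $n=N-1$ and $k=r-1$. The paper first averages over the common phase $\theta$, which splits the branch into sectors $L_{m,g}J(G_\Pi)^{\otimes g}L_{m,g}^\dagger$ for $0\le g\le m$, with $L_{m,g}$ independent of $(\Pi,G_\Pi)$. A Schur--Weyl computation then gives $\E_{G}[J(G_\Pi)^{\otimes g}]=\mathfrak T_g\bigl((I-\Pi^\top)^{\otimes g}\bigr)$, where $\mathfrak T_g$ is completely positive and independent of $\Pi$. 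Combined with the Choi pairing, this yields
\[
p^{\mathrm{first}}_{m+1}=\sum_{g=0}^m\Tr\!\left[Y_g\,\E_\Pi\!\left[\Pi_0^\top\otimes(I-\Pi^\top)^{\otimes g}\right]\right],\qquad
s_m=\sum_{g=0}^m\Tr\!\left[Y_g\left(I_0\otimes\E_\Pi\!\left[(I-\Pi^\top)^{\otimes g}\right]\right)\right],
\]
with $Y_g=(\id\otimes\mathfrak T_g^\dagger)(Z_{L_{m,g}})\succeq0$ independent of $\Pi$. The bound then follows from the operator inequality
$\E[\Pi_0^\top\otimes(I-\Pi^\top)^{\otimes g}]\preceq\frac{k}{n}I_0\otimes I\preceq\frac{k}{n-gk}I_0\otimes\E[(I-\Pi^\top)^{\otimes g}]$.
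This uses $\E[\Pi]=\frac{k}{n}I$ and the union bound $I-(I-\Pi)^{\otimes g}\preceq\sum_{j}\Pi^{(j)}$. The factor $m(r-1)$ comes from this union bound over $g\le m$ tensor factors, not from a dimension count. Your remarks about a degree-$m$ polynomial and an operator inequality for $\E_\Pi[\Pi\otimes(I-\Pi)^{\otimes\cdot}]$ point toward the first and last steps. Without the $\Pi$-independent moment map, however, that operator inequality cannot be applied to the adaptive history.
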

The proof is deferred to Appendix~\ref{app:hidden-subspace}.

Using Eq.~\eqref{eq: probability sequence} and Lemma~\ref{lem:first-B-prob}, we obtain an upper bound on $p_{\mathrm{hit}}$.
\begin{restatable}{lemma}{firstBbound}\label{lem:first-B}
    If $q(r - 1)<N-1$, the probability in Eq.~\eqref{eq:p-hit-definition} satisfies
    \begin{equation}\label{eq:first-B-bound}
        p_{\mathrm{hit}}
        \leq\frac{q(r - 1)}{N-1}.
    \end{equation}
\end{restatable}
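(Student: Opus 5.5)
The plan is to telescope Eq.~\eqref{eq: probability sequence} and control each increment using Lemma~\ref{lem:first-B-prob}. We work with the sequence $s_0=1\geq s_1\geq\cdots\geq s_q=1-p_{\mathrm{hit}}$. The hypothesis $q(r-1)<N-1$ implies $m(r-1)<N-1$ for every $m\in\{0,\ldots,q-1\}$, so Lemma~\ref{lem:first-B-prob} applies at every step. Writing $n=N-1$ and $k=r-1$ to lighten notation, the lemma gives
\begin{equation}
    s_{m+1}=s_m-p^{\mathrm{first}}_{m+1}\geq s_m\left(1-\frac{k}{n-mk}\right)=s_m\,\frac{n-(m+1)k}{n-mk}.
\end{equation}

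Iterating this from $m=0$ to $m=q-1$, the product telescopes:
\begin{equation}
    s_q\geq\prod_{m=0}^{q-1}\frac{n-(m+1)k}{n-mk}=\frac{n-qk}{n}=1-\frac{q(r-1)}{N-1}.
\end{equation}
Every factor is positive because $(m+1)k\leq qk<n$, so multiplying the inequalities is legitimate. Since $s_m\geq0$ throughout, no sign issues arise. Hence
\begin{equation}
    p_{\mathrm{hit}}=1-s_q\leq\frac{q(r-1)}{N-1},
\end{equation}
which is Eq.~\eqref{eq:first-B-bound}.

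The argument could also be phrased as a sum. One would write $p_{\mathrm{hit}}=\sum_m p^{\mathrm{first}}_{m+1}$ and bound each term by $\frac{k}{n-mk}s_m$. That route, however, would still need the lower bound on $s_m$ to avoid a harmonic-type loss, so the multiplicative telescoping form above is the cleaner one.

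There is essentially no obstacle in this step. All of the difficulty, namely handling adaptive inputs against the randomly embedded subspace, has been absorbed into Lemma~\ref{lem:first-B-prob}. The only points to check are that the bound in that lemma is multiplicative in $s_m$, which is what makes the telescoping exact, and that the denominators stay positive under the stated hypothesis.
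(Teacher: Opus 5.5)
Your proposal is correct and follows essentially the same argument as the paper: you apply Lemma~\ref{lem:first-B-prob} at each step to get $s_{m+1}\geq\frac{(N-1)-(m+1)(r-1)}{(N-1)-m(r-1)}\,s_m$, telescope the product from $s_0=1$ to obtain $s_q\geq 1-\frac{q(r-1)}{N-1}$, and conclude via $p_{\mathrm{hit}}=1-s_q$. Your added checks that each factor is nonnegative and that the hypothesis $q(r-1)<N-1$ covers every $m\leq q-1$ are exactly the conditions that make the iteration valid.
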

\begin{proof}
    By Eq.~\eqref{eq: probability sequence}, $p_{m+1}^{\mathrm{first}}=s_m-s_{m+1}$, so
    \begin{align}
        s_{m+1} &= s_m - p_{m+1}^{\mathrm{first}} \\
        &\geq \left(1- \frac{r-1}{(N - 1)-m(r - 1)}\right)s_m \\
        &= \frac{(N - 1) -(m+1)(r-1)}{(N - 1)-m(r - 1)}s_m.
    \end{align}
    Iterating from $s_0=1$ gives the following product
    \begin{equation}
        s_q
        \geq
        \prod_{m=0}^{q-1}
        \frac{(N - 1) -(m+1)(r-1)}{(N - 1)-m(r - 1)}
        =\frac{(N - 1) - q(r-1)}{N - 1}.
    \end{equation}
    Therefore,
    \begin{equation}
        p_{\mathrm{hit}}=1-s_q\leq\frac{q(r-1)}{N-1}.
    \end{equation}
\end{proof}

This lemma allows us to upper-bound the weight of the sectors containing the bad block.
\begin{restatable}[Weight of the sectors containing $B$]{lemma}{badblockbound}\label{lem:bad-sector-bound}
If $q(r-1)<N-1$, then
\begin{equation}\label{eq:bad-sector-main-bound}
    \E_{\Pi,G,B}\!\left[
      \sum_{a=0}^q\sum_{b=1}^{q-a}\norm{\ket{\Psi_{a,b}}}^2
    \right]
    \leq\frac{q(r-1)}{N-1}.
\end{equation}
\end{restatable}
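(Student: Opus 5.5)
The statement follows by combining two facts already established: the identity Eq.~\eqref{eq:bad-weight-hit} and the bound of Lemma~\ref{lem:first-B}.

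First, recall what Eq.~\eqref{eq:bad-weight-hit} gives. Fix $\Pi,G,B,\theta$. The probability that all $q$ monitored outcomes are no-$B$ equals $\norm{\sum_a e^{i(q-a)\theta}\ket{\Psi_{a,0}}}^2$. Averaging over the common phase $\theta$ removes the cross terms and leaves $\sum_a\norm{\ket{\Psi_{a,0}}}^2$.

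Next, the phase-averaged state in Eq.~\eqref{eq:sector-decomposition-main} is the average of the pure state $\proj{\Psi(\theta,\phi)}$, and $\ket{\Psi(\theta,\phi)}$ is obtained by applying unitaries to $\ket{\mathrm{init}}$. This state therefore has unit trace, so
\[
\sum_a\norm{\ket{\Psi_{a,0}}}^2+\sum_a\sum_{b\ge1}\norm{\ket{\Psi_{a,b}}}^2=1
\]
for every fixed $\Pi,G,B$. Subtracting and averaging over $\Pi,G,B$ shows that the left-hand side of Eq.~\eqref{eq:bad-sector-main-bound} equals $p_{\mathrm{hit}}$.

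Finally, the hypothesis $q(r-1)<N-1$ is exactly the condition of Lemma~\ref{lem:first-B}, which gives $p_{\mathrm{hit}}\le q(r-1)/(N-1)$.

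Two points need checking. The first is that the monitored experiment uses the same realization of $(\Pi,G,B,\theta,\phi)$ as the decomposition in Lemma~\ref{lem:haar-block-decomposition}. The second is that the identity holds pointwise in $(\Pi,G,B)$ before averaging, so exchanging expectations is harmless; all quantities are nonnegative and bounded. Both hold by construction, so the only real content, and the main difficulty, is already in Lemma~\ref{lem:first-B-prob}. That lemma controls adaptive queries through the conditional Haar-randomness of $\Pi$ given the no-$B$ history, and here it is taken as given.
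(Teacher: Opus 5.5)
Your proposal is correct and matches the paper's proof: you identify the averaged bad-sector weight with $p_{\mathrm{hit}}$ via Eq.~\eqref{eq:bad-weight-hit}, re-deriving the $\theta$-average and the unit-trace normalization, and then apply Lemma~\ref{lem:first-B}. One minor remark that does not affect your argument: the paper proves Lemma~\ref{lem:first-B-prob} with the Haar-moment map $\mathfrak T_g$ of Lemma~\ref{lem:haar-moment-cp} and the operator inequality of Lemma~\ref{lem:haar-projector}, not with an explicit conditioning argument on $\Pi$.
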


\begin{proof}
By Eq.~\eqref{eq:bad-weight-hit}, the total weight of the sectors with $b\neq0$ equals the probability that the monitored process enters the hidden subspace $\Pi \ket{0}^{\perp}$ at least once.
Lemma~\ref{lem:first-B} gives
\begin{equation}
    \E_{\Pi,G,B}\!\left[
      \sum_{a=0}^q\sum_{b=1}^{q-a}
      \norm{\ket{\Psi_{a,b}}}^2
    \right] = p_{\mathrm{hit}} \leq\frac{q(r-1)}{N-1}.
\end{equation}
\end{proof}

We now have all the ingredients needed to prove Theorem~\ref{thm:copy-reduction}.
We complete its proof below.

\begin{proof}[Proof of Theorem~\ref{thm:copy-reduction}]
For readability, we write
\begin{equation}
    \bar\rho_0=\E_{\Pi,G}[\rho_0],
    \qquad
    \bar\rho_B=\E_{\Pi,G,B}[\rho_B].
\end{equation}
By Eq.~\eqref{eq:sector-decomposition-main}, the Haar-averaged output of the original quantum algorithm is $\bar\rho_0+\bar\rho_B$.
By Lemma~\ref{lem:no-bad-simulation}, the simulator using $q$ copies can output $\bar\rho_0$ on its success branch.
Thus, the trace deficit on this input is $1-\Tr\bar\rho_0=\Tr\bar\rho_B$.
Completing the map to a channel by outputting a fixed state $\tau$ on failure gives the output $\bar\rho_0+\Tr(\bar\rho_B)\tau$.
Therefore,
\begin{align}
    \Dtr(\bar\rho_0+\bar\rho_B,\bar\rho_0+\Tr(\bar\rho_B)\tau)
    &=\frac12\norm{\bar\rho_B-\Tr(\bar\rho_B)\tau}_1\\
    &\leq \Tr(\bar\rho_B)\\
    & = \E_{\Pi,G,B}\!\left[
      \sum_{a=0}^q\sum_{b=1}^{q-a}
      \norm{\ket{\Psi_{a,b}}}^2
    \right] \\
    &\leq\frac{q(r-1)}{N-1},
\end{align}
where the last inequality follows from Lemma~\ref{lem:bad-sector-bound}.
This proves the theorem.
\end{proof}

\section{Lower bounds for quantum PAC learning with forward unitary queries}\label{sec:pac-consequences}
In this section, we establish query lower bounds for realizable and agnostic PAC learning in the model with forward-only access to a unitary that generates the quantum examples defined in Section~\ref{sec:pac-learning}.
Both proofs use Theorem~\ref{thm:copy-reduction} to approximately reproduce the Haar-averaged output of a learner with forward access from $q$ copies of a quantum example.
This allows us to apply the known lower bounds~\cite{arunachalam2018optimal} in the copy model of quantum examples.
We consider realizable learning in Section~\ref{sec:realizable}
and agnostic learning in Section~\ref{sec:agnostic}.

In the hard instances constructed below, we choose an oracle space and a known representation in which the initial state $|0\rangle$ is orthogonal to all example basis states.
Such a representation exists in every sufficiently large permitted dimension and is fixed independently of the unknown concept and distribution.
The known isometric encoding preserves the copy complexity.
Consequently, the signal subspaces used below satisfy $\mathcal S\subseteq|0\rangle^\perp$, as required by Theorem~\ref{thm:copy-reduction}.

\subsection{Forward-query lower bounds for realizable learning}\label{sec:realizable}
Let $d=\vc(\cC)\geq2$, and fix a set shattered by $\cC$ consisting of $d$ points $x_1,\ldots,x_d$. When a distribution $D$ places mass only on these points, the quantum example for the realizable learning problem $(c,\mathcal D)$ has the following form:
\begin{equation}
    \ket{\psi_{\mathcal{D},c}}
    =\sum_{i=1}^{d}\sqrt{\mathcal{D}(x_i)}\,\ket{x_i,c(x_i)}.
\end{equation}

For a given learning problem $(c,\mathcal{D})$, the label of each $x_i$ is fixed, but as the learning problem varies, $c(x_i)$ can be either $0$ or $1$.
Therefore, we can choose the following signal subspace containing all the quantum examples for these learning problems:
\begin{equation}
    \cS\coloneqq \operatorname{span}
    \bigl\{\ket{x_i,y}:i\in[d],\ y\in\{0,1\}\bigr\},
    \qquad r\coloneqq\dim\cS\leq2d<\infty.
\end{equation}
Importantly, $r$ is finite.

Apply part (i) of Lemma~\ref{lem:qsample-copy-bound} to the set fixed above. For sufficiently large $d$, consider learners that, on every realizable $(c,\mathcal D)$, have failure probability at most $\delta$ ($0<\delta\leq3/8$). Any such learner using $q$ copies must satisfy the following bound for some $\delta$-independent constant $c_0>0$:
\begin{equation}\label{eq:realizable-known-copy-bound}
    q\geq c_0\,\frac{d+\log(1/\delta)}{\varepsilon}.
\end{equation}
For smaller $d$, we use the supplementary remark immediately following the lemma. Below, we use Theorem~\ref{thm:copy-reduction} to transfer this copy lower bound to forward access.

\begin{theorem}[Realizable PAC complexity with forward-only access]
\label{thm:realizable-main}
Let $\cC$ be a Boolean concept class with finite VC dimension $d=\vc(\cC)\geq2$, and let $0<\varepsilon\leq1/32$ and $0<\delta\leq1/4$.
In the forward-only access model of Definition~\ref{def:state-preparation-access}, any realizable $(\varepsilon,\delta)$-PAC learner requires
\begin{equation}\label{eq:realizable-lb}
    q=\Omega\!\left(
        \frac{d+\log(1/\delta)}{\varepsilon}
    \right)
\end{equation}
queries in the worst case over permitted ambient dimensions, realizable learning instances, and compatible unitary completions.
\end{theorem}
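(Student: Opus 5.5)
We argue by contradiction via Theorem~\ref{thm:copy-reduction}. Suppose a forward-only learner family $(\mathcal A_N)_N$ is a realizable $(\varepsilon,\delta)$-PAC learner with worst-case query count $q$, uniform in $N$. Fix the shattered set $\{x_1,\ldots,x_d\}$ and the signal subspace $\cS$ of dimension $r\leq 2d$. Using the known isometric encoding fixed at the start of Section~\ref{sec:pac-consequences}, we have $\cS\subseteq\ket{0}^\perp$. Since $q$ and $r$ are finite, we choose $N$ large enough that
\begin{equation}
    \frac{q(r-1)}{N-1}\leq\frac{\delta}{2},
\end{equation}
which also guarantees the hypothesis $q(r-1)<N-1$. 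We then work with $\mathcal A_N$, padded to exactly $q$ unconditional queries as explained in Section~\ref{sec:forward-model}.

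First, we pass from a worst-case guarantee to an average-case one. For every realizable $(c,\mathcal D)$ supported on the shattered set, the learner succeeds with probability at least $1-\delta$ for every $U$ with $U\ket{0}=\ket{\psi_{\mathcal D,c}}$. Hence it also succeeds with probability at least $1-\delta$ on the Haar-averaged output $\E_{U\sim\HC_N(\psi)}[\rho_{\mathcal A_q}(U)]$, because the success probability is linear in the output state. Second, we apply Theorem~\ref{thm:copy-reduction}, which gives a single channel $\mathfrak S_q$ that does not depend on $\ket{\psi}\in\cS$. Define a copy learner that feeds $q$ copies of $\ket{\psi_{\mathcal D,c}}$ into $\mathfrak S_q$ and then measures the hypothesis register exactly as $\mathcal A_N$ would. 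The success event is a two-outcome POVM on the output, so its probability changes by at most the trace distance, which is at most $\delta/2$. The resulting copy learner is therefore a realizable $(\varepsilon,3\delta/2)$-PAC learner for every distribution supported on the shattered set, using $q$ copies of $\mathrm{QPEX}(c,\mathcal D)$.

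Third, we invoke Lemma~\ref{lem:qsample-copy-bound}(i), together with the remark that its lower bound already holds for distributions supported on a fixed shattered set. Since $\delta\leq1/4$, we have $\delta'=3\delta/2\leq3/8<1/2$, so the implicit constant can be taken uniform over this range, as in Eq.~\eqref{eq:realizable-known-copy-bound}. Since $\varepsilon\leq1/32<1/20$, the lemma applies and gives
\begin{equation}
    q\geq c_0\frac{d+\log(1/\delta')}{\varepsilon}.
\end{equation}
Finally, $\log(1/\delta')=\log(1/\delta)-\log(3/2)$, and for $\delta\leq1/4$ this is at least a constant fraction of $\log(1/\delta)$. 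So the constant-factor change in the failure probability is absorbed, and the bound becomes $\Omega((d+\log(1/\delta))/\varepsilon)$. For small $d$, we use the supplementary remark that two-state discrimination yields the $\log(1/\delta)/\varepsilon$ term, and this term absorbs bounded $d$.

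The main obstacle I expect is bookkeeping rather than mathematics. Three points need care. The choice of $N$ depends on $q$, which is legitimate only because $q$ is a uniform bound over all $N$ in the worst-case convention. The simulator must be independent of the instance, which Theorem~\ref{thm:copy-reduction} provides. And the copy lower bound's constants must be uniform in $\delta$ up to $3/8$, which is exactly what the statement of Lemma~\ref{lem:qsample-copy-bound} guarantees.
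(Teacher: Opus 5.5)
Your proposal is correct and follows essentially the same route as the paper: you choose $N$ so that $q(r-1)/(N-1)\leq\delta/2$, average the worst-case guarantee over Haar-random completions, pass to the $\psi$-independent $q$-copy simulator of Theorem~\ref{thm:copy-reduction} with failure probability at most $3\delta/2$, and apply Lemma~\ref{lem:qsample-copy-bound}(i) using $\log(1/\delta')\geq\frac12\log(1/\delta)$ for $\delta\leq1/4$. The bookkeeping points you flag are exactly the ones the paper's proof relies on: uniformity of $q$ in $N$, instance-independence of the simulator, and $\delta$-uniform constants up to $3/8$.
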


\begin{proof}
Fix a realizable learner family $(\mathcal A_N)_N$ with finite worst-case query count $q$, as discussed in Section~\ref{sec:samplecomplexity summerize}. 
Write the dimension of the fixed signal subspace $\cS$ as $r$.
Choose a permitted ambient dimension $N$ sufficiently large that
\begin{equation}
    \frac{q(r-1)}{N-1}\leq\frac{\delta}{2}.
\end{equation}
Consider the corresponding learner $\mathcal A_N$.
It uses at most $q$ queries and succeeds with probability at least $1-\delta$ for every realizable instance and every compatible unitary completion in this dimension.
By adding dummy queries when necessary, we may regard it as making exactly $q$ queries.

Fix any learning problem $(c,\mathcal D)$ in the restricted family above, and let its quantum example be $\ket{\psi}=\ket{\psi_{\mathcal{D},c}}\in\cS$.
The assumed learner succeeds whenever $U\ket{0}=\ket{\psi}$, for every state-preparation unitary $U$.
Therefore, even after choosing $U$ as a Haar-random completion and averaging, its success probability is at least $1-\delta$.

Theorem~\ref{thm:copy-reduction} provides a simulator that, given $q$ copies of $\ket{\psi}$, reproduces this averaged output.
Since the trace distance between the two outputs is at most $q(r-1)/(N-1)$, the difference between the probabilities of the success event that the output hypothesis has error at most $\varepsilon$ is also at most $q(r-1)/(N-1)$.
Hence, the success probability of the copy simulator is bounded below as follows:
\begin{equation}
    1-\delta-\frac{q(r-1)}{N-1}
    \geq 1-\frac{3\delta}{2}.
\end{equation}

Since the simulator in the theorem does not depend on $\ket{\psi}$, this success guarantee holds simultaneously for all $(c,\mathcal D)$ in the restricted family above.

Setting $\delta'=3\delta/2\leq3/8$, we can apply Eq.~\eqref{eq:realizable-known-copy-bound}.
Moreover, for $\delta\leq1/4$, we have $\log(1/\delta')\geq\frac12\log(1/\delta)$, which gives the following bound:
\begin{equation}
    q\geq c_0\,\frac{d+\log(1/\delta')}{\varepsilon}
    \geq\frac{c_0}{2}\,\frac{d+\log(1/\delta)}{\varepsilon}.
\end{equation}

This is Eq.~\eqref{eq:realizable-lb}.
\end{proof}

\subsection{Forward-query bounds for agnostic learning}\label{sec:agnostic}
Let $d=\vc(\cC)\geq1$, and fix a set shattered by $\cC$ consisting of the points $x_1,\ldots,x_d$.
The quantum example for any distribution $\cD$ whose inputs are restricted to these points is
\begin{equation}
    \ket{\psi_{\cD}}
    =\sum_{i=1}^{d}\sum_{y\in\{0,1\}}
      \sqrt{\cD(x_i,y)}\,\ket{x_i,y}.
\end{equation}
Thus, as in the realizable case in Section~\ref{sec:realizable}, even as the distribution $\cD$ varies, the quantum examples belong to the following fixed signal subspace:
\begin{equation}
    \cS\coloneqq\operatorname{span}
    \bigl\{\ket{x_i,y}:i\in[d],\ y\in\{0,1\}\bigr\},
    \qquad r\coloneqq\dim\cS\leq2d<\infty.
\end{equation}

Again, only the finiteness of $r$ will be used below.

Part (ii) of Lemma~\ref{lem:qsample-copy-bound} applies to distributions supported on the set fixed above, even when improper outputs are allowed. For sufficiently large $d$, consider learners that, for every $\cD$, have failure probability at most $\delta$ ($0<\delta\leq3/8$). Any such learner using $q$ copies must satisfy the following bound for some $\delta$-independent constant $c_0>0$:
\begin{equation}\label{eq:agnostic-known-copy-bound}
    q\geq c_0\,\frac{d+\log(1/\delta)}{\varepsilon^2}.
\end{equation}
For smaller $d$, we use the supplementary remark immediately following the lemma.

\begin{theorem}[Agnostic PAC complexity with forward-only access]
\label{thm:agnostic-main}
Let $\cC$ be a Boolean concept class with finite VC dimension $d=\vc(\cC)\geq1$, and let $0<\varepsilon\leq1/16$ and $0<\delta\leq1/4$.
In the forward-only access model of Definition~\ref{def:state-preparation-access}, any $(\varepsilon,\delta)$-agnostic PAC learner requires
\begin{equation}\label{eq:agnostic-lb}
    q=\Omega\!\left(
        \frac{d+\log(1/\delta)}{\varepsilon^2}
    \right)
\end{equation}
queries in the worst case over permitted ambient dimensions, joint example distributions, and compatible unitary completions.
\end{theorem}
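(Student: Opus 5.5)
The proof follows the same reduction as Theorem~\ref{thm:realizable-main}, with the realizable copy bound replaced by part (ii) of Lemma~\ref{lem:qsample-copy-bound}.

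First, fix an agnostic learner family $(\mathcal A_N)_N$ with finite worst-case query count $q$, uniform in $N$. Let $r=\dim\cS\leq 2d$ be the dimension of the fixed signal subspace above, which is placed inside $\ket{0}^\perp$ by the known representation. Choose a permitted ambient dimension $N$ large enough that $q(r-1)/(N-1)\leq\delta/2$; this is possible because $q$ and $r$ do not depend on $N$. Pad $\mathcal A_N$ with dummy queries so that it makes exactly $q$ unconditional queries.

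Second, fix any joint distribution $\cD$ supported on $\{x_1,\ldots,x_d\}\times\{0,1\}$, so that $\ket{\psi_\cD}\in\cS$. The learner succeeds with probability at least $1-\delta$ for every $U$ with $U\ket{0}=\ket{\psi_\cD}$. Hence the success probability of the Haar-averaged output $\E_{U\sim\HC_N(\psi_\cD)}[\rho_{\mathcal A_N}(U)]$ is also at least $1-\delta$. Here the success event is that the output hypothesis satisfies $\err_\cD(h)\leq\opt_\cC(\cD)+\varepsilon$, which corresponds to a two-outcome measurement on the classical output register.

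Third, apply Theorem~\ref{thm:copy-reduction} to obtain the $\psi$-independent channel $\mathfrak S_q$. A trace distance of at most $\delta/2$ changes the probability of this event by at most $\delta/2$, so $\mathfrak S_q$ applied to $q$ copies of $\ket{\psi_\cD}$ is a $q$-copy learner with failure probability at most $3\delta/2$. Because $\mathfrak S_q$ does not depend on $\ket{\psi_\cD}$, this single copy learner works simultaneously for every $\cD$ in the restricted family.

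Finally, set $\delta'=3\delta/2\leq 3/8$ and invoke Eq.~\eqref{eq:agnostic-known-copy-bound}. The condition $\varepsilon\leq1/16<1/10$ keeps us in the range of part (ii), and improper outputs are permitted, as remarked after Lemma~\ref{lem:qsample-copy-bound}. Using $\log(1/\delta')\geq\tfrac12\log(1/\delta)$ for $\delta\leq1/4$, this gives
\begin{equation}
q\geq c_0\frac{d+\log(1/\delta')}{\varepsilon^2}\geq \frac{c_0}{2}\frac{d+\log(1/\delta)}{\varepsilon^2}.
\end{equation}
For small $d$, including $d=1$, the supplementary remark following Lemma~\ref{lem:qsample-copy-bound} absorbs the $d$ term into the two-state discrimination bound, so the same conclusion holds.

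The main obstacle is bookkeeping rather than new analysis. One point to check is that the lower-bound family of Lemma~\ref{lem:qsample-copy-bound}(ii) really lies in the fixed subspace $\cS$ for every $\cD$ in it, with one $N$ chosen uniformly for all of them. The other is that the constant $c_0$ stays uniform as $\delta'$ ranges over $(0,3/8]$. Both points are covered by the remarks following Lemma~\ref{lem:qsample-copy-bound}.
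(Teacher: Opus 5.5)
Your proposal is correct and follows essentially the same route as the paper's proof. You choose $N$ with $q(r-1)/(N-1)\leq\delta/2$, pass to the Haar-averaged output, use the $\psi$-independent simulator of Theorem~\ref{thm:copy-reduction} to get a $q$-copy learner with failure probability at most $3\delta/2$, and invoke part (ii) of Lemma~\ref{lem:qsample-copy-bound} together with $\log(1/\delta')\geq\frac12\log(1/\delta)$; the bookkeeping points you flag (restricted support family, uniform constant, small $d$) are exactly those the paper settles by the remarks following that lemma.
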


\begin{proof}
Fix an agnostic learner family $(\mathcal A_N)_N$ with finite worst-case query count $q$, as defined in Section~\ref{sec:samplecomplexity summerize}.
For the fixed signal subspace of dimension $r$, choose a permitted ambient dimension $N$ sufficiently large that $q(r-1)/(N-1)\leq\delta/2$.
Consider the corresponding learner $\mathcal A_N$.
It uses at most $q$ queries and succeeds with probability at least $1-\delta$ for every joint example distribution and every compatible unitary completion in this dimension.
By adding dummy queries when necessary, we may regard it as making exactly $q$ queries.

Fix any distribution $\cD$ in the restricted family above.
Since this learner succeeds for every unitary satisfying $U\ket{0}=\ket{\psi_{\cD}}$, its success probability is at least $1-\delta$ even after averaging with respect to Haar measure.
By Theorem~\ref{thm:copy-reduction}, the trace distance from the output of the simulator using $q$ copies is at most $q(r-1)/(N-1)\leq\delta/2$, so the failure probability of the simulator is at most $\delta'=3\delta/2$.
This simulator does not depend on $\ket{\psi_{\cD}}$, and the same guarantee holds for every $\cD$.

Since $\delta'\leq3/8$, we can apply Eq.~\eqref{eq:agnostic-known-copy-bound}.
Furthermore, $\log(1/\delta')\geq\frac12\log(1/\delta)$ implies Eq.~\eqref{eq:agnostic-lb}.
\end{proof}

\acknowledgments{
\begin{sloppypar}
GPT-5.6 sol Pro was used to assist with the preparation of the manuscript and the proof of Lemma~\ref{lem:first-B-prob}.
This work was supported by JST BOOST, Japan Grant Number JPMJBS2418, MEXT Quantum Leap Flagship Program (MEXT QLEAP) JPMXS0118069605 and JPMXS0120351339; Japan Science and Technology Agency (JST) as part of Adopting Sustainable Partnerships for Innovative Research Ecosystem (ASPIRE), Grant Number JPMJAP25A3; JST CREST, Grant Number JPMJCR25I5; JST NEXUS, Grant Number JPMJNX26C9; JSPS KAKENHI Grant No. 23K21643 and 26K25550; and IBM Quantum.
\end{sloppypar}
}

\section*{Author contributions}
N. I. led the research and prepared the manuscript draft.
S. Y. and M. M. contributed to discussions and reviewed the manuscript.

\bibliographystyle{unsrtnat}
\bibliography{main}

\appendix

\section{Choi representation}\label{subsec:choi}
For a linear map $\Phi: \mathsf{L}(\mathcal{I}) \to \mathsf{L}(\mathcal{O})$, the Hilbert--Schmidt adjoint $\Phi^\dagger$ is uniquely determined as the following linear map:
\begin{equation}
\Phi^\dagger: \mathsf{L}(\mathcal{O}) \longrightarrow \mathsf{L}(\mathcal{I})
\end{equation}
This map satisfies
\begin{equation}\label{eq:superoperator-adjoint}
\Tr\!\left[Y^\dagger\Phi(X)\right] = \Tr\! \left[ \bigl(\Phi^\dagger(Y)\bigr)^\dagger X \right]
\end{equation}
for all $X\in\mathsf{L}(\mathcal{I})$ and $Y\in\mathsf{L}(\mathcal{O})$.

\paragraph{Choi operator.}
Consider the linear map
\begin{equation}
\Phi: \mathsf{L}(\mathcal{I}) \longrightarrow \mathsf{L}(\mathcal{O})
\end{equation}
Its Choi operator (also called its Choi matrix) is defined by
\begin{equation}\label{eq:choi-definition}
J_\Phi = \sum_{i,j} \ketbra{i}{j}_{\mathcal{I}} \otimes \Phi\!\left(\ketbra{i}{j}\right)_{\mathcal{O}}.
\end{equation}
Thus,
\begin{equation}
J_\Phi
\in
\mathsf{L}(\mathcal{I}\otimes\mathcal{O}).
\end{equation}

The map $\Phi$ can be recovered from its Choi operator by
\begin{equation}\label{eq:choi-reconstruction}
\Phi(X) = \Tr_{\mathcal{I}}\!\left[ \left( X^\top\otimes I_{\mathcal{O}} \right)J_\Phi \right].
\end{equation}
The transpose in Eq.~\eqref{eq:choi-reconstruction} is taken with respect to the fixed computational basis used in Eq.~\eqref{eq:choi-definition}.

An equivalent identity is the following Choi pairing formula:
\begin{equation}\label{eq:choi-pairing}
\Tr\!\left[ Y\Phi(X) \right] = \Tr\!\left[ J_\Phi \left( X^\top\otimes Y \right) \right].
\end{equation}
This identity holds for all $X\in\mathsf{L}(\mathcal{I})$ and $Y\in\mathsf{L}(\mathcal{O})$.

The Choi representation is useful for characterizing complete positivity and trace conditions for a linear map $\Phi:\mathsf{L}(\mathcal{I})\to\mathsf{L}(\mathcal{O})$.
In particular, $\Phi$ is completely positive if and only if its Choi operator is positive semidefinite:
\begin{equation}\label{eq:choi-cp}
\Phi\text{ is completely positive}
\quad\Longleftrightarrow\quad
J_\Phi\succeq0.
\end{equation}
A completely positive map is trace-preserving if and only if
\begin{equation}\label{eq:choi-cptp}
\Tr_{\mathcal{O}}\!\left[J_\Phi\right] = I_{\mathcal{I}},
\end{equation}
and it is trace-nonincreasing if and only if
\begin{equation}\label{eq:choi-cptni}
\Tr_{\mathcal{O}}\!\left[J_\Phi \right] \preceq I_{\mathcal{I}}.
\end{equation}

\paragraph{Choi vector of a linear operator.}

For a linear operator
\begin{equation}
V: \mathcal{I} \longrightarrow \mathcal{O},
\end{equation}
we define its Choi vector by
\begin{equation}\label{eq:choi-vector}
\ket{V}\!\rangle_{\mathcal{I}\mathcal{O}} \coloneqq \sum_i \ket{i}_{\mathcal{I}} \otimes V\ket{i}_{\mathcal{I}}.
\end{equation}
Define the linear map $\Phi_V$ by
\begin{equation}
    \Phi_V(X) = VXV^\dagger.
\end{equation}
The Choi operator of $\Phi_V$ is defined as the following rank-one operator:
\begin{equation}\label{eq:rank-one-choi}
    J_{\Phi_V} \coloneqq \ket{V}\!\rangle\!\langle\!\bra{V}.
\end{equation}

\section{Schur--Weyl Decomposition and Schur Orthogonality}\label{subsec:schur-weyl}

We briefly summarize Schur--Weyl duality and Schur orthogonality, which are used in the proof of Lemma~\ref{lem:haar-moment-cp} in Appendix~\ref{app:haar-moment-proof} (see also the standard reference on representation theory~\cite{fulton2013representation}).
Let $g\geq1$ be an integer.

\paragraph{Young diagrams.}
A Young diagram is an arrangement of left-aligned boxes whose row lengths are nonincreasing from top to bottom.
It is represented by the following sequence of nonnegative integers:
\begin{equation}
    \lambda=(\lambda_1,\lambda_2,\ldots), \qquad \lambda_1\geq\lambda_2\geq\cdots\geq0,
\end{equation}
where $\lambda_i$ denotes the number of boxes in row $i$.

A Young diagram with exactly $g$ boxes and at most $d$ nonzero rows is denoted by $\lambda\vdash_d g$. That is, this notation means that
\begin{equation}
    \lambda=(\lambda_1,\ldots,\lambda_d),
    \qquad
    \lambda_1\geq\lambda_2\geq\cdots\geq\lambda_d\geq0,
    \qquad
    \sum_{i=1}^{d}\lambda_i=g.
\end{equation}

\paragraph{Schur--Weyl duality.}
On the tensor product space $\left(\mathbb{C}^d\right)^{\otimes g}$, consider the representations of the unitary group $\mathsf{U}(d)$ and the symmetric group $\mathfrak{S}_g$, where $\mathsf{U}(d)$ is the unitary group acting on $\mathbb{C}^d$.
Each $U\in\mathsf{U}(d)$ acts collectively as follows:
\begin{equation}
    U \longmapsto U^{\otimes g}.
\end{equation}
Each permutation $\sigma\in \mathfrak{S}_g$ acts by permuting the tensor factors:
\begin{equation}\label{eq:permutation-action}
    \pi(\sigma) \left( \ket{i_1}\otimes\cdots\otimes\ket{i_g} \right) = \ket{i_{\sigma^{-1}(1)}}\otimes\cdots\otimes \ket{i_{\sigma^{-1}(g)}}.
\end{equation}
By Schur--Weyl duality, the Hilbert space and the respective actions decompose as follows:
\begin{align}\label{eq:schur-weyl-decomposition}
    \left(\mathbb{C}^d\right)^{\otimes g} &\simeq \bigoplus_{\lambda\vdash_d g} \mathcal{U}_\lambda^{(d)} \otimes \mathcal{S}_\lambda, \\
    U^{\otimes g} &\simeq \bigoplus_{\lambda\vdash_d g} U_\lambda \otimes {I}_{\mathcal{S}_\lambda}, \label{eq:schur-decomposition-unitary}\\
    \pi(\sigma) &\simeq  \bigoplus_{\lambda\vdash_d g} {I}_{\mathcal{U}_\lambda^{(d)}} \otimes f_\lambda(\sigma),\label{eq:schur-decomposition-permutation}
\end{align}
Here, $\mathcal{U}_\lambda^{(d)}$ is the irreducible representation space (Weyl module) corresponding to $\lambda$ for $\mathsf{U}(d)$, and $\mathcal{S}_\lambda$ is the irreducible representation space (Specht module) corresponding to $\lambda$ for the symmetric group $\mathfrak{S}_g$. The operator $U_\lambda$ denotes the irreducible representation, acting on $\mathcal{U}_\lambda^{(d)}$, of $U\in\mathsf{U}(d)$, and $f_\lambda(\sigma)$ denotes the irreducible representation, acting on $\mathcal{S}_\lambda$, of $\sigma\in S_g$.

\paragraph{Schur transform.}
A unitary transformation implementing the decomposition in Eq.~\eqref{eq:schur-weyl-decomposition} is called the Schur transform~\cite{harrow2005applicationscoherentclassicalcommunication,PhysRevLett.97.170502,Krovi2019efficienthigh}.
We denote it by
\begin{equation}\label{eq:schur-transform}
    U_{\mathrm{Sch}}^{(g,d)}: \left(\mathbb{C}^d\right)^{\otimes g} \longrightarrow \bigoplus_{\lambda\vdash_d g} \mathcal{U}_\lambda^{(d)} \otimes \mathcal{S}_\lambda.
\end{equation}
That is, the Schur transform is a fixed change of basis from the tensor product basis of $\left(\mathbb{C}^d\right)^{\otimes g}$ to a basis adapted to the irreducible symmetry sectors labeled by $\lambda$.

In the Schur basis, the Schur transform implements the decompositions in Eq.~\eqref{eq:schur-decomposition-unitary} and Eq.~\eqref{eq:schur-decomposition-permutation} as follows:
\begin{align}
    U_{\mathrm{Sch}}^{(g,d)} U^{\otimes g} U_{\mathrm{Sch}}^{(g,d)\dagger} &= \bigoplus_{\lambda\vdash_d g} U_\lambda\otimes I_{\mathcal{S}_\lambda}, \label{eq:schur-unitary-action}\\
    U_{\mathrm{Sch}}^{(g,d)} \pi(\sigma) U_{\mathrm{Sch}}^{(g,d)\dagger} &= \bigoplus_{\lambda\vdash_d g} I_{\mathcal{U}_\lambda^{(d)}} \otimes f_\lambda(\sigma).
    \label{eq:schur-permutation-action}
\end{align}
The relation used most frequently below is Eq.~\eqref{eq:schur-unitary-action}.
It means that, within the sector labeled by $\lambda$, the collective unitary $U^{\otimes g}$ acts nontrivially only on the Weyl module $\mathcal{U}_\lambda^{(d)}$ and acts as the identity on the Specht module $\mathcal{S}_\lambda$.

We also use the corresponding decomposition for isometries between Hilbert spaces of different dimensions~\cite{Yoshida2023universal}.
Consider the map
\begin{equation}
    R:\C^{d}\longrightarrow\C^D
\end{equation}
and suppose that it is an isometry.
In the Schur basis, it takes the form
\begin{equation}\label{eq:schur-isometry-naturality}
    U_{\mathrm{Sch}}^{(g,D)} R^{\otimes g} U_{\mathrm{Sch}}^{(g,d)\dagger} = \bigoplus_{\lambda\vdash_{d} g} R_\lambda\otimes I_{\mathcal{S}_\lambda},
\end{equation}
where the map
\begin{equation}
    R_\lambda: \mathcal{U}_\lambda^{(d)} \longrightarrow \mathcal{U}_\lambda^{(D)}
\end{equation}
is an isometry~\footnote{$R_\lambda$ is not a representation itself, but the isometry induced by $R^{\otimes g}$ on the Weyl module factor.}.

Thus, $R^{\otimes g}$ acts separately on each Weyl module and leaves the corresponding Specht module label unchanged.

In particular, setting
\begin{equation}
    Q=RR^\dagger, \qquad Q_\lambda=R_\lambda R_\lambda^\dagger,
\end{equation}
we obtain from Eq.~\eqref{eq:schur-isometry-naturality} that
\begin{equation}\label{eq:schur-projector-decomposition}
    U_{\mathrm{Sch}}^{(g,D)} Q^{\otimes g} U_{\mathrm{Sch}}^{(g,D)\dagger} = \bigoplus_{\lambda\vdash_{d} g} Q_\lambda\otimes I_{\mathcal{S}_\lambda} \oplus 0.
\end{equation}

\paragraph{Schur orthogonality.}
For $U \in \mathsf{U}(d)$, denote its matrices in the irreducible representations corresponding to the Young diagrams $\lambda$ and $\mu$ by $U_\lambda$ and $U_\mu$, respectively.
In fixed orthonormal bases of the Weyl modules, we have
\begin{equation}\label{eq:schur-orthogonality}
    \int_{\mathsf{U}(d)} (U_\lambda)_{ij} \overline{(U_\mu)_{k\ell}} \;dU = \delta_{\lambda\mu} \frac{\delta_{ik}\delta_{j\ell}}{\dim \mathcal{U}_\lambda^{(d)}},
\end{equation}
where $dU$ denotes the normalized Haar measure.

\section{Proof of Lemma~\ref{lem:haar-block-decomposition}}\label{app:haar-decomposition}
In this section, we prove Lemma~\ref{lem:haar-block-decomposition}.
\haardecomposition*

\begin{proof}[Proof of Lemma~\ref{lem:haar-block-decomposition}]
Let $W:\ket{0}^{\perp}\to\ket{\psi}^{\perp}$ in the following decomposition be a Haar-random isometry:
\begin{equation}
    U=\ketbra{\psi}{0}+WQ_0.
\end{equation}
Let $\cB_\psi = \cS \cap \ket{\psi}^\perp$, and denote the projector onto $W^\dagger\cB_\psi$ by $\Pi$.

Since $W$ is Haar-random, $\Pi$ is uniformly distributed over all projectors on $\ket{0}^{\perp}$ of rank $(r-1)$.
Conditioning on a fixed $\Pi$, orthogonality gives
\begin{equation}
    W(I-\Pi)\ket{0}^{\perp}=\cS^{\perp}, \qquad W\Pi\ket{0}^{\perp}=\cB_\psi,
\end{equation}
because $\ket{\psi}^\perp = \cB_\psi \oplus \cS^\perp$.
Thus, the action of the stabilizer subgroup on the two orthogonal pairs of domain and range blocks, $(I - \Pi)\ket{0}^\perp, \cS^\perp$ and $\Pi\ket{0}^\perp, \cB_\psi$, implies that the following two restrictions are independent Haar-random isometries:
\begin{equation}
    G_\Pi \coloneqq W|_{(I - \Pi)\ket{0}^\perp}, \qquad B_{\Pi,\psi} \coloneqq W|_{\Pi\ket{0}^\perp}.
\end{equation}
The distributions of $\Pi$ and $G_\Pi$ depend only on the fixed space $\cS^{\perp}$ and are therefore independent of $\psi$.
Finally, the Haar distributions of $G_\Pi$ and $B_{\Pi,\psi}$ are invariant under multiplication by independent phases, yielding Eq.~\eqref{eq:note-decomposition}.
\end{proof}

\section{Proof of Lemma~\ref{lem:no-bad-simulation}}\label{app:copy-simulator}
\exactsimulationb*
\begin{proof}[Proof of Lemma~\ref{lem:no-bad-simulation}]
We first construct a simulator for a fixed pair $(\Pi,G_\Pi)$ in Eq.~\eqref{eq:note-decomposition}, where $G_\Pi:(I-\Pi)\ket{0}^{\perp}\to\cS^{\perp}$.
Introduce the query register $\mathsf{Q}$ of the original algorithm, copy registers $\mathsf{C}_1,\ldots,\mathsf{C}_q$, a counter register with basis $\ket{0},\ldots,\ket{q}$ denoted by $\mathsf R$, and a failure qubit $\mathsf{F}$. Initialize the copy registers in $\ket{\psi}^{\otimes q}$, the counter in $\ket{0}$, and the failure qubit in $\ket{0}$.
The query register $\mathsf{Q}$ is the register to which each query to $U$ is applied in the original algorithm.

For $0\leq c\leq q$, define
\begin{equation}\label{eq:Pc-note}
    \pi_c \coloneqq \bigotimes_{j=1}^{c}P_0^{\mathsf C_j} \otimes \bigotimes_{j=c+1}^{q}P_{\cS}^{\mathsf C_j},
\end{equation}
where $\pi_{0} = \bigotimes_{j=1}^{q}P_{\cS}^{\mathsf C_j}$ and $\pi_{q} = \bigotimes_{j=1}^{q}P_0^{\mathsf C_j}$.
For $0 \leq c < q$, set
\begin{align}
    V_c^\psi &=\operatorname{SWAP}_{\mathsf Q,\mathsf C_{c+1}} (P_0^{\mathsf Q}\otimes \pi_c) \otimes\ketbra{c+1}{c}_{\mathsf R} \otimes P_0^{\mathsf F},\label{eq:Vc-psi-note}\\
    V_c^G &=G_\Pi(I-\Pi)Q_0^{\mathsf Q} \otimes \pi_c \otimes\ketbra{c}{c}_{\mathsf R} \otimes P_0^{\mathsf F},\label{eq:Vc-G-note}\\
    V_c^B &=\Pi Q_0^{\mathsf Q} \otimes \pi_c \otimes\ketbra{c}{c}_{\mathsf R} \otimes\ketbra{1}{0}_{\mathsf F}.\label{eq:Vc-B-note}
\end{align}
The operator $V_c^B$ merely records the bad block and does not attempt to reproduce the unknown map $B_{\Pi,\psi}$.

Set
\begin{equation}
    K_0=\sum_{c=0}^{q-1}(V_c^\psi+V_c^G), \qquad K_1=\sum_{c=0}^{q-1}V_c^B,
\end{equation}
and further define
\begin{equation}
    P_{\mathrm{domain}} =\sum_{c=0}^{q-1} I_{\mathsf Q}\otimes \pi_c\otimes\ketbra{c}{c}_{\mathsf R} \otimes P_0^{\mathsf F}.
\end{equation}
The three input subspaces in the query register $\mathsf{Q}$, namely,
the spaces corresponding to $\ket{0}$, $(I-\Pi)\ket{0}^\perp$, and $\Pi \ket{0}^\perp$, are mutually orthogonal by the relation
\begin{equation}
    P_0+(I-\Pi)Q_0+\Pi Q_0=I.
\end{equation}
The outputs are also mutually orthogonal. The map $V_c^\psi$ increments the counter and has its query-register output in $\cS$. The map $V_c^G$ leaves the counter unchanged and has its query-register output in $\cS^\perp$. The map $V_c^B$ sets register $\mathsf F$ to $\ket{1}$.
Outputs corresponding to different values of $c$ are separated by the counter, the pattern of states in the copy registers, or the orthogonal decomposition $\cS\oplus\cS^\perp$.
Therefore,
\begin{align}\label{eq:copy-completeness-note}
    K_0^\dagger K_0 &= \sum_{c=0}^{q- 1} \left( (V_c^\psi)^\dagger V_c^\psi + (V_c^G)^\dagger V_c^G \right)\\ \notag
    & = \sum_{c=0}^{q- 1} (P_0^{\mathsf Q}+(I-\Pi)Q_0^{\mathsf Q})\otimes \pi_c\otimes\ketbra{c}{c}_{\mathsf R} \otimes P_0^{\mathsf F} \preceq I,
\end{align}
\begin{equation}
    K_1^\dagger K_1 = \sum_{c=0}^{q- 1} \Pi Q_0^{\mathsf Q} \otimes \pi_c \otimes\ketbra{c}{c}_{\mathsf R} \otimes P_0^{\mathsf{F}},
\end{equation}
\begin{equation}
    K_0^\dagger K_0 + K_1^\dagger K_1 = P_{\mathrm{domain}}.
\end{equation}
By adding Kraus operators supported on $I-P_{\mathrm{domain}}$ and mapping into the failure subspace spanned by $\ket{1}_\mathsf{F}$, we can complete $K_0$ and $K_1$ to a trace-preserving channel.
This completion depends on the fixed known pair $(\Pi,G_\Pi)$ but not on $\ket{\psi}$.

Extend $A^{(j)}$ by the identity on $\mathsf C_1\cdots\mathsf C_q\mathsf R\mathsf F$, denote this extension by $\widetilde A^{(j)}$, and define the successful $q$-round operator by
\begin{equation}\label{eq:fixed-pair-success-operator}
    W_{\Pi,G}^{(0)} =\widetilde A^{(q+1)}K_0\widetilde A^{(q)} \cdots K_0\widetilde A^{(2)}K_0\widetilde A^{(1)}.
\end{equation}
For an arbitrary state $\rho$ on the copy registers, define
\begin{equation}\label{eq:fixed-pair-success-map}
    \mathcal C_{\Pi,G}^{(0)}(\rho) =\Tr_{\mathsf{C}_1\cdots\mathsf{C}_q\mathsf{R} \mathsf{F}} \left[ W_{\Pi,G}^{(0)} \left( P_{\mathrm{init}}\otimes \rho \otimes P_0^{\mathsf{R}}\otimes P_0^{\mathsf{F}} \right) (W_{\Pi,G}^{(0)})^\dagger
      \right].
\end{equation}
By Eq.~\eqref{eq:copy-completeness-note}, this map is completely positive and trace-nonincreasing.
It is the success branch of the full simulator, which reuses the same pair $(\Pi,G_\Pi)$ in every simulated query.

Define
\begin{equation}\label{eq:Gamma-note}
    \ket{\Gamma_c^\psi} \coloneqq \ket{0}^{\otimes c}\otimes\ket{\psi}^{\otimes(q-c)}.
\end{equation}
For a successful history with counter value $c$, the state of the copy registers is exactly $\ket{\Gamma_c^\psi}$.
The map $V_c^\psi$ implements $\ketbra{\psi}{0}$ by swapping the next unused copy into the query register, while $V_c^G$ implements the fixed good block $G_\Pi$.
After $q$ query, we obtain
\begin{align}\label{eq:copy-success-note}
    &W_{\Pi,G}^{(0)} \left( \ket{\mathrm{init}} \otimes\ket{\psi}^{\otimes q}_{\mathsf C_1\cdots\mathsf C_q} \otimes\ket{0}_{\mathsf R}\otimes\ket{0}_{\mathsf F} \right)\nonumber\\
    &\qquad= \sum_{a=0}^q \ket{\Psi_{a,0}(\Pi,G)}\otimes \ket{\Gamma_a^\psi}_{\mathsf C_1\cdots\mathsf C_q} \otimes\ket{a}_{\mathsf R}\otimes\ket{0}_{\mathsf F}.
\end{align}
Here, to make explicit the dependence of $\ket{\Psi_{a,0}}$ on the fixed pair $(\Pi,G_\Pi)$, we have written $\ket{\Psi_{a,0}(\Pi,G)}$.
All sequences in which the number of $\psi$ blocks equals the same value $a$ share the auxiliary state $\ket{\Gamma_a^\psi}_{\mathsf C_1\cdots\mathsf C_q} \otimes\ket{a}_{\mathsf R}\otimes\ket{0}_{\mathsf F}$ and are therefore added coherently.
Different values of $a$ are orthogonal in the counter register.
Hence,
\begin{equation}\label{eq:fixed-pair-sector-simulation}
    \mathcal{C}_{\Pi,G}^{(0)}(P_\psi^{\otimes q}) =\sum_{a=0}^q\proj{\Psi_{a,0}(\Pi,G)}.
\end{equation}

Denote the joint probability measure of $(\Pi,G_\Pi)$ by $\mu$.
Averaging the full $q$-round success maps, we define a single map by
\begin{equation}\label{eq:averaged-success-map}
    \mathfrak{S}_q^{(0)}(\rho)
    =\int \mathcal{C}_{\Pi,G}^{(0)}(\rho)\,d\mu(\Pi,G).
\end{equation}
The integral in Eq.~\eqref{eq:averaged-success-map} averages the map $\mathcal C_{\Pi,G}^{(0)}$ over the pair shared by all $q$ queries, namely, $(\Pi,G_\Pi)$. Thus, it is not the composition of a single-query Haar-averaged channel $q$ times.

To verify that Eq.~\eqref{eq:averaged-success-map} defines a valid operation, denote the Choi operator of $\mathcal C_{\Pi,G}^{(0)}$ by $J_{\Pi,G}$.
Since each map is completely positive and trace-nonincreasing, we have
\begin{equation}
    J_{\Pi,G}\succeq0, \qquad \Tr_{\mathrm{out}}J_{\Pi,G}\preceq I_{\mathrm{in}}.
\end{equation}
Therefore,
\begin{equation}
    \overline J \coloneqq \int J_{\Pi,G}\,d\mu(\Pi,G)
    \succeq0, \qquad \Tr_{\mathrm{out}}\overline J\preceq I_{\mathrm{in}}.
\end{equation}
Thus, $\overline J$ is the Choi operator of a completely positive, trace-nonincreasing map $\mathfrak{S}_q^{(0)}$.
Since the measure $\mu$ and each map $\mathcal C_{\Pi,G}^{(0)}$ are independent of $\ket{\psi}$, dependence on the unknown state arises only through the input $P_\psi^{\otimes q}$.
Averaging Eq.~\eqref{eq:fixed-pair-sector-simulation} gives
\begin{equation}
    \mathfrak{S}_q^{(0)}(P_\psi^{\otimes q}) =\E_{\Pi,G}\!\left[ \sum_{a=0}^q\proj{\Psi_{a,0}} \right],
\end{equation}
which is Eq.~\eqref{eq:no-bad-main-simulation}.
\end{proof}

\section{Proof of Lemma \ref{lem:first-B-prob}}\label{app:hidden-subspace}
In this appendix, we prove Lemma~\ref{lem:first-B-prob}.
\pvssm*
To simplify the notation, set
\begin{equation}
    n \coloneqq N-1, \quad k \coloneqq r-1, \quad d_0 \coloneqq n-k=\dim{\cS^{\perp}}.
\end{equation}
On $ \ket{0}^{\perp} $, for a rank-$d_0$ projector $I - \Pi$, choose $G_{\Pi}:(I - \Pi) \ket{0}^{\perp} \to{\cS^{\perp}}$ Haar-randomly and extend it by zero on $\Pi  \ket{0}^{\perp} $.
Then,
\begin{equation}
    G_{\Pi}^\dagger G_{\Pi}= I - \Pi, \qquad G_{\Pi}G_{\Pi}^\dagger=I_{{\cS^{\perp}}}.
\end{equation}
Using the input-first convention for Choi vectors in Eq.~\eqref{eq:choi-vector}, write
\begin{equation}
    J(G_{\Pi})
    :=J_{\Phi_{G_{\Pi}}}
    =\ket{G_{\Pi}}\!\rangle\!\langle\!\bra{G_{\Pi}}.
\end{equation}
For each integer $g\geq0$, define
\begin{equation}\label{eq:Gamma-g-def}
    \Gamma_g(I - \Pi)=\E_{G_{\Pi}}[J(G_{\Pi})^{\otimes g}].
\end{equation}
When $g=0$, the tensor power in Eq.~\eqref{eq:Gamma-g-def} is the scalar $1$, that is, $J(G_{\Pi})^{\otimes g} = 1$.

The following lemma decomposes the final state of an adaptive circuit using $G_\Pi$ into a sum indexed by the number of uses of $G_\Pi$.
Note that $G_\Pi (I - \Pi)Q_0 = G_\Pi$.
\begin{lemma}\label{lem:homogeneous-network}
Consider a purified quantum algorithm that queries $U^{(0)}_{\theta,\Pi,G}$ in Eq.~\eqref{eq:nonbad-bad-instrument} a total of $m$ times. There exist, independently of $G$ and $\Pi$, linear maps $L_{m,g}$ such that the final state can be written as
\begin{equation}\label{eq:homogeneous-expansion}
    \ket{\Phi_m(\theta,G_\Pi, \Pi)}
    =\sum_{g=0}^m e^{ig\theta}
      L_{m,g}\ket{G_\Pi}\!\rangle^{\otimes g}.
\end{equation}
The map $L_{m,g}$ may depend on the known network and $\psi$, but not on $G_\Pi$ or $\Pi$.
\end{lemma}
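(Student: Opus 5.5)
We prove Lemma~\ref{lem:homogeneous-network} by induction on the number $m$ of queries, expanding each query to $U^{(0)}_{\theta,\Pi,G}=Q_\psi+e^{i\theta}G_\Pi$ into its two terms. The key observation is that $G_\Pi(I-\Pi)Q_0=G_\Pi$, since $G_\Pi$ is already extended by zero outside $(I-\Pi)\ket{0}^\perp$ (and on $\ket{0}$). So the $\Pi$-dependence of the good block is carried entirely by $G_\Pi$, and we never need to apply $\Pi$ separately. Writing the purified algorithm as $A^{(m+1)}U^{(0)}A^{(m)}\cdots U^{(0)}A^{(1)}\ket{\mathrm{init}}$ and distributing products, we obtain a sum over words $l\in\{\psi,G\}^m$. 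Each word contributes $e^{ig(l)\theta}$ times a product in which $g(l)$ factors equal $G_\Pi$ (acting on the query register, tensored with identity on the workspace) and the remaining factors are the fixed operator $\ketbra{\psi}{0}$. Grouping by $g$ gives the phase structure in Eq.~\eqref{eq:homogeneous-expansion}, so it remains to show that each word's vector is a fixed linear function of $\ket{G_\Pi}\!\rangle^{\otimes g}$.

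The central step is the Choi-vector (``link product'') identity for a single occurrence of an operator. For any linear $V:\cH\to\cH$ and any vector $\ket{\xi}$ on $\cH\otimes\cM$, we have
\begin{equation*}
    (V\otimes I_\cM)\ket{\xi}=\bigl(\bra{\Omega}_{\mathsf{I}\mathsf{Q}}\otimes I\bigr)\bigl(\ket{V}\!\rangle_{\mathsf{I}\mathsf{O}}\otimes\ket{\xi}_{\mathsf{Q}\cM}\bigr),
\end{equation*}
where $\ket{\Omega}=\sum_i\ket{i}\ket{i}$ is the unnormalized maximally entangled vector pairing the Choi input register $\mathsf I$ with the query register $\mathsf Q$, and the output register $\mathsf O$ is relabeled as the new query register. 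This identity is verified directly from Eq.~\eqref{eq:choi-vector}; note that it involves no complex conjugation, because $\ket{V}\!\rangle$ enters linearly. Thus each application of $G_\Pi$ becomes the following fixed linear map: append one fresh copy of $\ket{G_\Pi}\!\rangle$, then contract with $\bra{\Omega}$.

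With this identity in hand, we proceed by induction on $m$. Suppose that after $m-1$ queries the state has the form $\sum_g e^{ig\theta}L_{m-1,g}\ket{G_\Pi}\!\rangle^{\otimes g}$. Applying $A^{(m)}$ and then $Q_\psi$ composes $L_{m-1,g}$ with fixed maps, so the $G_\Pi$-degree stays $g$. Applying $A^{(m)}$ and then $e^{i\theta}G_\Pi$ instead sends $L_{m-1,g}\ket{G}\!\rangle^{\otimes g}$ to $e^{i\theta}(\bra{\Omega}\otimes I)\bigl(\ket{G}\!\rangle\otimes A^{(m)}L_{m-1,g}\ket{G}\!\rangle^{\otimes g}\bigr)$. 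This equals a fixed linear map applied to $\ket{G}\!\rangle^{\otimes (g+1)}$, after a fixed permutation of tensor factors so that the new copy sits in the $(g+1)$st slot. Collecting terms with the same degree defines $L_{m,g}$, and a final application of $A^{(m+1)}$ composes each $L_{m,g}$ with one more fixed map.

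The only subtle point, and the step we expect to require the most care, is the bookkeeping that makes $L_{m,g}$ manifestly independent of $G_\Pi$ and $\Pi$. All the maps involved are built from the known $A^{(j)}$, from $\ketbra{\psi}{0}$ (which is why $L_{m,g}$ may depend on $\psi$), from the contraction $\bra{\Omega}$, and from fixed register permutations. Since $\Pi$ enters only through $G_\Pi$, the resulting $L_{m,g}$ contain no $\Pi$ or $G_\Pi$. One also needs to check that the ambient space on which $\ket{G_\Pi}\!\rangle$ is defined, namely $\ket{0}^\perp\otimes\cS^\perp$ or simply $\cH\otimes\cH$, is fixed independently of $\Pi$. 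This holds because we extend $G_\Pi$ by zero and regard it as an operator on $\cH$.
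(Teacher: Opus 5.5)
Your proposal is correct and follows essentially the same route as the paper. Both expand each query into the main and good terms, group words by the number $g$ of occurrences of $G_\Pi$, and use that each occurrence enters linearly, so the degree-$g$ part factors through $\ket{G_\Pi}\!\rangle^{\otimes g}$ via a map built only from the $A^{(j)}$, $\ketbra{\psi}{0}$, and fixed embeddings. The paper defines $L_{m,g}$ directly on the matrix-unit product basis $\ket{E_{\alpha_1}}\!\rangle\otimes\cdots\otimes\ket{E_{\alpha_g}}\!\rangle$ by substituting these units for the occurrences of $G_\Pi$; your inductive $\bra{\Omega}$-contraction is a coordinate-free version of the same multilinear substitution.
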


\begin{proof}
For $\mathsf L( \ket{0}^{\perp} ,{\cS^{\perp}})$, choose a matrix-unit basis $\{E_\alpha\}_\alpha$ and write
\begin{equation}
    G_{\Pi}=\sum_\alpha g_\alpha E_\alpha.
\end{equation}
For a quantum algorithm that queries $U_{\theta,\Pi,G}^{(0)}$ a total of $m$ times, the final pure state can be written as
\begin{align}
    \ket{\Phi_m(\theta,G_\Pi, \Pi)} &= A^{(m + 1)}U_{\theta,\Pi,G}^{(0)}A^{(m)}\cdots U_{\theta,\Pi,G}^{(0)}A^{(2)}U_{\theta,\Pi,G}^{(0)}A^{(1)}\ket{\mathrm{init}} \\
    &= A^{(m + 1)}\left(Q_\psi+e^{i\theta}G_\Pi ({I - \Pi} )Q_0 \right)A^{(m)}\cdots \\
    &\quad \times \left(Q_\psi+e^{i\theta}G_\Pi ({I - \Pi} )Q_0 \right)A^{(2)}\left(Q_\psi+e^{i\theta}G_\Pi ({I - \Pi} )Q_0 \right)A^{(1)}\ket{\mathrm{init}}
\end{align}
Expanding this expression, the sum of all words in which $G_{\Pi}$ occurs exactly $g$ times is a homogeneous polynomial in the components $g_\alpha$ of degree $g$, and therefore has the form
\begin{equation}
    \sum_{\alpha_1,\ldots,\alpha_g}
      g_{\alpha_1}\cdots g_{\alpha_g}
      \ket{v_{\alpha_1,\ldots,\alpha_g}}.
\end{equation}
Here, $\ket{v_{\alpha_1,\ldots,\alpha_g}}$ is obtained by considering all subsets of the $m$ queries of size $g$, replacing $G_\Pi$ at the selected positions by $E_{\alpha_1},\ldots,E_{\alpha_g}$ in chronological order, replacing the remaining $m-g$ queries by $Q_\psi$, and summing the resulting vectors.

Define $L_{m,g}$ on the tensor product basis of Choi vectors by
\begin{equation}
    L_{m,g}
    \bigl(
      \ket{E_{\alpha_1}}\!\rangle\otimes\cdots\otimes
      \ket{E_{\alpha_g}}\!\rangle
    \bigr)
    =\ket{v_{\alpha_1,\ldots,\alpha_g}}.
\end{equation}
Here, $L_{m,g}$ is the linear map that takes $g$ matrix-unit basis elements and returns the output obtained by substituting them for the corresponding occurrences of $G_\Pi$ in the circuit; that is,
\begin{align}
    L_{m,g}\ket{G_\Pi}\!\rangle^{\otimes g} &= L_{m,g}\!\left[\left(\sum_{\alpha_1} g_{\alpha_1} \ket{E_{\alpha_1}}\!\rangle\right) \otimes \cdots \otimes \left(\sum_{\alpha_g} g_{\alpha_g} \ket{E_{\alpha_g}}\!\rangle\right)\right] \\
    & = \sum_{\alpha_1,\ldots,\alpha_g} g_{\alpha_1}\cdots g_{\alpha_g} L_{m,g} \bigl( \ket{E_{\alpha_1}}\!\rangle\otimes\cdots\otimes \ket{E_{\alpha_g}}\!\rangle \bigr) \\
    & = \sum_{\alpha_1,\ldots,\alpha_g} g_{\alpha_1}\cdots g_{\alpha_g} \ket{v_{\alpha_1,\ldots,\alpha_g}}.
\end{align}
Summing over $g$ yields Eq.~\eqref{eq:homogeneous-expansion}, where the factor $e^{ig\theta}$ arises because each of the $g$ occurrences of $G_\Pi$ is multiplied by $e^{i\theta}$.
\end{proof}

The central issue in this proof is to handle the adaptive reuse of the same $G_{\Pi}$.
To address this issue, we introduce the following lemma.
This lemma is a specialization of Lemma~2.11 of~\cite{tang2026conjugatequerieshelp} for random purifications.
\begin{restatable}{lemma}{completelypositivehaarmomentmap}\label{lem:haar-moment-cp}
For every integer $g\geq0$, there exists a completely positive, trace-nonincreasing map
\begin{equation}
    \mathfrak T_g: \mathsf L\!\left((\ket{0}^{\perp})^{\otimes g}\right) \longrightarrow \mathsf L\!\left(( \ket{0}^{\perp} \otimes{\cS^{\perp}})^{\otimes g}\right),
\end{equation}
that depends only on $g,n,k$ and the fixed spaces $ \ket{0}^{\perp} ,{\cS^{\perp}}$ and satisfies the following for every rank-$d_0$ projector $I - \Pi$:
\begin{equation}\label{eq:haar-moment-cp}
    \Gamma_g(I - \Pi) =\mathfrak T_g\!\left(({I-\Pi}^\top)^{\otimes g}\right).
\end{equation}
When $g=0$, both sides are the scalar $1$.
\end{restatable}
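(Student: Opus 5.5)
The plan is to compute $\Gamma_g(I-\Pi)$ explicitly with Schur--Weyl duality. We then observe that the result is a fixed completely positive linear map applied to $(I-\Pi)^{\top\otimes g}$.

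\textbf{Step 1: parametrize $G_\Pi$.} Fix a reference $d_0$-dimensional space $\C^{d_0}$ and a fixed isometry $R_0:\C^{d_0}\to\cS^\perp$. Write $I-\Pi=RR^\dagger$, where $R:\C^{d_0}\to\ket{0}^\perp$ is an isometry. A Haar-random $G_\Pi$ can then be written as $G_\Pi=R_0 V R^\dagger$ with $V\in\mathsf U(d_0)$ Haar-distributed. The right-invariance of Haar measure makes this independent of which $R$ is chosen. The Choi vector satisfies
\begin{equation}
\ket{G_\Pi}\!\rangle=(I\otimes R_0 V R^\dagger)\ket{I}\!\rangle=(R^{\ast}\otimes R_0V)\ket{I_{d_0}}\!\rangle .
\end{equation}
Hence
\begin{equation}
\Gamma_g=(R^\ast\otimes R_0)^{\otimes g}\,\E_V\!\left[(I\otimes V)^{\otimes g}\proj{I_{d_0}}^{\otimes g}(I\otimes V^\dagger)^{\otimes g}\right](R^{\top}\otimes R_0^\dagger)^{\otimes g}.
\end{equation}

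\textbf{Step 2: evaluate the Haar twirl.} Regroup the $g$ copies of $\ket{I}\!\rangle$ into a single maximally entangled vector $\ket{I_{d_0^g}}\!\rangle$ on $(\C^{d_0})^{\otimes g}\otimes(\C^{d_0})^{\otimes g}$. Apply the Schur transform to both sides. In that basis $\ket{I}\!\rangle$ decomposes as $\sum_\lambda \ket{I_{\mathcal U_\lambda}}\!\rangle\otimes\ket{I_{\mathcal S_\lambda}}\!\rangle$ (with the appropriate conjugate basis on the first factor). Schur orthogonality, Eq.~\eqref{eq:schur-orthogonality}, then gives
\begin{equation}
\E_V[\cdots]=\bigoplus_{\lambda\vdash_{d_0} g}\frac{1}{\dim\mathcal U^{(d_0)}_\lambda}\, I_{\mathcal U_\lambda}\otimes I_{\mathcal U_\lambda}\otimes \proj{I_{\mathcal S_\lambda}}\!\rangle ,
\end{equation}
with the cross terms between distinct $\lambda$ vanishing.

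\textbf{Step 3: push the isometries through.} Use the naturality relation~\eqref{eq:schur-isometry-naturality} for $R^{\ast\otimes g}$ and $R_0^{\otimes g}$. Conjugating $I_{\mathcal U_\lambda}$ by $R^\ast_\lambda$ gives $(R^\ast R^\top)_\lambda=((I-\Pi)^\top)_\lambda$, which by Eq.~\eqref{eq:schur-projector-decomposition} is the $\lambda$-block of $((I-\Pi)^\top)^{\otimes g}$. Conjugating by $R_{0,\lambda}$ gives the $\lambda$-block of $P_{\cS^\perp}^{\otimes g}$, which is fixed. This yields
\begin{equation}
\Gamma_g(I-\Pi)=\sum_{\lambda}\frac{1}{\dim\mathcal U_\lambda^{(d_0)}}\,\bigl[\mathcal P_\lambda(((I-\Pi)^\top)^{\otimes g})\bigr]\otimes(P_{\cS^\perp})_\lambda\text{-part}\otimes\proj{I_{\mathcal S_\lambda}}\!\rangle ,
\end{equation}
after reordering registers, where $\mathcal P_\lambda$ is the compression onto the $\lambda$-isotypic Weyl factor of $(\ket{0}^\perp)^{\otimes g}$. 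Now define $\mathfrak T_g(X)$ by the same formula with $X$ in place of $((I-\Pi)^\top)^{\otimes g}$. Concretely, $\mathfrak T_g$ applies the Schur transform, takes the $\lambda$-block, and forms the tensor product of (i) the Weyl part of $X$ (partially traced over its Specht factor and divided by $\dim\mathcal S_\lambda$), (ii) the fixed positive operator $(P_{\cS^\perp})_\lambda/\dim\mathcal U^{(d_0)}_\lambda$, and (iii) $\proj{I_{\mathcal S_\lambda}}\!\rangle$, and finally undoes the Schur transforms. Every ingredient depends only on $g,n,k$ and the fixed spaces. Since $((I-\Pi)^\top)^{\otimes g}$ acts as the identity on the Specht factors, replacing it by its Specht-averaged form loses nothing, which gives Eq.~\eqref{eq:haar-moment-cp}. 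Complete positivity holds because $\mathfrak T_g$ is a composition of a unitary conjugation, a pinching, a partial trace, and tensoring with positive operators. The case $g=0$ is trivial.

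\textbf{Main obstacle: trace-nonincreasingness.} The trace of the $\lambda$-output equals $\Tr(X_\lambda)\cdot\dim\mathcal U^{(d_0)}_\lambda\cdot\dim\mathcal S_\lambda/(\dim\mathcal U^{(d_0)}_\lambda\dim\mathcal S_\lambda)$. One should check that the normalizations, in particular the $\dim\mathcal S_\lambda$ from $\Tr\proj{I_{\mathcal S_\lambda}}\!\rangle$, cancel so that $\Tr\mathfrak T_g(X)\le\Tr X$ for $X\succeq0$. Equivalently, check $\Tr_{\rm out}J_{\mathfrak T_g}\preceq I$. If the naive normalization fails, the fallback is to weight block $\lambda$ by $\dim\mathcal U^{(d_0)}_\lambda/\dim\mathcal U^{(n)}_\lambda\le1$ and absorb the compensating factor into the choice of which Weyl subspace is retained. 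The sanity check is $\Tr\Gamma_g=d_0^{\,g}=\Tr((I-\Pi)^\top)^{\otimes g}$, which shows that equality of trace holds on the relevant inputs.
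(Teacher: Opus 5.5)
Your proposal is correct and is essentially the paper's proof. Like the paper, you write $G_\Pi$ as a Haar unitary composed with $R^\dagger$, evaluate the twirl by Schur--Weyl duality and Schur orthogonality, and define $\mathfrak T_g$ blockwise: $\Tr_{\mathcal S_\lambda}[\widetilde X_\lambda]/(\dim\mathcal U_\lambda^{(d_0)}\dim\mathcal S_\lambda)$ tensored with the identity on $\mathcal U_\lambda^{(d_0)}$ and the projector onto the unnormalized maximally entangled Specht vector, conjugated back by the Schur transforms.

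The step you flag as the main obstacle is already settled by your own computation, so no fallback reweighting is needed. The $\lambda$-block output has trace exactly $\Tr\widetilde X_\lambda$, and summing over $\lambda\vdash_{d_0} g$, a subset of mutually orthogonal blocks, gives at most $\Tr X$ for $X\succeq0$; this is how the paper concludes.
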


Let $Q = I - \Pi$, $d_0 = n - k$, and define
\begin{equation}
    \sigma_\Pi = \frac{Q^{\top}}{d_0}.
\end{equation}
Then, by applying Lemma~2.11 of~\cite{tang2026conjugatequerieshelp}, we have
\begin{equation}
    \mathfrak{T}_g\!\left(\sigma_\Pi^{\otimes g}\right)
    =
    \frac{1}{d_0^g}\Gamma_g(Q).
\end{equation}
For completeness of the proof, Appendix~\ref{app:haar-moment-proof} gives an explicit trace-nonincreasing construction using Schur--Weyl duality.

In the following lemma, let $\cH$ be the next query register to be input to the oracle, $\cM$ be all other work registers, and $\cP$ be the space on which the positive operator $X$ obtained so far acts.
The linear map
\[
L:\cP\longrightarrow\cH\otimes\cM
\]
represents the preceding circuit that generates the state of the next query register and the work registers from the state immediately after a query.
Also, on $ \ket{0}^{\perp} $, extend each projector $\Pi$ by zero on $\operatorname{span}\{\ket{0}\}$, and denote the resulting operator on $\cH$ by $\Pi_0$.

\begin{lemma}\label{lem:positive-pairing}
Let $\cP$ be a finite-dimensional Hilbert space, and consider a linear map
\begin{equation}
    L:\cP\longrightarrow\cH \otimes\cM.
\end{equation}
There exists a positive operator
\begin{equation}
    Z_L\succeq0
    \quad\text{acting on}\quad
    \cH \otimes\cP
\end{equation}
such that, for every operator on $\cH$ satisfying $E\succeq0$ and every operator on $\cP$ satisfying $X\succeq0$, we have
\begin{equation}\label{eq:positive-pairing}
    \Tr\!\left[(E\otimes I_\cM)LXL^\dagger\right]
    =\Tr\!\left[Z_L(E^\top\otimes X)\right].
\end{equation}
\end{lemma}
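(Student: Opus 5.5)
We would prove Lemma~\ref{lem:positive-pairing} by an explicit Choi-type construction: expand $LXL^\dagger$ in a product basis and regroup the indices so that $E$ appears transposed, then check positivity by exhibiting $Z_L$ as a partial trace of a rank-one-type positive operator. Fix orthonormal bases $\{\ket{i}\}$ of $\cH$ and $\{\ket{\mu}\}$ of $\cM$, and write $L=\sum_{i,\mu}\ket{i}\otimes\ket{\mu}\,\bra{\ell_{i\mu}}$ with $\bra{\ell_{i\mu}}=(\bra{i}\otimes\bra{\mu})L$, vectors (bras) on $\cP$. The natural candidate is
\begin{equation}
    Z_L\coloneqq\sum_{\mu}\ket{w_\mu}\!\bra{w_\mu},
    \qquad
    \ket{w_\mu}\coloneqq\sum_i \ket{i}_{\cH}\otimes \bigl(\ket{\ell_{i\mu}}\bigr)^{\ast}_{\cP},
\end{equation}
where the complex conjugate is taken in the fixed basis of $\cP$; equivalently $Z_L=\Tr_\cM$ of the Choi-like operator of the map $\cP\to\cH\otimes\cM$ obtained from $L$ after partial transposition on $\cP$. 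Positivity $Z_L\succeq0$ is then immediate since it is a sum of rank-one projectors.

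The key step is the index computation. We compute
\begin{equation}
    \Tr\!\left[(E\otimes I_\cM)LXL^\dagger\right]
    =\sum_{\mu}\sum_{i,j} E_{ji}\,\bra{\ell_{i\mu}}X\ket{\ell_{j\mu}},
\end{equation}
and compare with $\Tr[Z_L(E^\top\otimes X)]=\sum_\mu \bra{w_\mu}(E^\top\otimes X)\ket{w_\mu}=\sum_\mu\sum_{i,j}(E^\top)_{ij}\,\bra{\ell_{i\mu}^\ast}X\ket{\ell_{j\mu}^\ast}$. These do not match verbatim because of the conjugation on the $\cP$ factor, so the actual choice must be adjusted: either drop the conjugate on $\ket{\ell_{i\mu}}$ and instead conjugate the $\cH$ side, or place the transpose correctly. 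We will settle the convention by demanding that the $(i,j)$ coefficient of $E^\top$, namely $E_{ji}$, multiply exactly $\bra{\ell_{i\mu}}X\ket{\ell_{j\mu}}$; this is achieved by taking $\ket{w_\mu}=\sum_i\ket{i}\otimes\ket{\ell_{i\mu}}$ without conjugation, for which $\bra{w_\mu}(E^\top\otimes X)\ket{w_\mu}=\sum_{i,j}\bra{i}E^\top\ket{j}\bra{\ell_{i\mu}}X\ket{\ell_{j\mu}}=\sum_{i,j}E_{ji}\bra{\ell_{i\mu}}X\ket{\ell_{j\mu}}$, exactly as required. This identity holds for all operators $E,X$ by linearity, in particular for positive ones.

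The only real obstacle is this bookkeeping of transposes and conjugations (consistent with the Choi pairing formula Eq.~\eqref{eq:choi-pairing}, which one could alternatively invoke directly by viewing $E\mapsto\Tr_{\cH}[(E\otimes I)LXL^\dagger]$ as a pairing with the CP map $X\mapsto\Tr_\cM LXL^\dagger$); once the basis convention is fixed as above, both the identity and the positivity of $Z_L$ are routine, and $Z_L$ visibly depends only on $L$.
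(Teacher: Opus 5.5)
Your final construction is correct. With $\ket{w_\mu}=\sum_i\ket{i}\otimes L^\dagger\ket{i,\mu}$ (no conjugation), $Z_L=\sum_\mu\ket{w_\mu}\!\bra{w_\mu}$ is exactly the Choi operator $J(\Theta_L)$ of the completely positive map $\Theta_L(E)=L^\dagger(E\otimes I_\cM)L$ that the paper uses, and your index computation is the Choi pairing formula Eq.~\eqref{eq:choi-pairing} checked by hand, so this is essentially the paper's proof. In a write-up, drop the abandoned conjugated candidate and present only the corrected $\ket{w_\mu}$.
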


\begin{proof}
Define the completely positive map
\begin{equation}
    \Theta_L:
    \mathsf L(\cH)
    \longrightarrow
    \mathsf L(\cP),
    \qquad
    \Theta_L(E)=L^\dagger(E\otimes I_\cM)L.
\end{equation}
Since $\Theta_L$ is completely positive, its Choi operator satisfies $J(\Theta_L) \succeq 0$.
By Eq.~\eqref{eq:choi-pairing}, we obtain
\begin{equation}
    \Tr[\Theta_L(E)X]
    =\Tr[J(\Theta_L)(E^\top\otimes X)],
\end{equation}
Setting $Z_L \coloneqq J(\Theta_L)$ and using cyclicity of the trace yields Eq.~\eqref{eq:positive-pairing}.
\end{proof}

The following lemma bounds a tensor moment involving a Haar-random projector.
Its proof combines the first-moment identity $\E[\Pi] = (k/n) I$ with an operator union bound for projectors acting on distinct tensor factors.
\begin{lemma}\label{lem:haar-projector}
Let $\Pi\sim\Gr(k,n)$.
Let the additional factor for the next query be $\cH=\operatorname{span}\{\ket{0}\}\oplus\C^n$, and extend $\Pi$ by zero on $\ket{0}$, denoting the extension by $\Pi_0$.
Denote the identity operator on $\cH$ by $I_0$. If $gk<n$, then
\begin{equation}\label{eq:haar-projector-ineq}
    \int \Pi_0^\top\otimes({I - \Pi}^\top)^{\otimes g}\,d\Pi
    \preceq
    \frac{k}{n-gk}
    I_0\otimes
    \int({I - \Pi}^\top)^{\otimes g}\,d\Pi.
\end{equation}
When $g=0$, the tensor power is the scalar identity.
\end{lemma}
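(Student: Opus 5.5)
The plan is to reduce \eqref{eq:haar-projector-ineq} to a comparison between two Haar moments of the complementary projector $Q\coloneqq I-\Pi$. First I drop the $\ket{0}$ summand. Since $\Pi_0$ vanishes on $\operatorname{span}\{\ket{0}\}$ and $I_0=P_0+I_{\C^n}$, it suffices to prove the inequality with $I_0$ replaced by $I_{\C^n}$; the extra term $P_0^\top\otimes\int(Q^\top)^{\otimes g}\,d\Pi$ is positive and only enlarges the right-hand side. Transposition is a positive linear map applied factorwise, so it preserves $\preceq$ between the relevant integrals, and I will argue without the transposes throughout.

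Next I write $M_g\coloneqq\int \Pi\otimes Q^{\otimes g}\,d\Pi$ and $R_{g+1}\coloneqq\int Q^{\otimes (g+1)}\,d\Pi$. Using $I=\Pi+Q$ on the first factor gives
\[
I\otimes\int Q^{\otimes g}\,d\Pi=M_g+R_{g+1}.
\]
The target inequality $M_g\preceq\frac{k}{n-gk}(M_g+R_{g+1})$ is therefore equivalent to
\[
\bigl(n-(g+1)k\bigr)M_g\preceq k\,R_{g+1}.
\]
If $(g+1)k\geq n$, the left-hand side is $\preceq 0$ while $R_{g+1}\succeq0$, so the inequality is immediate. The remaining case is $(g+1)k<n$.

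For that case I would use the symmetry of $R_{g+1}$ under permutations of its $g+1$ factors, together with the first-moment identity $\E[\Pi]=(k/n)I$ applied conditionally. The concrete route is to generate $\Pi$ as the span of $k$ Haar-random orthonormal vectors $v_1,\dots,v_k$. Then $\Pi=\sum_i\proj{v_i}$, and by exchangeability $M_g=k\int\proj{v_1}\otimes Q^{\otimes g}$. Conditioned on the other $k-1$ vectors, $v_1$ is uniform in an $(n-k+1)$-dimensional complement $Q'\supseteq Q$, with $Q=Q'-\proj{v_1}$. This lets me write $\E_{v_1}[\proj{v_1}]=Q'/(n-k+1)$ and expand $Q^{\otimes g}=(Q'-\proj{v_1})^{\otimes g}$ factorwise. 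I would then apply the operator union bound for commuting projectors on distinct tensor factors, $I-\bigotimes_j Q_j\preceq\sum_j(I-Q_j)$, to control how often $v_1$ can lie in several factors simultaneously. Each such occurrence costs one further factor of roughly $k/n$, and summing the resulting geometric-type series gives the denominator $n-gk$.

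I expect the main obstacle to be exactly this step: showing $\bigl(n-(g+1)k\bigr)M_g\preceq kR_{g+1}$ as an operator inequality rather than only in trace, because $\Pi$ is shared across all factors. If the vector-by-vector route becomes unwieldy, my fallback is to twirl. Both $M_g$ and $R_{g+1}$ commute with $U^{\otimes(g+1)}$, so by Schur--Weyl duality (Appendix~\ref{subsec:schur-weyl}) they lie in the span of permutation operators. The inequality can then be checked on each isotypic block, where the union-bound estimate becomes a scalar comparison of eigenvalues.
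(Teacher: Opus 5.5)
\textbf{The core inequality is never proved.} Your preliminary reductions are correct: you drop the $\ket{0}$ summand, remove the transposes, and use $I\otimes\int Q^{\otimes g}\,d\Pi=M_g+R_{g+1}$ to rewrite the claim as $\bigl(n-(g+1)k\bigr)M_g\preceq kR_{g+1}$. The trivial case $(g+1)k\geq n$ is also handled correctly. But for $(g+1)k<n$ you only sketch routes, and neither closes.

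\begin{itemize}
\item Your vector-by-vector route expands $(Q'-\proj{v_1})^{\otimes g}$ into terms of alternating sign involving the higher moments $\E\bigl[\proj{v_1}^{\otimes(s+1)}\bigr]$. The statement that ``each occurrence costs a factor of roughly $k/n$'' and that a ``geometric-type series'' yields $n-gk$ is not an argument for an operator inequality about a signed sum.
\item The Schur--Weyl fallback is not carried out, and it is mis-described. $M_g$ singles out the first tensor factor, so it is not permutation invariant. On each isotypic block it is therefore a nontrivial operator on the Specht module, not a scalar. You would have to compute its top eigenvalue on every block, which you have not done.
\end{itemize}

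\textbf{The missing observation: crude bounds already suffice.} The obstacle you anticipate, that $\Pi$ is shared across all factors, is not real. You already name both tools the paper uses; you just deploy them in the wrong place.

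\begin{itemize}
\item Since $\Pi\succeq0$ and $0\preceq Q^{\otimes g}\preceq I$, you have $M_g\preceq\int\Pi\otimes I\,d\Pi=\frac{k}{n}I$.
\item The union bound $I-Q^{\otimes(g+1)}\preceq\sum_{j}\Pi^{(j)}$, with $\Pi^{(j)}$ denoting $\Pi$ on the $j$th factor, holds as an operator inequality for each fixed $\Pi$, because the $\Pi^{(j)}$ commute. Integrating preserves $\preceq$, so $R_{g+1}\succeq\bigl(1-\frac{(g+1)k}{n}\bigr)I$.
\item Combining the two gives $\bigl(n-(g+1)k\bigr)M_g\preceq\frac{k(n-(g+1)k)}{n}I\preceq kR_{g+1}$.
\end{itemize}

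The paper does the same thing without your reformulation. It bounds the left-hand side by $\frac{k}{n}I_0\otimes I$ and uses $R_g\succeq\frac{n-gk}{n}I$ on the right. The ratio of these two crude constants is exactly $\frac{k}{n-gk}$, so no finer moment analysis is ever needed.
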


\begin{proof}
Set
\begin{equation}
    R_g
    =\int({I - \Pi}^\top)^{\otimes g}\,d\Pi.
\end{equation}
For $j\in[g]$, on the $j$th tensor factor, denote the action of $\Pi^\top$ by $(\Pi^\top)^{(j)}$. These projectors commute because they act on different factors. Thus, the scalar inequality
\begin{equation}
    1-\prod_{j=1}^g(1-x_j)\leq\sum_{j=1}^g x_j,
    \qquad
    x_j\in\{0,1\},
\end{equation}
extends to the operator inequality
\begin{equation}
    I-({I - \Pi}^\top)^{\otimes g}
    \preceq
    \sum_{j=1}^g(\Pi^\top)^{(j)}.
\end{equation}
By Haar invariance,
\begin{equation}
    \int\Pi^\top\,d\Pi
    =\left(\int\Pi\,d\Pi\right)^\top
    =\frac{k}{n}I.
\end{equation}
Therefore, averaging the above operator inequality gives
\begin{equation}\label{eq:Rg-lower}
    R_g
    \succeq
    \left(1-\frac{gk}{n}\right)I
    =\frac{n-gk}{n}I.
\end{equation}
Furthermore,
\begin{align}
    \int \Pi_0^\top\otimes({I - \Pi}^\top)^{\otimes g}\,d\Pi
    &\preceq
      \int \Pi_0^\top\otimes I\,d\Pi \\
    &\preceq\frac{k}{n}I_0\otimes I \\
    &\preceq
      \frac{k}{n-gk}I_0\otimes R_g,
\end{align}
where the last inequality follows from Eq.~\eqref{eq:Rg-lower}. This proves Eq.~\eqref{eq:haar-projector-ineq}.
\end{proof}

Using these lemmas, we now prove Lemma~\ref{lem:first-B-prob}.
\begin{proof}[Proof of Lemma~\ref{lem:first-B-prob}]
For fixed $\Pi$, the probability weight at the $m+1$-th query of obtaining $B$ is obtained by testing the state immediately before the query with $\Pi_0$ on the query register.

For fixed $\Pi$, $\theta$, and $G_\Pi$, consider the first $m$ queries, at each of which $U^{(0)}_{\theta,\Pi,G_\Pi}$ is applied, and denote the resulting subnormalized pure state by $\ket{\Phi_m(\theta,G_\Pi)}$.
By Lemma~\ref{lem:homogeneous-network}, we have
\begin{equation}\label{eq: branch-expansion-without-B-part}
    \ket{\Phi_m(\theta,G_\Pi, \Pi)} =\sum_{g=0}^m e^{ig\theta} L_{m,g}\ket{G_\Pi}\!\rangle^{\otimes g}.
\end{equation}
Here, $L_{m,g}$ is a fixed linear map independent of $\Pi$ and $G_\Pi$, with output space $\cH\otimes\cM$.
Averaging over the common phase $\theta$ eliminates cross terms between different values of $g$.
\begin{align}
    \E_\theta[\proj{\Phi_m(\theta,G_\Pi,\Pi)}] &= \E_\theta\left[\sum_{g,h=0}^m e^{i(g-h)\theta} L_{m,g}\ket{G_\Pi}\!\rangle^{\otimes g} \langle\!\bra{G_\Pi}^{\otimes h} L_{m,h}^\dagger\right] \\
    &= \sum_{g=0}^m
      L_{m,g}\ket{G_\Pi}\!\rangle^{\otimes g}\langle\!\bra{G_\Pi}^{\otimes g}L_{m,g}^\dagger \\
    &= \sum_{g=0}^m
      L_{m,g}J(G_\Pi)^{\otimes g}L_{m,g}^\dagger.
\end{align}
Averaging further over $G_\Pi$ expresses the $g$th component in terms of $\Gamma_g(I-\Pi)$:
\begin{align}
\E_{G_\Pi}\left[\sum_{g=0}^m
      L_{m,g}J(G_\Pi)^{\otimes g}L_{m,g}^\dagger\right] &= \sum_{g=0}^m L_{m,g}\Gamma_g({I - \Pi})L_{m,g}^\dagger \\
    &= \sum_{g=0}^m
      L_{m,g} \mathfrak T_g(({I - \Pi}^\top)^{\otimes g})L_{m,g}^\dagger,
\end{align}
Here, the first equality follows from Eq.~\eqref{eq:Gamma-g-def}, and the second from Lemma~\ref{lem:haar-moment-cp}.
Apply Lemma~\ref{lem:positive-pairing} to $L_{m,g}$, and denote the resulting positive operator by $Z_{L_{m,g}}$.
Since $Z_{L_{m,g}}$ depends only on $L_{m,g}$, it is independent of $\Pi$.
Thus, averaging also over $\Pi$, the probability of the first $B$, denoted by $p_{m+1}^{\mathrm{first}}$, can be written as
\begin{align}
    p_{m+1}^{\mathrm{first}} & = \int\Tr\!\left[
        (\Pi_0\otimes I_\cM)
         \sum_{g=0}^m
      L_{m,g} \mathfrak T_g(({I - \Pi}^\top)^{\otimes g})L_{m,g}^\dagger
      \right] d\Pi\\
    & = \sum_{g=0}^m \Tr\!\left[
        \int (\Pi_0\otimes I_\cM)
      L_{m,g} \mathfrak T_g(({I - \Pi}^\top)^{\otimes g})L_{m,g}^\dagger \,d\Pi
      \right] \\
    &=\sum_{g=0}^m
      \Tr\!\left[
        Z_{L_{m,g}}
        \int \Pi_0^\top\otimes\mathfrak T_g(({I - \Pi}^\top)^{\otimes g})\,d\Pi
      \right] \\
    &=\sum_{g=0}^m
      \Tr\!\left[
        (\id\otimes\mathfrak T_g^\dagger)(Z_{L_{m,g}})
        \int \Pi_0^\top\otimes({I - \Pi}^\top)^{\otimes g}\,d\Pi
      \right],
      \label{eq:first-B-positive-pairing}
\end{align}
In the last equality, we used the definition of the Hilbert--Schmidt adjoint.
Since $\mathfrak T_g$ is completely positive, its adjoint $\mathfrak T_g^\dagger$ is also completely positive, and
\begin{equation}
    (\id\otimes\mathfrak T_g^\dagger)(Z_{L_{m,g}})\succeq0.
\end{equation}
Therefore, we can apply Lemma~\ref{lem:haar-projector} to each term in Eq.~\eqref{eq:first-B-positive-pairing}.
Since $g\leq m$ implies $k/(n-gk)\leq k/(n-mk)$, we have
\begin{align}
    p_{m+1}^{\mathrm{first}}
    &\leq
      \frac{k}{n-mk}
      \sum_{g=0}^m
      \Tr\!\left[
        Z_{L_{m,g}}
        \int I_0\otimes\mathfrak T_g(({I - \Pi}^\top)^{\otimes g})\,d\Pi
      \right] \\
    &=\frac{k}{n-mk}s_m.
\end{align}
In the last equality, we replaced the next-query test $\Pi_0$ by $I_0$ in Lemma~\ref{lem:positive-pairing}.
This completes the proof of Lemma~\ref{lem:first-B-prob}.
\end{proof}

\section{Proof of the Haar Moment Lemma}\label{app:haar-moment-proof}

Using the Schur--Weyl decomposition and Schur orthogonality introduced in Appendix~\ref{subsec:schur-weyl}, we prove Lemma~\ref{lem:haar-moment-cp}.
\completelypositivehaarmomentmap*

\begin{proof}[Proof of Lemma~\ref{lem:haar-moment-cp}]
For $g=0$, define $\mathfrak T_0(1)=1$.
To simplify the notation, set $Q = I - \Pi$.
In what follows, assume $g\geq1$.
Set
\begin{equation}
    d_0=n-k=\dim{\cS^{\perp}}.
\end{equation}

Choose an isometry
\begin{equation}
    R:\C^{d_0}\longrightarrow  \ket{0}^{\perp} , \qquad RR^\dagger=Q,
\end{equation}
and fix a unitary identification ${\cS^{\perp}}\simeq\C^{d_0}$.
Extending a Haar-random isomorphism $G:Q \ket{0}^{\perp} \to{\cS^{\perp}}$
by zero on $(I-Q) \ket{0}^{\perp} $, we can write
\begin{equation}\label{eq:G-VR}
    G_{\Pi}=VR^\dagger, \qquad V\sim\operatorname{Haar}(\mathsf U(d_0)).
\end{equation}

As in Appendix~\ref{subsec:schur-weyl}, denote the isometry from $\mathcal{U}_\lambda^{(d_0)}$ to $\mathcal{U}_\lambda^{(n)}$ by $R_\lambda$, and set
\begin{equation}
    Q_\lambda=R_\lambda R_\lambda^\dagger
\end{equation}

By Eq.~\eqref{eq:schur-unitary-action} and Eq.~\eqref{eq:schur-isometry-naturality}, we have
\begin{equation}\label{eq:G-schur-block}
    \begin{aligned}
        U_{\mathrm{Sch}}^{(g,d_0)} G_{\Pi}^{\otimes g} U_{\mathrm{Sch}}^{(g,n)\dagger} &=  U_{\mathrm{Sch}}^{(g,d_0)} V^{\otimes g} \left(R^\dagger\right)^{\otimes g} U_{\mathrm{Sch}}^{(g,n)\dagger} \\
        &= \left( U_{\mathrm{Sch}}^{(g,d_0)} V^{\otimes g} U_{\mathrm{Sch}}^{(g,d_0)\dagger} \right) \left(U_{\mathrm{Sch}}^{(g,d_0)}\left(R^\dagger\right)^{\otimes g} U_{\mathrm{Sch}}^{(g,n)\dagger} \right)\\
        &= \bigoplus_{\lambda\vdash_{d_0}g}
      V_\lambda R_\lambda^\dagger
      \otimes I_{\mathcal S_\lambda}.
    \end{aligned}
\end{equation}
Furthermore,
taking the transpose of Eq.~\eqref{eq:schur-projector-decomposition} gives
\begin{equation}\label{eq:QT-schur-block}
    U_{\mathrm{Sch}}^{(g,n)\ast} (Q^\top)^{\otimes g} U_{\mathrm{Sch}}^{(g,n)\top} = \left( \bigoplus_{\lambda\vdash_{d_0}g} Q_\lambda^\top\otimes I_{\mathcal S_\lambda} \right) \oplus0.
\end{equation}

For each $\lambda$, fix an orthonormal basis of $\mathcal S_\lambda$ and
define
\begin{equation}
    \ket{I_{\cS_\lambda}}\!\rangle = \sum_{j=1}^{\dim\mathcal S_\lambda} \ket{j}\otimes\ket{j}.
\end{equation}
Vectorizing Eq.~\eqref{eq:G-schur-block} shows that there exists a fixed unitary, independent of $Q$ and $G$,
denoted by $W_g$, such that
\begin{equation}\label{eq:vectorized-schur-G}
    W_g\ket{G_{\Pi}}\!\rangle^{\otimes g}
    =
    \left(
      \bigoplus_{\lambda\vdash_{d_0}g}
        \ket{V_\lambda R_\lambda^\dagger}\!\rangle
        \otimes\ket{I_{\cS_\lambda}}\!\rangle
    \right)
    \oplus0.
\end{equation}
Here, $W_g$ consists only of the input and output Schur transforms and
a fixed permutation of tensor factors.

By Schur orthogonality in Eq.~\eqref{eq:schur-orthogonality},
cross terms between different Young diagrams vanish, yielding
\begin{equation}\label{eq:Gamma-schur-form}
    W_g\Gamma_g(Q)W_g^\dagger
    =
    \left[
      \bigoplus_{\lambda\vdash_{d_0}g}
        \frac{1}{\dim\mathcal U_\lambda^{(d_0)}}
        Q_\lambda^\top
        \otimes I_{\mathcal U_\lambda^{(d_0)}}
        \otimes\ket{I_{\cS_\lambda}}\!\rangle \! \langle \!\bra{I_{\cS_\lambda}}
    \right]
    \oplus0.
\end{equation}

We then construct a completely positive map.
For $X\in\mathsf L\!\left((\ket{0}^{\perp})^{\otimes g}\right)$, set
\begin{equation}
    \widetilde X
    =
    U_{\mathrm{Sch}}^{(g,n)\ast}
    X
    U_{\mathrm{Sch}}^{(g,n)\top},
\end{equation}
and let $\widetilde X_\lambda$ be its compression to
$\mathcal U_\lambda^{(n)}\otimes\mathcal S_\lambda$.
Define
\begin{equation}\label{eq:Tg-Schur}
    \widehat{\mathfrak T}_g(X)
    =
    \left[
      \bigoplus_{\lambda\vdash_{d_0}g}
        \frac{
          \Tr_{\mathcal S_\lambda}[\widetilde X_\lambda]
        }{
          \dim\mathcal U_\lambda^{(d_0)}
          \dim\mathcal S_\lambda
        }
        \otimes I_{\mathcal U_\lambda^{(d_0)}}
        \otimes\ket{I_{\cS_\lambda}}\!\rangle \! \langle \!\bra{I_{\cS_\lambda}}
    \right]
    \oplus0,
\end{equation}
Return to the original tensor product coordinates by setting
\begin{equation}\label{eq:Tg-definition}
    \mathfrak T_g(X)
    =
    W_g^\dagger
    \widehat{\mathfrak T}_g(X)
    W_g.
\end{equation}
Compression to orthogonal blocks, partial trace, tensoring with a fixed positive operator,
direct sums, and conjugation by a fixed unitary
are all completely positive operations. Therefore, $\mathfrak T_g$ is
completely positive and independent of $Q$.

Finally, let $X=(Q^\top)^{\otimes g}$.
By Eq.~\eqref{eq:QT-schur-block}, we have
\begin{equation}
    \widetilde X_\lambda
    =
    Q_\lambda^\top\otimes I_{\mathcal S_\lambda},
\end{equation}
and hence
\begin{equation}
    \Tr_{\mathcal S_\lambda}[\widetilde X_\lambda]
    =
    \dim(\mathcal S_\lambda)\,Q_\lambda^\top.
\end{equation}
Substituting this into Eq.~\eqref{eq:Tg-Schur} and
comparing with Eq.~\eqref{eq:Gamma-schur-form} gives
\begin{equation}
    \Gamma_g(Q)
    =
    \mathfrak T_g\!\left((Q^\top)^{\otimes g}\right).
\end{equation}
For a positive operator $X\in\mathsf L\!\left((\ket{0}^{\perp})^{\otimes g}\right)$, taking the trace of $\mathfrak{T}_g$ applied to it yields
\begin{align}
    \Tr[\mathfrak{T}_g(X)] &= \Tr[\widehat{\mathfrak T}_g(X)] \\
    &= \sum_{\lambda\vdash_{d_0}g} \Tr \left[
        \frac{
          \Tr_{\mathcal S_\lambda}[\widetilde X_\lambda]
        }{
          \dim\mathcal U_\lambda^{(d_0)}
          \dim\mathcal S_\lambda
        }
        \otimes I_{\mathcal U_\lambda^{(d_0)}}
        \otimes\ket{I_{\cS_\lambda}}\!\rangle \! \langle \!\bra{I_{\cS_\lambda}}
    \right] \\
    &= \sum_{\lambda\vdash_{d_0}g}
    \frac{
          \Tr[\widetilde X_\lambda]
        }{
          \dim\mathcal U_\lambda^{(d_0)}
          \dim\mathcal S_\lambda
        }
        \cdot \dim \mathcal U_\lambda^{(d_0)}
        \cdot \dim \mathcal S_\lambda \\
    &= \sum_{\lambda\vdash_{d_0}g}
          \Tr[\widetilde X_\lambda]\\
    & \leq \sum_{\lambda\vdash_{n}g}
          \Tr[\widetilde X_\lambda]  = \Tr[X],
\end{align}
This proves the claim.
\end{proof}

\end{document}